\journal{Artificial Intelligence}
\begin{document}

\begin{frontmatter}

    \title{Cooperative Concurrent Games}
    \tnotetext[mytitlenote]{This paper is an extended and revised version of~\cite{Gutierrez2019}.}

    \author[monash]{Julian Gutierrez}
    \ead{julian.gutierrez@monash.edu}

    \author[oxford]{Szymon Kowara}

    \author[barilan]{Sarit Kraus}
    \ead{sarit@cs.biu.ac.il}
    
    \author[oxford]{Thomas Steeples}
    \ead{thomas.steeples@cs.ox.ac.uk}

    \author[oxford]{Michael~Wooldridge}
    \ead{michael.wooldridge@cs.ox.ac.uk}

    \cortext[corresponding]{Corresponding author: Julian Gutierrez.}
    \address[monash]{Faculty of Information Technology, Monash University, Australia}
    \address[barilan]{Department of Computer Science, Bar-Ilan University, Israel}
    \address[oxford]{Department of Computer Science, University of Oxford, United Kingdom}

    \begin{abstract}
  In \emph{rational verification}, the aim is to verify which temporal
  logic properties will obtain in a multi-agent system, under the
  assumption that agents (``players'') in the system 
  choose strategies for acting that form a game theoretic
  equilibrium. Preferences are typically defined by assuming
  that agents act in pursuit of individual goals, specified as
  temporal logic formulae. To date, rational verification has been
  studied using \emph{non-cooperative} solution concepts---Nash
  equilibrium and refinements thereof. Such non-cooperative solution
  concepts assume that there is no possibility of agents forming
  binding agreements to cooperate, and as such they are restricted in
  their applicability.  In this article, we extend rational verification
  to \emph{cooperative} solution concepts, as studied in the field of
  cooperative game theory.  We focus on the \emph{core}, as this is
  the most fundamental (and most widely studied) cooperative solution
  concept. We begin by presenting a variant of the core that seems
  well-suited to the concurrent game setting, and we show that this
  version of the core can be characterised using ATL\(^*\). We then
  study the computational complexity of key decision problems
  associated with the core, which range from problems in
  \textsc{PSpace} to problems in \textsc{3ExpTime}\@. We also
  investigate conditions that are sufficient to ensure that the core is 
  non-empty, and
  explore when it is invariant under bisimilarity. We then introduce
  and study a number of variants of the main definition of the core,
  leading to the issue of credible deviations, and to stronger notions
  of collective stable behaviour. Finally, we study cooperative rational 
  verification using an alternative model of preferences, in which players seek
  to maximise the mean-payoff they obtain over an infinite play in games where quantitative information is allowed.  
\end{abstract}

    \begin{keyword}
        Concurrent games; Cooperative games; Multi-agent
        Systems; Logic; Formal verification
    \end{keyword}
\end{frontmatter}

\section{Introduction}\label{secn:intro}

Intelligent agents such as Siri, Alexa, and Cortana are now used by millions of people every day. Originally proposed in the early 1990s, the vision for such agents was that they would be pro-active assistants, working in pursuit of goals delegated to them by their human users. Current intelligent agents make use of an ever-increasing array of AI technologies, starting with natural language understanding in their interface, used to communicate with human users. The next step in the development of such agents is expected to be the ability of such agents to interact not just with human users but \emph{with each other} in pursuit of their delegated goals. Such multi-agent systems raise a raft of research challenges, which have been the subject of considerable research over the past thirty years. The issue to which we address ourselves in this article is that of understanding the dynamics of such systems. In particular, collections of interacting agents may exhibit undesirable and unpredictable dynamics. The question of \emph{verifying} the possible behaviours of multi-agent systems thus naturally arises.

One increasingly popular approach to the problem of verifying multi-agent systems involves viewing a system as a game (in the sense of game theory), in which agents act rationally and strategically in pursuit of delegated goals. Given this, it is natural to ask what behaviours a system might exhibit under the assumption that agents act rationally---in accordance with game-theoretic solution concepts. This is the key idea that underpins the \emph{rational verification} paradigm~\cite{Gutierrez2015,Gutierrez2017a,WooldridgeGHMPT16,Abate2021}. 

Previous work on rational verification has used \emph{concurrent games} as a semantic model of multi-agent systems.
Since they were first introduced~\cite{Alur2002}, concurrent games have been very widely adopted in both the AI community and the verification/computer science community~\cite{Gutierrez2015,Gutierrez2017a,WooldridgeGHMPT16}. A concurrent game~\cite{Alur2002} is a finite-state environment, populated by a collection of independent, self-interested agents. A game takes place over an infinite sequence of rounds, where at each round, each agent chooses an action to perform. Preferences in concurrent games are typically modelled by assuming that each agent is associated with a temporal logic goal formula~\cite{Emerson1990}, which it desires to see satisfied.  The infinite plays generated by a game will either satisfy or fail to satisfy each player's goal, and players act rationally in an attempt to achieve their goal.  Since the satisfaction of a player's goal may be dependent on the choices made by other players, then players must make choices strategically in order to play optimally.

Now, in all previous studies that we are aware of, concurrent games are assumed to be \emph{non-cooperative}: players act alone, and binding agreements between players are ruled out. The game theoretic solution concepts used in previous studies of concurrent games have therefore been those studied in non-cooperative game theory---primarily Nash equilibrium and refinements thereof. In such a non-cooperative setting, the basic questions that we ask of a concurrent game are, for example, whether a particular temporal logic property holds on \emph{some} computation of the system that could arise through players selecting strategies that form a Nash equilibrium (the \textsc{E-Nash} problem) or whether a property holds on \emph{all} such computations (the \textsc{A-Nash} problem).  These problems can be understood as game-theoretic counterparts of the conventional model checking problem~\cite{Clarke2002}: in model checking, we are typically interested in whether a particular property could hold on some or all possible computations of a system, whereas in rational verification, we are interested in whether a property holds on some or all computations \emph{that could arise through rational choices on the part of the players}. The complexity of the corresponding decision problems in concurrent games has been extensively studied, and there now exist a small number of software tools that support rational verification~\cite{Gutierrez2018,Lomuscio2017,Brenguier2013,Kwiatkowska2020}.

Non-cooperative game theory, however, represents just one branch of game theory. \emph{Cooperative game theory} is a widely studied, albeit less well-known branch of game theory, which distinguishes itself from its non-cooperative counterpart in that it allows for the possibility that \emph{agents can make binding agreements with each other}. The possibility of binding agreements makes it possible for agents to work in teams, and to enjoy the benefits of cooperation. In the real world, we use contracts and other mechanisms (both formal and informal) to enable binding agreements. 

The aim of the present paper, therefore, is to extend the study of rational verification to include \emph{cooperative} solution concepts~\cite{OsborneR94,Maschler2013,Chalkiadakis2011}. Thus, we assume there is some (exogenous) mechanism through which players in a concurrent game can reach binding agreements and form teams (``coalitions'') in order to collectively achieve goals (although we emphasise that the nature of such a mechanism is beyond the scope of the present work). The possibility of binding cooperation and coalition formation eliminates some undesirable equilibria that arise in non-cooperative settings, and makes available a range of outcomes that cannot be achieved without cooperation. We focus on the \emph{core} as our key solution concept. The basic idea behind the core is that it consists of those strategy profiles from which no subset of players could benefit by collectively deviating. Now, in conventional cooperative games (characteristic function games with transferable utility~\cite{Chalkiadakis2011}), this intuition can be given a simple and natural formal definition, and as a consequence, the core is probably the most widely-studied solution concept for cooperative games. However, the conventional definition of the core does not naturally map into our concurrent game setting, because in our games, coalitions are subject to \emph{externalities}: whether a coalition has a beneficial deviation depends not just on the makeup of that coalition, \emph{but also on the behaviour of the remaining players}.

We begin by introducing the framework of concurrent games, and then proceed to define two variations of the core for such settings. In the first, a coalition of players is assumed to have a beneficial deviation if they have some course of action available to them which they would benefit from \emph{no matter what the remaining players did} (\emph{cf.}, the concept of the \(\alpha\)-\core\ in the classic game theory literature~\cite{uyanik2015nonemptiness}). However, this ``worst case'' (maximin) formulation of the core requires a deviation to be beneficial against \emph{all} courses of action by the remaining players---even those that the remaining agents would not rationally choose. This motivates a second definition, where a deviation is only required to be beneficial against all courses of action by remaining players that are \emph{credible}, in the sense that those players would then be no worse off than they were originally. We also consider games where the agents have \emph{quantitative} preferences, modelled as some kind of reward in every round of the game. In each case, we formally define the relevant solution concept, identify some of its key computational properties, give logical characterisations, and where possible, provide complexity results, which range from properties that can be checked in \textsc{PSpace} to properties that can be checked in \textsc{3ExpTime}\@. We also study model theoretic properties related to the \core: in particular, whether it is guaranteed to be non-empty, and whether temporal logic properties hold across bisimilar systems over plays (computation runs) induced by elements in the \core\ of the game (a highly desirable property from a formal verification perspective).

\noindent
\paragraph*{\textbf{Structure of the paper}} The remainder of this article is organised as follows:
\begin{itemize}
    \item In the following section, we summarise the key concepts from logic and concurrent games that are used throughout the paper.
    \item In Section~\ref{secn:core}, we define the \core\ and the main computational properties associated with it. In Section~\ref{secn:strong-core} we study the issue of credible coalition formations, with associated complexity results.
    \item In Section~\ref{secn:mean-payoff-games}, we study the core in the quantitative setting of concurrent mean-payoff games.
    \item Concluding remarks and related work are given in Section~\ref{secn:conc}, including a discussion around the implementation of our concepts using model checkers.
\end{itemize}

\section{Preliminaries}\label{secn:prelim}
\subsection{Set and Sequences}
Given any set \(S\), we use \(S^*\), \(S^\omega\), and \(S^+\) for, respectively, the sets of finite, infinite, and non-empty finite sequences of elements in \(S\). If \(w_1\in S^*\) and \(w_2\) is any other (finite or infinite) sequence, we write \(w_1w_2\) for their concatenation. For \(Q\subseteq S\), we write \(S_{-Q}\) for \(S\setminus Q\) and \(S_{-i}\) if \(Q=\set{i}\). We extend this notation to tuples \(u=(s_1,\ldots,s_k,\ldots,s_n)\) in \(S_1\times\cdots\times S_n\), and write \(u_{-k}\) for \((s_1,\ldots,s_{k-1},s_{k+1},\ldots,s_n)\), and similarly for sets of elements, that is, by~\(u_{-Q}\) we mean~\(u\) without each \(s_k\), for \(k\in Q\). Given a sequence \(w\), we write~\(w[t]\) for the element in position \(t+1\) in the sequence; for instance, \(w[0]\) is the first element of \(w\). We also use \emph{slice notation}: we write~\(w[l\ldots m]\) for the sequence \(w[l]\ldots w[m-1]\), \(w[l\ldots]\) for \(w[l]w[l+1]\ldots\), and \(w[\ldots m]\) for \(w[0]\ldots w[m-1]\); if \(m=0\), we let~\(w[l\ldots m]\) be the empty sequence, denoted~\(\epsilon\).

\subsection{Games} We begin by introducing the model of multi-agent systems that we use throughout the remainder of the paper: \emph{concurrent game structures}~\cite{Alur2002}. Informally, a concurrent game consist of a set of players, a set of actions for each of those players, a set of system states, and a transition function which describes how the state of the game changes, given a current state and an action for each of the players.

Formally, a concurrent game structure, \(M\), is given by a tuple,
\begin{equation*}
    M = (\Ag, \St, {\{\Ac_i\}}_{i \in \Ag}, s^0, \transf),
\end{equation*}
where:
\begin{itemize}
    \item \(\Ag\) and \(\St\) are finite, non-empty sets of \textbf{agents} and \textbf{states}, respectively---we usually identify \(\Ag\) with the set \(\{1, \ldots, n\}\);
    \item For each \(i\ \in \Ag\), \(\Ac_i\) is a finite, non-empty set of \textbf{actions} available to agent \(i\). We associate each state \(s\) with a set of actions available at that state, \(\Ac_i(s) \subseteq \Ac_i\), and we write \(\Ac\) for \(\Ac_1 \times \cdots \times \Ac_n\);
    \item \(s^0 \in \St\) is the \textbf{initial}/\textbf{start} state; and finally,
    \item \(\transf : \St \times \Ac \to \St\) is the \textbf{transition function} of the game.
\end{itemize}
As all the other components of the game can be derived from the transition function (the start state can be specified by listing its transitions first, say), the \emph{size} of \(M\) is the size of its transition function, given by \(|\St|\times|\Ac|^{|\Ag|}\).

Given a concurrent game structure \(M\), we can \emph{play} a game on it as follows: the game starts in state \(s^0\), and each player \(i \in \Ag\) chooses an action available to them, \(\ac_i^0 \in \Ac_i(s^0)\). The game then moves to a new state,
\begin{equation*}
    s^1 = \transf(s^0, \ac_1^0, \ldots, \ac_{n}^0).
\end{equation*}
This process is then repeated. We typically write \(s^i\) for the \(i^\text{th}\) state in the sequence, and \(\ac^i = (\ac_1^i, \ldots, \ac_n^i)\) for the \(i^\text{th}\) vector of actions played in the sequence. Thus for all \(t \in \Nat\), we have
\begin{equation*}
    s^{t+1} = \transf(s^t, \ac^t)
\end{equation*}

A \textbf{run}, \(\rho\) is a infinite sequence \(\run = s^0 s^1 s^2 \ldots\) such that for every \(t \in \Nat\), there exists some \(\ac \in \Ac\) such that \(s^{t+1} = \transf(s^t, \ac)\). A \textbf{path}, \(\pi\) is a finite prefix of a run.

A \textbf{strategy} for a player~\(i\) defines how player \(i\) chooses actions at each round of the game, as a function of the previous history of the game. Formally, a strategy is a function~\(\sigma_i\colon\St^+\to\Ac_i\) such that \(\sigma_i(\pi s)\in\Ac_{i}(s)\) for every \(\pi\in\St^*\) and \(s\in \St\). Thus, for every path \(\pi\), a strategy for a player \(i\) gives an action available to~\(i\) from the last state of that path. The set of strategies for player \(i\) is denoted by~\(\strategies_i\).\footnote{Here, we define strategies with respect to finite sequences of states. One can also define them in terms of finite sequences of action profiles. As the sequence of states can be derived from the action profiles, but not the other way around, these strategies are more powerful than the ones we use, and indeed, provide desirable properties such as the invariance of Nash equilibria under bisimilarity~\cite{Gutierrez2017b}. However, strategies defined relative to states are standard in the concurrent game structures~\cite{Alur2002} and rational verification~\cite{Gutierrez2017a} literature, so we use this model for consistency. For more details and further discussion, refer to~\cite{Gutierrez2017b}.}

A \textbf{strategy profile} \(\vec{\sigma}\) is a tuple of strategies, one for each player: \(\vec\sigma=(\sigma_1,\dots,\sigma_n) \in \strategies_1\times\cdots\times\strategies_n\). Observe that a strategy profile, \(\vec\sigma\), together with a state \(s\), induces a unique run, \(\run(\vec\sigma,s)\), where:
\begin{itemize}
    \item \(\run(\vec\sigma,s)[0]=s\); and
    \item \(\run(\vec\sigma,s)[t+1]=\transf(\run(\vec\sigma,s)[t],\sigma_1(\run(\vec\sigma,s)[\ldots t]),\ldots,\sigma_n(\run(\vec\sigma,s)[\ldots t]))\), for all \(t\in \Nat\).
\end{itemize}
We write \(\run(\vec\sigma)\) if \(s=s^0\).

Note that viewing strategies as functions \(\sigma_i\colon\St^+\to\Ac_i\) is problematic with respect to computational analysis, because the domain of such a function is infinite. To be able to answer questions relating to (for example) computational complexity, we need a finite representation for strategies that must operate over an infinite number of rounds. For this purpose, we follow standard practice in the concurrent game's literature, and model strategies as \textbf{finite state machines with output} (transducers)~\cite{Gutierrez2015,Gutierrez2017a}. We note that, for players with LTL goals, such strategies are sufficient: no more powerful model of strategies is necessary~\cite{Gutierrez2015,Gutierrez2017a}. Formally, a strategy for player~\(i\) is a structure \[\sigma_i = (Q_i, q_i^0, \strtransf_i, \stroutf_i)\] where:
\begin{itemize}
    \item \(Q_i\) is a finite, non-empty set of \textbf{strategy states};
    \item \(q^0_i \in Q_i\) is the \textbf{initial} strategy state;
    \item \(\strtransf_i : Q_i\times \St \to Q_i\) is a \textbf{transition function}; and
    \item \(\stroutf_i : Q_i \to \Ac_i\) is an \textbf{output function}.
\end{itemize}

A machine strategy works as follows: it begins in the initial state, \(q_i^0\), and chooses an action based on this, \(\stroutf(q_i^0)\). The state of the game then follows the transition function into a new state. Based on this new (game) state, and the state of the strategy, the strategy then also moves into a new state, based on the strategy transition function. This process repeats, yielding a new action at each timestep. It is easy to see that a strategy profile consisting solely of machine strategies will be eventually periodic (i.e., it will eventually enter a configuration that it was in previously, at which point it will start to repeat its behaviour).

Sometimes, we find we can use an even simpler model of strategies: \textbf{memoryless strategies}. A memoryless strategy \(\sigma_i\colon\St\to\Ac_i\) simply chooses an action based on the current state of the environment. Whilst memoryless strategies are not nearly as expressive as finite-memory strategies, they are still of great importance, owing to their conceptual simplicity, their ease of implementation, and the fact that in many types of game (such as two-player mean-payoff games~\cite{Ehrenfeucht1979}, and parity games~\cite{Emerson1991}), memoryless strategies are sufficient to act optimally.\footnote{For a characterisation of the types of games where memoryless strategies are sufficient for one or both players to play optimally, refer to~\cite{Gimbert2005,Kopczynski2006}.}

\subsection{Logics}

Broadly, we need to appeal to certain logics for two main reasons: expressing the preferences of agents, and reasoning about those agents and their preferences. For this, we will use three logics throughout this paper: \emph{Linear Temporal Logic} (LTL)~\cite{Pnueli77}, \emph{Alternating-Time Temporal Logic} (\atlstar)~\cite{Alur2002}, and \emph{Strategy Logic} (SL)~\cite{MogaveroMPV14}. We use LTL for modelling agent preferences, and \atlstar and SL for forming logical characterisations of the game-theoretic concepts we will study. Whilst it can be seen more formally from the syntax and semantics presented below, we note here that LTL can be seen as a subset of \atlstar, which can be seen as a subset of SL.

\subsubsection{Linear Temporal Logic}
LTL is a widely used logic for reasoning about the behaviours of concurrent systems, and while we present the key concepts here, we refer the reader to any standard temporal logic textbook for details (\emph{e.g.},~\cite{Baier2008}).

Let \(\AP\) be a set of propositional variables. Then the syntax of an LTL formula \(\varphi\) is given by the following grammar:
\begin{equation*}
    \varphi := p \mid \lnot \varphi \mid \varphi \lor \varphi \mid \temporalNext \varphi \mid \varphi \temporalUntil \varphi,
\end{equation*}
where \(p \in \AP\). The set of all LTL formulae over a set of propositional variables \(\AP\) is denoted \(\mathcal{L}(\AP)\). If \(\AP\) is clear from the context, we may instead just write \(\mathcal{L}\). We also introduce the traditional propositional abbreviations, \(\cdot \land \cdot\), \(\cdot \rightarrow \cdot\), \(\cdot \leftrightarrow \cdot\), defined in the usual way, as well as the abbreviations \(\temporalEventually \varphi\) for \(\true \temporalUntil \varphi\), and \(\temporalAlways \varphi\) for \(\lnot \temporalEventually \lnot \varphi\).

Typically, in the context of model checking, the semantics of LTL formulae are defined relative to labelled transition systems~\cite{Keller1976}, or Kripke structures~\cite{Kripke1963}, but for our purposes, we define them with respect to the infinite runs generated by concurrent game structures. Formally, let \(M\) be a concurrent game structure, and let \(\lambda : \St \to \powerset{\AP}\) be a labelling function, mapping states to sets of propositional variables. Then given an infinite run \(\rho \in \St^\omega\) and an LTL formula \(\varphi\), we say that \((M, \lambda, \rho)\) \emph{models} \(\varphi\) and write \((M, \lambda, \rho) \models \varphi\) according to the following inductive definition:
\begin{itemize}
    \item For \(p \in \AP\), we have \((M, \lambda, \rho) \models p\) if and only if \(p \in \lambda(\rho[0])\);
    \item For \(\varphi \in \mathcal{L}(\AP)\), we have \((M, \lambda, \rho) \models \lnot \varphi\) if and only if it is not the case that \((M, \lambda, \rho) \models \varphi\);
    \item For \(\varphi, \psi \in \mathcal{L}(\AP)\), we have \((M, \lambda, \rho) \models \varphi \land \psi\) if and only if we have both \((M, \lambda, \rho) \models \varphi\) and \((M, \lambda, \rho) \models \psi\);
    \item For \(\varphi \in \mathcal{L}(\AP)\), we have \((M, \lambda, \rho) \models \temporalNext \varphi\) if and only if  \((M, \lambda, \rho[1\ldots]) \models \varphi\);
    \item For \(\varphi, \psi \in \mathcal{L}(\AP)\), we have \((M, \lambda, \rho) \models \varphi \temporalUntil \psi\) if and only if there exists some \(j \geq 0\) such that \((M, \lambda, \rho[j\ldots]) \models \psi\) and for all \(i < j\) we have \((M, \lambda, \rho[i\ldots]) \models \varphi\).
\end{itemize}
For notational convenience, we will write \((G, \run) \models \varphi\) as shorthand for \((M, \lambda, \run) \models \varphi\), and if \(G\) is apparent from the context, we will just write \(\run \models \varphi\).

\subsubsection{Alternating-Time Temporal Logic}
\atlstar is an extension of the branching-time temporal logic CTL\textsuperscript{*}~\cite{Emerson1986} --- we shall use this logic in our proofs to reason about coalitions of agents more effectively. The key operator in \atlstar is the cooperation modality \(\ATLEpath{C}\phi\), which asserts that the coalition \(C\) has the power to enforce the temporal property \(\phi\); more specifically, that there is a collection of strategies for \(C\) such that if \(C\) follow these strategies, then no matter what other agents do, \(\phi\) is guaranteed to be made true. Given a set of atomic propositions~\(\AP\) and a set of agents~\(\Ag\), the language of ATL\textsuperscript{*} formulae is defined by the following grammar:
\begin{align*}
    \varphi  ::= & p  \mid  \lnot \varphi \mid \varphi \lor \varphi \mid  \ATLEpath{C} \psi                         \\
    \psi ::=     & \varphi \mid \lnot \psi \mid \psi \lor \psi \mid \temporalNext \psi \mid \psi\temporalUntil \psi
\end{align*}
where \(p\in\AP\) and \(C\subseteq\Ag\). We call the formulae produced by \(\varphi\) in the above grammar \textbf{\atlstar state formulae}, and denote the set that contains them by \(\mathcal{L}_s(\AP, \Ag)\) and those generated by \(\psi\) in the above grammar \textbf{\atlstar path formulae}, denoted by \(\mathcal{L}_p(\AP, \Ag)\). (They are given these names as their semantics are defined relative to states and paths respectively.) We emphasise that only \atlstar state formulae are well-formed \atlstar formulae, and thus, we write \(\mathcal{L}(\AP, \Ag)\) as shorthand for \(\mathcal{L}_s(\AP, \Ag)\). When either \(\AP\) or \(\Ag\), or both, are known, we may omit them. With \(\AP'\subseteq\AP\), we may write \(\varphi|_{\AP'}\) if \(\varphi\in\loglang(\AP',\Ag)\) for some set of agents \(\Ag\).

In addition to the abbreviations used for LTL formulae as described above, we also use the shorthands \(\Epath\varphi\) for \(\ATLEpath{\Ag}\varphi\), \(\Apath\varphi\) for \(\ATLEpath\emptyset\varphi\), and \(\ATLApath{C}\varphi\) for \(\neg\ATLEpath{C}\neg\varphi\). Finally, we define the \emph{size} of an \atlstar formulae \(\varphi\) as its number of subformulae.

To define the semantics of \atlstar formulae, we actually need to define two semantic relations, \(\models_s\) (for state formulae) and \(\models_p\) (for path formulae). So, let \(M\) be some concurrent game structure, along with a labelling function \(\lambda : \St \to \powerset{\AP}\). Then given a state, \(s \in St\) and an \atlstar formula \(\varphi\), we say that \((M, \lambda, s)\) \emph{models} \(\varphi\) and write \((M, \lambda, s) \models \varphi\) according to the following inductive definition:
\begin{itemize}
\item For \(\varphi \in \mathcal{L}_s(\AP, \Ag)\), we have \((M, \lambda, s) \models \varphi\) if and only if \((M, \lambda, s) \models_s \varphi\);
  \item For operators that lie in LTL, their inductive semantics are the same as in LTL;
    \item For \(\varphi \in \mathcal{L}_p(\AP, \Ag)\), we have \((M, \lambda, s) \models_s \ATLEpath{C} \varphi\) if and only if there is some strategy vector \(\vec{\sigma}_C\) for the coalition \(C\), such that for all complementary strategy profiles, \(\vec{\sigma}_{\Ag \setminus C}\), it is the case that \((M, \lambda, \rho((\vec{\sigma}_{\Ag \setminus C}, \vec{\sigma}_C), s)) \models_p \varphi\) holds;
    \item For \(\varphi \in \mathcal{L}_s(\AP, \Ag)\), we have \((M, \lambda, \rho) \models_p \varphi\) if and only if \((M, \lambda, \rho[0]) \models_s \varphi\);
\end{itemize}
Where the concurrent game and labelling function are clear from context, we will simply write \(s\models\phi\). Given a concurrent game structure \(M\) and a labelling function \(\lambda : \St \to \powerset{\AP}\), we say that \(\varphi\) is \textbf{satisfiable} if there exists some state \(s \in \St\) such that \((M, \lambda, s) \models \varphi\). Moreover, we say that \(\varphi\) is \textbf{equivalent} to \(\psi\) if for all states \(s \in \St\) we have \((M, \lambda, s) \models \varphi\) if and only if \((M, \lambda, s) \models \psi\).

Note that LTL can been seen as the sublogic of \atlstar given by all formulae \(\Apath \varphi\), where the formula \(\varphi\) does not contain the ``coalition'' quantifiers \(\ATLEpath{C}\) or \(\ATLApath{C}\). Thus \atlstar is a particularly effective tool for reasoning about the LTL properties that coalitions can achieve, and this is exactly how we will use it when we come to prove the complexity bounds of our decision problems. Specifically, if we have an LTL game \(G\), we can write \((G, s) \models \varphi\) as shorthand for \((M, \lambda, s) \models \varphi\), and furthermore, if the game \(G\) is apparent from the context, we shall drop it and simply write \(s \models \varphi\) instead.

The relevant decision problem here is the \textbf{model checking problem for \atlstar}, which we utilise heavily in the following section:

\begin{quote}
    \underline{\textsc{\atlstar Model Checking}}:\\
    \emph{Given}: Concurrent game \(M\), labelling function \(\lambda\), state \(s \in \St\), and \atlstar{} formula \(\varphi\).\\
    \emph{Question}: Is it the case that \((M,\lambda,s) \models \varphi\)?
\end{quote}

This problem is \textsc{2ExpTime}-complete~\cite{Alur2002} for games with two or more players, and \textsc{PSpace}-complete for one-player games (as then, the problem reduces to LTL model checking)~\cite{SistlaC85}.

\subsubsection{Strategy Logic}
Whilst \atlstar\ is a powerful tool for reasoning about coalitions, it famously cannot represent certain game-theoretic concepts, such as the Nash equilibrium~\cite{MogaveroMPV14} --- explicit quantification over strategies is required to do this. For our purposes, we will be able to make most of the necessary logical characterisations using \atlstar, but when we come to the \emph{strong core} in Section~\ref{secn:strong-core}, we too will also need the ability to reason explicitly about strategies. SL extends LTL with two \textbf{strategy quantifiers}, \(\EExs{x}\) and \(\AAll{x}\), and an \textbf{agent binding} operator \((i, x)\), where \(i\) is an agent and \(x\) is a variable. These operators can be read as ``\emph{there exists a strategy \(x\)}'', ``\emph{for every strategy \(x\)}'', and ``\emph{bind agent \(i\) to the strategy associated with variable \(x\)}'', respectively. Formally, SL formulae are inductively built from a set of propositions \(\AP\), variables \(\VarSet\), and agents \(\Ag\), using the following grammar, where \(p \in \AP\), \(x \in \VarSet\), and \(i \in \Ag\):
\begin{equation*}
    \varphi ::=
    p \mid \neg \varphi \mid \varphi \wedge \varphi \mid \temporalNext \varphi \mid \varphi \temporalUntil \varphi \mid \EExs{x} \varphi \mid \AAll{x} \varphi \mid (i, x) \varphi.
  \end{equation*}
  
We can now present the semantics of SL\@. First, denoting the set of all strategies by \(\StrSet\), we define an assignment to be a partial function that maps variables and agents to strategies, \(\asgFun \in \AsgSet = (\VarSet \cup \Ag) \rightharpoonup \StrSet\). We then use the notation \(\asgFun[i \mapsto f] / \asgFun[x \mapsto f]\) to refer to the assignment which equals \(f\) on \(i/x\), and agrees with \(\asgFun\) on every other input on which it is defined. Then, given a concurrent game structure~\(M\), for all SL formulae \(\varphi\), states \(s \in \St\) in \(M\), and assignments \(\asgFun \in \AsgSet\), the relation \(M, \asgFun, s \models \varphi\) is defined as follows: 
\begin{enumerate}
    \item For LTL formulae embedded in SL, their inductive semantics are the same as in LTL\@;
    \item\label{def:sl(semantics:qnt)} For all formulae \(\varphi\) and variables \(x \in \VarSet\) we have:
          \begin{enumerate}
              \item\label{def:sl(semantics:eqnt)}
              \(M, \asgFun, s \models \EExs{x} \varphi\) if and only if there exists some \(\strFun \in \StrSet\) such that \(M,\asgFun[x \mapsto \strFun], s \models \varphi\);
              \item\label{def:sl(semantics:aqnt)}
              \(M, \asgFun, s \models \AAll{x} \varphi\) if and only if for all \(\strFun \in \StrSet\) we have \(M,\asgFun[x \mapsto \strFun], s \models \varphi\);
          \end{enumerate}
          \item\label{def:sl(semantics:bnd)} For every agent \(i \in \Ag\) and variable \(x \in \VarSet\), we have \(\ M, \asgFun, s \models (i, x) \varphi \ \) if and only if\(\ M, \asgFun[i \mapsto \asgFun(x)], s \models \varphi \ \).
\end{enumerate}
For a sentence \(\varphi\), that is, a formula with no free variables and agents~\cite{MogaveroMPV14}, we say that \(M\) satisfies \(\varphi\), and write \(M \models \varphi\) in that case, if \(M, \emptyset, s^0 \models \varphi\), where \(\emptyset\) is the empty assignment. We use the following abbreviations: \(\SLE{i}\varphi\) for \(\EExs{x}(i,x)\varphi\) and \(\SLA{i}\varphi\) for \(\AAll{x}(i,x)\varphi\), which can be intuitively understood as ``there is a strategy for agent~\(i\) such that \(\varphi\) holds'' and ``\(\varphi\) holds, for all strategies of agent~\(i\)'', respectively. We extend this notation to sets of players and write, for instance, \(\SLE{C}\varphi\) instead of \(\SLE{i}\ldots\SLE{j}\varphi\), where \(C = \set{i,\ldots,j}\), and similarly for the universal quantifier operator. Then, with \(\SLE{C}\varphi\) we mean that \emph{``coalition~\(C\) has a joint strategy such that \(\varphi\) holds.''}

\subsection{LTL Games} We can now define \emph{LTL games}~\cite{Pnueli1989,Fisman2010,Gutierrez2015}. The key idea in an LTL game is that each player \(i\) is associated with an LTL goal formula \(\gamma_i\), which it desires to see satisfied. Formally, an LTL game, \(G\), is given by a structure
\begin{equation*}
    G=(M, \AP, \lambda, {(\gamma_i)}_{i \in \Ag}),
\end{equation*}
where \(M\) is a concurrent game structure, \(\AP\) is a set of atomic propositions, \(\lambda : \St \to \powerset{\AP}\) is a labelling function, and for each \(i \in \Ag\), \(\gamma_i\) is an LTL formula over \(\AP\) that defines that player's preference relation over runs.

We now describe how temporal goal formulae \(\gamma_i\) induce \textbf{preference relations} \(\succeq_i\) over runs. First, given that under a provided run, LTL goals are either satisfied or not, we can identify a set of ``winners'' and a set of ``losers'' for that run. Formally, let \(\Winners(\run)\) denote the set of players that get their goal achieved under \(\run\), and let \(\Losers(\run)\) denote the set of players that do not:
\begin{align*}
    \Winners(\run) & = \{i \in \Ag \mid \run \models \gamma_i\} \\
    \Losers(\run)  & = \Ag \setminus \Winners(\run).
\end{align*}

We can now use winners and losers to define the preference relations of our agents. Intuitively, agent \(i\) will always strictly prefer a run that satisfies its goal \(\gamma_i\) over one that does not, but is indifferent between two runs that satisfy its goal, and is indifferent between two runs that fail to satisfy its goal. Formally, for two runs \(\run, \run'\), and a player \(i\), we have \(\run \succeq_i \run'\) if and only if \(i\) is a winner under \(\run\) (note the lack of dependence of \(\run^\prime\)) or \(i\) is a loser under both \(\run\) and \(\run^\prime\). Put alternatively, we have \(\run \succeq_i \run'\) if and only if \((M, \lambda, \run') \models \gamma_i\) implies that \((M, \lambda, \run) \models \gamma_i\). Strict preference relations \(\succ_i\) are defined in the standard way: \(\run \succ_i \run'\) if and only if \(\run \succeq_i \run'\) but not \(\run' \succeq_i \run\). We leave the reader to verify, first, that this definition matches our informal explanation of preferences above, and second, that the relations \(\succeq_i\) so defined are indeed preference relations (\emph{i.e.}, the binary relation \(\succeq_i\) is reflexive, complete, and transitive). Finally, we note that we use \(\mathcal{L}\) in two different senses in this paper: to denote the languages of various logics, and to denote the set of losers of a run. As these generally appear in different contexts, and take different inputs, there should be no ambiguity as to what interpretation should be taken.

With preference relations defined, we can introduce game-theoretic solution concepts. For now, we will stick with non-cooperative solution concepts. First, a strategy profile \(\vec{\sigma}\) is said to be a \textbf{Nash equilibrium} if there is no player \(i \in \Ag\) and strategy \(\sigma_i'\) for \(i\) such that we have \((\vec{\sigma}_{-i}, \sigma_i') \succ_i \vec{\sigma}\). That is, \(\vec{\sigma}\) is a Nash equilibrium if no player can benefit by unilaterally changing its strategy (assuming all other players leave their strategies unchanged). Let \(\NE(G)\) denote the set of Nash equilibria of the game \(G\).

We emphasise that Nash equilibrium only considers \emph{unilateral} deviations, \emph{i.e.}, deviations by individual players. Compare this to the notion of a \textbf{strong Nash equilibrium}~\cite{Aumann1959,Aumann1960}: a strategy profile \(\vec{\sigma}\) is a strong Nash equilibrium if there is no coalition \(C\) and no strategy \(\sigma_C'\) such that for all \(i \in C\) we have \((\vec{\sigma}_{-i}, \sigma_i') \succ_i \vec{\sigma}\). Thus, strong Nash equilibria are those strategy profiles that are immune to multilateral deviations.

Finally, we mention in passing the notion of the \emph{coalition-proof Nash equilibrium}~\cite{Bernheim1987a,Bernheim1987b}. Whilst we refrain from giving a formal definition here, informally, this can be thought of as capturing the outcomes that would arise when players have unlimited, but non-binding, pre-play communication. 

\section{Cooperative Rational Verification}\label{secn:core}

\subsection{Defining the Core}

We want to define counterparts of the rational verification problems \textsc{E-Nash} and \textsc{A-Nash}, as studied in~\cite{Gutierrez2015,Gutierrez2017a}, but for cooperative settings. For this, we need a version of the \core\ for our concurrent game setting. The \core\ is probably the best-known solution concept in cooperative game theory. Like Nash equilibrium in the non-cooperative setting, the \core\ defines a notion of stability for games, but whereas Nash equilibrium only requires that an outcome is stable in the sense that it admits no \emph{individual} beneficial deviations, the \core\ requires that an outcome admits no beneficial deviations by \emph{coalitions}. In the ``standard'' model of cooperative games, (cooperative games with transferable utility in characteristic function form~\cite{Chalkiadakis2011}), this intuition is easily formalised, but in concurrent games, there is an important difficulty, as follows.

Suppose a coalition of players \(C\subseteq \Ag\) are contemplating participating in a strategy profile \(\vec{\sigma}\), and in particular, are attempting to determine whether they have a cooperative beneficial deviation from \(\vec{\sigma}\). Now, as they consider possible beneficial deviations --- collective strategies \(\vec{\sigma}_{C} \in \prod_{i \in C} \Sigma_i\) --- \emph{what assumptions should \(C\) make about the behaviour of the remaining players \(\Ag \setminus C\)?} In particular, assuming that the remaining players will not respond, by potentially altering their strategy, is implausible in a cooperative setting.\footnote{This is the kind of behaviour that one has to assume to define strong Nash equilibrium, a non-cooperative solution concept.} Rational players who can cooperate will respond to the deviation rationally and in a cooperative way against the players in \(C\). And, crucially, whether or not \(C\)'s putative deviation is in fact beneficial may well depend upon the behaviour of the remaining players. In game theoretic terms, our concurrent game setting is subject to \emph{externalities}: the performance of the coalition \(C\) depends not just on the makeup of the coalition \(C\), but on the behaviour of the remaining players.

It is well-known that cooperative solution concepts are difficult to define in the presence of externalities~\cite{Chalkiadakis2011}. In particular, there is no universally accepted definition of the \core\ for games with externalities. Our first definition of the \core\ for concurrent games, therefore, captures \emph{worst case} (maximin) reasoning. Thus, when coalition \(C\) is contemplating a deviation, it requires that this deviation will be beneficial \emph{no matter what the remaining players do}. This idea has been explored in the concept of the \(\alpha\)-\core\ in cooperative games~\cite{uyanik2015nonemptiness}. To make it formal, we need to formalise the notion of a beneficial deviation. Let \(\vec{\sigma}\) be a strategy profile and let \(C\) be a coalition; then we say that \(\vec{\sigma}_{C}'\) is a beneficial deviation from \(\vec{\sigma}\) if both:
\begin{enumerate}
    \item \(C \subseteq \Losers(\rho(\vec{\sigma}))\); and
    \item For all \(\vec{\sigma'}_{-C}\), we have \(C \subseteq \Winners(\rho(\vec{\sigma}_{C}',\vec{\sigma}_{-C}'))\).
\end{enumerate}
In other words, \(\vec{\sigma}_{C}'\) is said to be a beneficial deviation from \(\vec{\sigma}\) if the players in \(C\) would be better off playing \(\vec{\sigma}_{C}'\), rather than their respective strategies in \(\vec{\sigma}\), \emph{no matter what strategies the players outside \(C\) play}. The \core\ of a game \(G\), denoted \(\coreset{G}\), is then defined to be the set of outcomes of \(G\) that no coalition has a beneficial deviation from. 

\begin{example}\label{ex:poorne}
    Consider the following game, which illustrates how cooperative and non-cooperative solution concepts differ: it contains a poor quality Nash equilibrium that is not in the \core\@. The ability to cooperate makes it possible for agents to avoid the undesirable equilibrium. The game contains two players, \(\Ag = \set{1,2}\) and two variables \(\AP = \set{p,q}\), with player~\(1\)'s action set being \(\Ac_1 = \set{pt,pf}\) and player~2's action set being \(\Ac_2 = \set{qt,qf}\), satisfying that, for every reachable state, if player~1/2 plays \(pt/qt\) then \(p/q\) will hold, and will not hold if \(pf/qf\) is played instead (\emph{i.e.}, player~1 ``controls'' the value of \(p\) and player~\(2\) the value of \(q\)). Their goals are identical (and so the game is a coordination game): \(\gamma_1 = \gamma_2 = \temporalAlways(p \wedge q)\). Now, consider the strategy profile \(\vec{\sigma}\) in which both players simply fix their respective variables to be false forever (\emph{i.e.}, play \(pf\) and \(qf\) forever). Neither player will have their goal achieved by such a strategy profile. However, the strategy profile forms a Nash equilibrium, because unilateral deviation cannot improve the situation: neither player has an alternative strategy which would make them better off. In fact, there are infinitely many such poor-quality Nash equilibria in this game, where neither player gets their goal achieved. However, this strategy profile is \emph{not} in the \core, because there is a cooperative beneficial deviation to the strategy profile in which both players fix their variables to be true forever (\emph{i.e.}, play \(pt\) and \(qt\) forever). And, in fact, in every strategy profile which lies in the core, both players get their goal achieved. Thus, using the core instead of Nash equilibrium eliminates poor quality equilibria from the game, leading to socially more desirable outcomes.
\end{example}

Before proceeding, we note that additional variations of the core have been proposed in the game theory literature on cooperative games with externalities, in particular, the \(\beta\)-core and \(\gamma\)-core. The \(\beta\)-core assumes that players outside a deviating coalition will choose strategies to maximally punish the deviating coalition, and that members of a deviating coalition will then collectively best respond to these punishment strategies. Thus, the \(\beta\)-core assumes a \emph{minimax} choice on behalf of the deviating coalition. In contrast, the \(\gamma\)-core assumes that players outside a deviating coalition will form singleton coalitions, with each non-deviator choosing individual strategies in a best response to the deviating coalition strategies. We note that these solution concepts are much less studied than the \(\alpha\)-core, and are arguably less natural and less well-motivated. We, therefore, leave them for future study (see also comments in the Related Work Section~\ref{secn:conc}).

\subsection{Decision Problems}

In rational verification~\cite{Gutierrez2015,Gutierrez2017a,WooldridgeGHMPT16} we are mainly interested in checking which temporal logic properties are satisfied in a given solution concept of a game; typically, in the non-cooperative setting, we study what LTL formulae hold in the Nash equilibria \(\NE(G)\) of a game \(G\). In the cooperative setting, as introduced here, we are instead interested in what properties hold in the \core\ of the game. The two main decision problems in rational verification are checking whether a temporal logic formula is satisfied by some/every stable strategy profile of the game. For the core, these problems are defined as follows---\emph{cf.}~\cite{Gutierrez2015,WooldridgeGHMPT16,Gutierrez2017a}.
\begin{quote}
    \underline{\ecore}:\\
    \emph{Given}: Game \(G\), LTL formula \(\varphi\).\\
    \emph{Question}: Is it the case that \(\exists \vec{\sigma} \in \coreset{G} : \run(\vec{\sigma})\models\varphi\)?
\end{quote}

\begin{quote}
    \underline{\acore}:\\
    \emph{Given}: Game \(G\), LTL formula \(\varphi\).\\
    \emph{Question}: Is it the case that  \(\forall \vec{\sigma} \in \coreset{G} : \run(\vec{\sigma})\models\varphi\)?
\end{quote}

One decision problem considered in the non-cooperative setting is the \textsc{Non-Emptiness} problem, which asks if the set of Nash equilibria of a given game is non-empty. However, as we shall show momentarily, the core of the game is always non-empty, so it does not make sense to consider the corresponding problem in the cooperative setting.

We will also be interested in two additional decision problems: checking whether a given strategy profile is in the \core\ (\cmembership), and checking whether a given strategy vector for a coalition is a beneficial deviation with respect to a strategy profile (\bdeviation). Formally:

\begin{quote}
    \underline{\cmembership}:\\
    \emph{Given}: Game \(G\), strategy profile \(\vec{\sigma}\).\\
    \emph{Question}: Is it the case that \(\vec{\sigma} \in \coreset{G}\)?
\end{quote}

\begin{quote}
    \underline{\bdeviation}:\\
    \emph{Given}: Game \(G\), strategy profile \(\vec{\sigma}\), coalition \(C\), and deviation \(\vec{\sigma}_{C}'\).\\
    \emph{Question}: Is \(\vec{\sigma}_{C}'\) a beneficial deviation from \(\vec{\sigma}\)?
\end{quote}

In what follows it is helpful to use the concept of a \textbf{fulfilled coalition}: a coalition is fulfilled if they are able to achieve their goals \emph{irrespective of what other players do}; that is, they have a collective strategy that will guarantee their goals are achieved. Formally, we say that a coalition of players~\(C\) is fulfilled if there is a joint strategy~\(\vec{\sigma}_C\) for~\(C\subseteq \Ag\) such that for all joint strategies \(\vec{\sigma}_{-C}\) for~\(\Ag\setminus C\) we have
\begin{equation*}
  \run((\vec{\sigma}_C,\vec{\sigma}_{-C}))\models\bigwedge_{i\in C}\gamma_i.
\end{equation*}
Thus, a fulfilled coalition has a winning strategy to collectively achieve their goals. Since we are considering cooperative games, the question is whether such a coalition will form. We have the following:
\begin{lemma}\label{lem:fc}
    \begin{enumerate}
        \item There are games \(G\), with strategy profiles \(\vec{\sigma}\in\coreset{G}\), containing fulfilled coalitions \(C\subseteq\Ag\) such that \(C\not\subseteq\Winners(\rho(\vec{\sigma}))\);
        \item For every game \(G\), strategy profile \(\vec{\sigma}\in\coreset{G}\), and fulfilled coalition \(C\), we have that \(C\cap\Winners(\rho(\vec{\sigma}))\neq\emptyset\);
        \item For every game~\(G\) and fulfilled coalition~\(C\), then there is~\(\vec{\sigma}\in\coreset{G}\) such that \(C\subseteq\Winners(\rho(\vec{\sigma}))\).
    \end{enumerate}
\end{lemma}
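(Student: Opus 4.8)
The plan is to treat the three items separately. Item~1 is an explicit counterexample; item~2 follows in a single line by unwinding the definition of a beneficial deviation; and item~3 --- the substantive part --- I would prove by extending any fulfilled coalition to an \emph{inclusion-maximal} fulfilled coalition, having it commit to a witnessing strategy, and letting the remaining players play arbitrarily. It is convenient to record first the elementary observation, used repeatedly below, that a nonempty coalition \(C\) has a beneficial deviation from a profile \(\vec{\sigma}\) if and only if \(C \subseteq \Losers(\rho(\vec{\sigma}))\) \emph{and} \(C\) is fulfilled: condition~(2) in the definition of a beneficial deviation is precisely the assertion that the deviating joint strategy witnesses fulfilledness of \(C\). (Throughout, as is standard, ``coalition'' in the definition of the core is read as ranging over nonempty sets; the empty coalition has no player who could be made ``better off''.) Item~2 is then immediate: if \(C\) is fulfilled, \(\vec{\sigma} \in \coreset{G}\), and \(C \cap \Winners(\rho(\vec{\sigma})) = \emptyset\), then \(C \subseteq \Losers(\rho(\vec{\sigma}))\), so \(C\) has a beneficial deviation from \(\vec{\sigma}\), contradicting \(\vec{\sigma} \in \coreset{G}\).

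For item~1 I would take the game of Example~\ref{ex:poorne}, but weaken player~\(1\)'s goal to \(\gamma_1 = \temporalAlways p\) (keeping \(\gamma_2 = \temporalAlways(p \wedge q)\)). Then \(C = \set{1,2}\) is fulfilled --- both playing their ``true'' action forever yields \(\temporalAlways(p \wedge q)\), hence also \(\temporalAlways p\) --- whereas \(\set{2}\) is \emph{not} fulfilled, since satisfying \(\gamma_2\) requires \(p\), which lies outside player~\(2\)'s control. Now take the profile \(\vec{\sigma}\) in which player~\(1\) plays ``\(p\) true'' forever and player~\(2\) plays ``\(q\) false'' forever; then \(\Winners(\rho(\vec{\sigma})) = \set{1}\) and \(\Losers(\rho(\vec{\sigma})) = \set{2}\). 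The only nonempty coalition that could have a beneficial deviation is a subset of \(\Losers(\rho(\vec{\sigma})) = \set{2}\), \emph{i.e.}\ \(\set{2}\) itself, which is not fulfilled; hence \(\vec{\sigma} \in \coreset{G}\), while \(C = \set{1,2}\) is fulfilled and \(C \not\subseteq \Winners(\rho(\vec{\sigma}))\), as required.

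For item~3, let \(C\) be fulfilled, and let \(D\) be an inclusion-maximal element of the (nonempty, finite) family \(\set{D' : C \subseteq D' \subseteq \Ag,\ D' \text{ fulfilled}}\); fix a finite-state joint strategy \(\vec{\sigma}_D\) witnessing that \(D\) is fulfilled (finite memory suffices, as the associated coalition-versus-environment game has an LTL-definable, \(\omega\)-regular winning condition). Let \(\vec{\tau}_{-D}\) be arbitrary strategies for \(\Ag \setminus D\) (\emph{e.g.}\ memoryless ones) and set \(\vec{\sigma} = (\vec{\sigma}_D, \vec{\tau}_{-D})\). Since \(\vec{\sigma}_D\) forces every goal in \(D\) against all behaviours of \(\Ag \setminus D\), we get \(C \subseteq D \subseteq \Winners(\rho(\vec{\sigma}))\), which is the second conclusion. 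For the first, suppose towards a contradiction that some nonempty \(B\) has a beneficial deviation from \(\vec{\sigma}\): then \(B \subseteq \Losers(\rho(\vec{\sigma}))\), so \(B \cap D = \emptyset\), and \(B\) is fulfilled via some joint strategy \(\vec{\sigma}'_B\). As \(D\) and \(B\) are disjoint, \((\vec{\sigma}_D, \vec{\sigma}'_B)\) is a joint strategy for \(D \cup B\), and for any completion \(\vec{\tau}\) for \(\Ag \setminus (D \cup B)\) the run \(\rho((\vec{\sigma}_D, \vec{\sigma}'_B, \vec{\tau}))\) satisfies \(\bigwedge_{i \in D} \gamma_i\) (because \(\vec{\sigma}_D\) wins against everything outside \(D\)) and \(\bigwedge_{i \in B} \gamma_i\) (because \(\vec{\sigma}'_B\) wins against everything outside \(B\)); hence \(D \cup B\) is fulfilled. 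But \(C \subseteq D \subsetneq D \cup B\), contradicting the maximality of \(D\), so in fact \(\vec{\sigma} \in \coreset{G}\). (Specialising to \(C = \emptyset\), which is vacuously fulfilled, this argument also re-establishes that \(\coreset{G}\) is always non-empty.)

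I expect the only real obstacle to be the composition step in item~3: checking that if two \emph{disjoint} coalitions each have a joint strategy forcing their own goals against everyone outside, then playing both strategies simultaneously forces the union of the goals --- and recognising that this composition property is exactly what makes ``inclusion-maximal fulfilled coalition'' the right object, since a maximal \(D\) leaves no fulfilled coalition among \(\Ag \setminus D\), hence no beneficial deviation from \((\vec{\sigma}_D, \vec{\tau}_{-D})\). The remaining ingredients --- finite-memory representability of the witnessing strategies (standard for \(\omega\)-regular objectives, and already assumed in the paper), the routine checks for the example in item~1, and item~2 itself --- are all straightforward.
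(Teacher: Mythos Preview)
Your proposal is correct. Item~2 is exactly the paper's argument. Items~1 and~3 are correct but take different routes from the paper.

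For item~1 the paper uses a three-player example in which player~1 has the trivial goal \(\true\) and chooses between two sinks that satisfy player~2's goal or player~3's goal respectively; either sink yields a core profile, and the ``other'' pair \(\{1,j\}\) is a fulfilled coalition not contained in the winners. Your two-player variant of Example~\ref{ex:poorne} (weakening \(\gamma_1\) to \(\temporalAlways p\)) achieves the same thing with one fewer player and is arguably the minimal witness.

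For item~3 the paper argues \emph{iteratively}: fix a witnessing strategy for \(C\), complete arbitrarily, and, while the profile is not in the core, let a losing coalition commit to its beneficial-deviation strategy; since \(C\) (and each previously deviating coalition) keeps winning against anything, the set of guaranteed winners strictly grows, so the process terminates in the core. You instead pass in one shot to an inclusion-maximal fulfilled coalition \(D \supseteq C\), fix a witnessing strategy for \(D\), and observe that any would-be deviation \(B\) is disjoint from \(D\) and fulfilled, so the composition \((\vec{\sigma}_D,\vec{\sigma}'_B)\) witnesses that \(D \cup B\) is fulfilled, contradicting maximality. Your composition lemma (disjoint fulfilled coalitions are jointly fulfilled) is exactly the invariant that makes the paper's iteration terminate, so the two proofs are really the same idea packaged differently: the paper's version is more procedural, yours isolates the structural fact and gives a one-line construction of the core element. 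Both recover non-emptiness of the core (Theorem~\ref{thm:emptiness}) as the special case \(C=\emptyset\).
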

Informally, the first part of the lemma says that the fact that a coalition is fulfilled does not mean that every player in such a coalition is guaranteed to get its goal achieved under an arbitrary member of the core. However, the second part of the lemma says that in any member of the core, some agents of every fulfilled coalition must get their goals achieved. And, the third part of the lemma says that for every fulfilled coalition the \core\ contains a strategy profile in which every player of this coalition gets its goal achieved.
\begin{proof}[Proof of~\ref{lem:fc}.1]
  Consider a game with three states, \(\start, \up, \down\), and three players, \(\{1,2,3\}\). Let \(\start\) be the start state of the game,  and assume that players \(2\) and \(3\) have only one action available to them, and player \(1\) has two actions, \(\HEADS\) and \(\TAILS\). As such, the transition function can be defined solely in terms of player \(1\)'s action, and we let \(\transf(\start, \HEADS) = \up\) and \(\transf(\start, \TAILS) = \down\). The remaining two states are sink states, with \(\transf(\up, \HEADS) = \transf(\up, \TAILS) = \up\) and \(\transf(\down, \HEADS) = \transf(\down, \TAILS) = \down\). Further suppose we have \(\AP = \{p,q\} \), and that \(\lambda(\start) = \emptyset\), \(\lambda(\up) = \{p\}\) and \(\lambda(\down) = \{q\} \). Finally, let the goals of the players be as follows: \(\gamma_1 = \true\), \(\gamma_2 = \ltlX \ltlG p\), and \(\gamma_3 = \ltlX \ltlG q\). As the game will always end in one of two sink states after the first action, we can identify strategies simply by the first action of player 1: \(\HEADS\) or \(\TAILS\).

  Now, note that the coalition \(\{1,3\}\) is a fulfilled coalition: if player 1 plays \(\TAILS\), then the game will end up in the state \(\down\), and will permanently remain there, satisfying player 1's and player 3's goals. However, note that the strategy profile where player 1 plays \(\HEADS\) is a member of the core: player 1's and player 2's goals are satisfied, and as player 3 only has one action, they cannot deviate to get their goal achieved. Thus, we have a fulfilled coalition and a member of the core, such that the fulfilled coalition is not a subset of the winners of the member of the core. Note the same argument could have been applied the other way around --- \(\{1,2\}\) is a fulfilled coalition, and \(\TAILS\) is a member of the core whose winners are not a superset of \(\{1,2\}\).
\end{proof}

\begin{proof}[Proof of~\ref{lem:fc}.2]
      Let \(G\) be a game, \(\vec{\sigma}\) be a strategy profile that lies in the core, and \(C\) a fulfilled coalition. For the sake of a contradiction, further suppose that the intersection of \(C\) and \(\Winners(\rho(\vec{\sigma}))\) is empty. This means that for each \(i \in C\), we have \(\rho(\vec{\sigma}) \not \models \gamma_i\). But as \(C\) is a fulfilled coalition, there exists a strategy \(\vec{\sigma}_C^\prime\), such that for all counterstrategies \(\vec{\sigma}_{-C}\), we have \(\run((\vec{\sigma}_C,\vec{\sigma}_{-C}))\models\bigwedge_{i\in C}\gamma_i\). Naturally, this implies that \(\run((\vec{\sigma}_C,\vec{\sigma}_{-C}))\models\gamma_i\) for all \(i \in C\). But then this implies that \(\vec{\sigma}_C^\prime\) is a beneficial deviation for \(C\), contradicting the fact that \(\vec{\sigma}\) is a member of the core. Thus, we conclude that \(C\) and \(\Winners(\rho(\vec{\sigma}))\) have non-empty intersection.
\end{proof}

\begin{proof}[Proof of~\ref{lem:fc}.3]
  Let \(G\) be a game, and \(C\) be a fulfilled coalition. Let \(\vec{\sigma}^1 =(\vec{\sigma}_C, \vec{\sigma}_{-C})\) be any strategy profile under which \(C\) are fulfilled (that is, every member of \(C\) achieves their goal regardless of what the countercoalition does). If \(\vec{\sigma}^1\) is a member of the core, then we are done. If not, then there exists some coalition \(D \subseteq \Ag \setminus C\) with a beneficial deviation, \(\vec{\sigma}_D^\prime\). Then under the strategy \(\vec{\sigma}^2 = (\vec{\sigma}_C, \vec{\sigma}_D^\prime, \vec{\sigma}_{-(C \cup D)})\), every player in \(C\) gets their goal achieved. As before, if \(\vec{\sigma}^2\) is a member of the core, then we are done. If not, we can repeat this process, yielding a sequence of strategies, \(\vec{\sigma}^1, \vec{\sigma}^2, \vec{\sigma}^3 \ldots\). As the losers of the successive strategy profiles gets strictly smaller each time, this process can only continue a finite number of times. By construction, the final member of the sequence will model the goal of every player in \(C\), and will have no beneficial deviations, making it a member of the core. Thus, we see that there exists some strategy profile in the core such that \(C\) is a subset of the winners of the strategy profile.
\end{proof}

In fact, our proof of~\ref{lem:fc}.3 can be modified to give us a stronger result, namely that the core is always non-empty, a highly desirable game-theoretic property, as it ensures the existence of stable strategy profiles for every game, making them rationally implementable in practice. This contrasts with the conventional formulation of the core in transferable utility games~\cite{Chalkiadakis2011}.

\begin{theorem}\label{thm:emptiness}
    For every game~\(G\), we have \(\coreset{G}\neq\emptyset\).
  \end{theorem}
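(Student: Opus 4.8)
The plan is to adapt the iterative argument used in the proof of Lemma~\ref{lem:fc}.3, but starting from an \emph{arbitrary} strategy profile rather than from one in which a fixed fulfilled coalition is already satisfied. A preliminary remark: since each \(\Ac_i(s)\) is non-empty, strategy profiles exist at all (for instance, memoryless ones), so the question is genuinely whether one of them lies in \(\coreset{G}\), rather than whether the supply of candidates is empty.

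Here is the construction. Pick any strategy profile \(\vec{\sigma}^1\) and set \(D_1 = \emptyset\). Inductively, suppose we have a profile \(\vec{\sigma}^k\) together with a set \(D_k \subseteq \Ag\) such that \(\vec{\sigma}^k\) plays on \(D_k\) exactly the strategies pinned by earlier deviations, and the invariant \(D_k \subseteq \Winners(\rho(\vec{\sigma}^k))\) holds. If \(\vec{\sigma}^k \in \coreset{G}\) we are done. Otherwise some coalition \(C_k\) has a beneficial deviation \(\vec{\sigma}'_{C_k}\) from \(\vec{\sigma}^k\). By the first condition in the definition of beneficial deviation, \(C_k \subseteq \Losers(\rho(\vec{\sigma}^k))\), so the invariant forces \(C_k \cap D_k = \emptyset\); also \(C_k \neq \emptyset\) (we take coalitions to be non-empty). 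Put \(\vec{\sigma}^{k+1} := (\vec{\sigma}'_{C_k}, \vec{\sigma}^k_{-C_k})\) and \(D_{k+1} := D_k \cup C_k\), so \(|D_{k+1}| \geq |D_k| + 1\). The invariant is preserved: for each \(j \leq k\), the second condition of the beneficial deviation that \(C_j\) took from \(\vec{\sigma}^j\) is a statement about \emph{all} completions of \(\vec{\sigma}'_{C_j}\), and in \(\vec{\sigma}^{k+1}\) the members of \(C_j\) still play \(\vec{\sigma}'_{C_j}\) (their strategies were never touched again, since all later deviating coalitions are disjoint from \(D_k \supseteq C_j\)); hence \(C_j \subseteq \Winners(\rho(\vec{\sigma}^{k+1}))\). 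Taking the union over \(j \leq k\) gives \(D_{k+1} \subseteq \Winners(\rho(\vec{\sigma}^{k+1}))\), as required.

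Since \((|D_k|)_k\) is a strictly increasing sequence of natural numbers bounded above by \(|\Ag|\), the iteration cannot continue forever; it must halt at some \(\vec{\sigma}^m\), and halting means exactly that \(\vec{\sigma}^m\) admits no beneficial deviation, that is, \(\vec{\sigma}^m \in \coreset{G}\).

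I expect the one genuinely delicate point --- and the reason this does not reduce to a one-line ``the loser set shrinks'' argument --- to be the bookkeeping of the winner-set invariant under \emph{nested} deviations: players outside all the deviating coalitions can flip between winning and losing as the induced run changes from step to step, so \(\Losers(\rho(\vec{\sigma}^k))\) need not be monotone. The quantity that is monotone is \(D_k\), the set of players who have already deviated and are therefore permanently pinned as winners by the universal ``for all completions'' clause; the crux of the proof is verifying that each new deviating coalition is disjoint from \(D_k\), so that \(D_k\) strictly grows and the finiteness of \(\Ag\) yields termination. A minor but necessary point to state explicitly is that coalitions are non-empty, so that each deviation really does add at least one fresh player to \(D_k\).
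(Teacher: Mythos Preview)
Your proposal is correct and follows essentially the same approach as the paper: the paper's proof is literally ``identical to the proof of Lemma~\ref{lem:fc}.3, but instead of setting \(\vec{\sigma}^1\) to be a strategy profile under which some coalition can be fulfilled, let it be any arbitrary strategy profile, and continue as before.'' Your write-up is in fact more careful than the paper's on the monotonicity point---the paper's Lemma~\ref{lem:fc}.3 proof asserts that ``the losers of the successive strategy profiles gets strictly smaller each time,'' which is not literally true for the reason you identify, whereas your tracking of the strictly increasing set \(D_k\) of already-deviated (hence permanently pinned) winners is the correct invariant.
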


  \begin{proof} 
    Identical to the proof of Lemma~\ref{lem:fc}.3, but instead of setting \(\vec{\sigma}^1\) to be a strategy profile under which some coalition can be fulfilled, let it be any arbitrary strategy profile, and continue as before.
    \end{proof}

As fulfilled coalitions can help us understand the coalition formation power in a game, we will also be interested in the following decision problem about coalitions.
\begin{quote}
    \underline{\fcoalition}:\\
    \emph{Given}: Game \(G\), coalition \(C\subseteq\Ag\).\\
    \emph{Question}: Is \(C\) a fulfilled coalition of~\(G\)?
\end{quote}

In the next section, we will investigate these decision problems, as well as some model-theoretic properties of the \core.

\subsection{Characterising the Core}\label{secn:results-core1}
In this section we will study the complexity of the decision problems introduced in the previous section, and will establish some other properties of the \core. We have already seen in Theorem~\ref{thm:emptiness} that the core is always non-empty, and we shall go on to prove that the satisfaction of an LTL property on some/every outcome in the \core\ is bisimulation-invariant~\cite{Hennessy1985}. These two results sharply contrast with the rational verification problem in the non-cooperative setting: in these, the set of Nash equilibria of a game is not guaranteed to always be non-empty~\cite{Gutierrez2015}, nor does bisimulation-invariance hold in the general case~\cite{Gutierrez2017b}.

The first of our own decision problems we will consider is \fcoalition, which we solve in the general case through a logical characterisation using \atlstar.

\begin{theorem}\label{thm:fc}
    \fcoalition\ is \textsc{PSpace}-complete for one-player games, and it is \textsc{2ExpTime}-complete for games with more than one player.
\end{theorem}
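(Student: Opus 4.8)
The plan is to reduce \fcoalition\ to model checking a single, polynomial‑size \atlstar\ formula, read off the upper bounds from the known complexity of \atlstar\ (and LTL) model checking, and then supply matching hardness reductions.

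\emph{The characterisation.} The first step is the observation that, straight from the definitions, a coalition $C$ is a fulfilled coalition of $G=(M,\AP,\lambda,(\gamma_i)_{i\in\Ag})$ if and only if
\[
(M,\lambda,s^0)\models\ATLEpath{C}\textstyle\bigwedge_{i\in C}\gamma_i .
\]
Indeed, $\bigwedge_{i\in C}\gamma_i$ is an LTL formula, hence an \atlstar\ path formula, and the semantics of $\ATLEpath{C}$ from Section~\ref{secn:prelim} asserts precisely that there is a joint strategy $\vec\sigma_C$ for $C$ such that for every complementary profile $\vec\sigma_{-C}$ the induced run from $s^0$ satisfies $\bigwedge_{i\in C}\gamma_i$ — verbatim the definition of a fulfilled coalition. (Both the fulfilled‑coalition definition and the \atlstar\ semantics quantify over the same class of strategies, so the equivalence is exact.) The formula has size $O(\sum_{i\in C}|\gamma_i|)$, so the whole \atlstar\ model‑checking instance is of polynomial size and computable in polynomial time.

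\emph{Upper bounds.} For games with more than one player, this reduces \fcoalition\ to \atlstar\ model checking, which is in \textsc{2ExpTime}~\cite{Alur2002}. For one‑player games, $C$ is either $\emptyset$ (trivially fulfilled) or $\{1\}$; since there are no other players, fulfilment of $\{1\}$ just says that \emph{some} run from $s^0$ satisfies $\gamma_1$, equivalently $(M,\lambda,s^0)\not\models\Apath\neg\gamma_1$, which is the complement of an LTL model‑checking instance and so decidable in \textsc{PSpace}\ (closed under complementation)~\cite{SistlaC85}.

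\emph{Lower bounds.} For one‑player games I would argue \textsc{PSpace}-hardness directly: the problem above is, up to complementation, LTL model checking, which is \textsc{PSpace}-complete~\cite{SistlaC85}; explicitly, given an instance $(M,s^0,\psi)$ of LTL model checking, view $M$ as a one‑player concurrent game structure (the single player's actions are the outgoing transitions), take $\gamma_1=\neg\psi$, and ask whether $\{1\}$ is fulfilled — this holds iff some run falsifies $\psi$. For games with $\ge 2$ players I would reduce from LTL realizability (synthesis), which is \textsc{2ExpTime}-complete~\cite{Pnueli1989}: given an LTL formula $\varphi$ over atoms partitioned into ``output'' variables $O$ and ``input'' variables $I$, build a two‑player game in which, each round, player~$1$ picks a valuation of $O$ and player~$2$ a valuation of $I$, the successor state records the chosen joint valuation, $\lambda$ returns that valuation, $\gamma_1=\varphi$ and $\gamma_2=\true$; then $C=\{1\}$ is fulfilled iff player~$1$ can enforce $\varphi$ against every behaviour of player~$2$, i.e.\ iff $\varphi$ is realizable. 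The move‑order discrepancy between the simultaneous‑move model here and the Moore/Mealy conventions of synthesis is handled in the standard way (e.g.\ recording a conceptual ``turn'' in the state, or a one‑round delay) and does not affect \textsc{2ExpTime}-hardness.

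\emph{Main obstacle.} The only delicate step is the \textsc{2ExpTime} lower bound: one must set up the realizability‑to‑game encoding so that it honestly matches this paper's game model — simultaneous actions, strategies as state‑based transducers, and the exact semantics of fulfilment — and in particular confirm that the timing subtlety of synthesis introduces at most polynomial overhead. The \atlstar\ characterisation, both upper bounds, and the one‑player lower bound are routine given the cited complexities.
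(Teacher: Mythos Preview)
Your proposal is correct and follows essentially the same route as the paper: the characterisation $(M,\lambda,s^0)\models\ATLEpath{C}\bigwedge_{i\in C}\gamma_i$ is exactly what the paper uses for both upper bounds, and the lower bounds via (existential) LTL model checking and via the two‑player LTL winning‑strategy problem are the same reductions the paper invokes. The only cosmetic difference is that you phrase the \textsc{2ExpTime} lower bound as LTL realizability~\cite{Pnueli1989} whereas the paper cites the equivalent formulation as two‑player LTL games~\cite{Alur2004}; your extra care about the timing/move‑order encoding is more detail than the paper provides, not a different argument.
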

\begin{proof}
    For membership we observe that given a game \(G=(M,\gamma_1,\ldots,\gamma_n)\) and a coalition~\(C\subseteq\Ag\), it is the case that~\(C\) is fulfilled if and only if \(s^0\models\ATLEpath{C}\bigwedge_{i\in C}\gamma_i\) holds. By appealing to \textsc{\text{ATL}\(^*\) Model Checking}, the two upper bounds immediately follow. For the lower bounds, we can reduce the problem of checking for the existence of a winning strategy in a two-player game with LTL goals as defined in~\cite{Alur2004} for \textsc{2ExpTime}-hardness and existential LTL model checking for \textsc{PSpace}-hardness~\cite{SistlaC85}. Note that, as such, \fcoalition\ with more than one player is \textsc{2ExpTime}-hard even for two-player zero-sum games, i.e., for games with $\gamma_1 = \neg\gamma_2$. 
\end{proof}

Fulfilled coalitions give an indication of which stable coalitions may form, but are insufficient to characterise the \core, and therefore, to check \ecore\ and \acore\ properties of a multi-agent system. To do this, we adopt a different strategy and show that these two decision problems are, in general, also \textsc{2ExpTime}-complete.

\begin{theorem}\label{thm:eacore}
    \ecore\ and \acore\ are \textsc{PSpace}-complete for one-player games and \textsc{2ExpTime}-complete for games with more than one player.
\end{theorem}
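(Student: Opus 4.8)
The plan is to establish matching upper and lower bounds for both \ecore\ and \acore. For the upper bounds, the key idea is to express membership of a strategy profile's induced run in the core via an \atlstar\ (or strategy logic) formula that can be model-checked against the game. The crucial observation is this: a strategy profile \(\vec\sigma\) lies in \(\coreset{G}\) if and only if for every coalition \(C\subseteq\Ag\) with \(C\subseteq\Losers(\rho(\vec\sigma))\), coalition \(C\) is not fulfilled (more precisely, \(C\) has no beneficial deviation, which by definition 1--2 of a beneficial deviation means: it is \emph{not} the case that \(C\subseteq\Losers(\rho(\vec\sigma))\) and \(s^0\models\ATLEpath{C}\bigwedge_{i\in C}\gamma_i\)). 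Since the set of losers of \(\rho(\vec\sigma)\) is determined by which \(\gamma_i\) hold along that run, and since a run in the core need only avoid deviations by loser-coalitions, the property ``\(\rho\) is the run of some core profile satisfying \(\varphi\)'' can be phrased as: there is a run \(\rho\) from \(s^0\) satisfying \(\varphi\), together with, for each \(i\), a truth value \(w_i\in\{0,1\}\) recording whether \(\rho\models\gamma_i\), such that \(\rho\models\bigwedge_{i:w_i=1}\gamma_i\wedge\bigwedge_{i:w_i=0}\neg\gamma_i\), and for every coalition \(C\subseteq\{i:w_i=0\}\) we have \(s^0\not\models\ATLEpath{C}\bigwedge_{i\in C}\gamma_i\). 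The last conjunct is a fixed \atlstar\ sentence (a conjunction over the \(2^n\) subsets, each of constant nesting depth), and guessing the \(w_i\) amounts to a disjunction over the \(2^n\) loser-profiles; the existence of an appropriate \(\rho\) is then just \atlstar\ model checking with the outermost quantifier \(\ATLEpath{\Ag}\) (equivalently, plain satisfiability of an LTL/path formula over the game from \(s^0\)). For \acore\ one takes the complementary formula. This yields a single \atlstar\ formula of size exponential in \(|\Ag|\) but polynomial in everything else; however, one must be careful that the blow-up does not push us out of \textsc{2ExpTime}. The cleaner route is to iterate over the (at most \(2^n\)) loser-profiles one at a time, for each one checking in \textsc{2ExpTime} both that the associated constrained run exists and that no sub-coalition is fulfilled, so the overall cost remains \textsc{2ExpTime}. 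For one-player games every \atlstar\ query collapses to LTL model checking / satisfiability over the game, giving \textsc{PSpace}.

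I would organise the upper-bound argument as follows. First, I would state and prove a normal-form lemma: \(\exists\vec\sigma\in\coreset{G}:\rho(\vec\sigma)\models\varphi\) holds if and only if there exists a loser-set \(L\subseteq\Ag\) such that (a) \(s^0\models\ATLEpath{\Ag}\bigl(\varphi\wedge\bigwedge_{i\notin L}\gamma_i\wedge\bigwedge_{i\in L}\neg\gamma_i\bigr)\), and (b) for every \(C\subseteq L\), \(s^0\not\models\ATLEpath{C}\bigwedge_{i\in C}\gamma_i\). The ``only if'' direction is immediate by taking \(L=\Losers(\rho(\vec\sigma))\) and using that \(\vec\sigma\in\coreset G\) forbids beneficial deviations. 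The ``if'' direction needs a small argument: condition (a) gives a run \(\rho\) with exactly the loser-set \(L\) satisfying \(\varphi\); one must exhibit an actual strategy profile inducing \(\rho\) — take any profile whose induced run is \(\rho\), and then run the core-completion argument from the proof of Lemma~\ref{lem:fc}.3 / Theorem~\ref{thm:emptiness}, observing that each repair step only removes losers and never touches a winner's goal, hence never enlarges \(L\); the limit profile is in the core, still induces a run with loser-set \(\subseteq L\) — and here one must check that \(\varphi\) is preserved, which it need not be in general, so the completion step has to be done carefully, perhaps by only allowing repairs that keep \(\rho\) fixed or by appealing to condition (b) to argue no repair of a subset of \(L\) is even possible. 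This is the step I expect to be the main obstacle: reconciling ``the induced run still satisfies \(\varphi\)'' with ``no beneficial deviation exists'' after core-completion. The fix is to note that, given (b), \emph{no} coalition \(C\subseteq L\) is fulfilled, and any coalition not contained in \(L\) contains a winner of \(\rho\) so cannot have a beneficial deviation from the profile inducing \(\rho\) (condition 1 of beneficial deviation fails); hence the profile inducing \(\rho\) is \emph{already} in the core and no completion is needed. This makes the equivalence clean.

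For the lower bounds, I would reduce from the same sources as in Theorem~\ref{thm:fc}. For \textsc{2ExpTime}-hardness with \(\ge 2\) players, reduce from the realizability/winning-strategy problem for two-player LTL games~\cite{Alur2004}: given such a game with player~1 trying to enforce \(\gamma\), build an LTL game \(G\) with goals \(\gamma_1=\gamma\), \(\gamma_2=\neg\gamma\) (zero-sum), and choose \(\varphi\) appropriately so that the answer to \ecore\ (respectively \acore) encodes whether player~1 has a winning strategy; since in this zero-sum game exactly one player wins on any run, the singleton loser always forms a potential deviating coalition, and a core profile exists where player~1 wins iff player~1 is fulfilled. One tunes \(\varphi\) (e.g. a fresh proposition marking a player-1-win) so that the existential/universal core question is equivalent to the \atlstar\ query \(s^0\models\ATLEpath{\{1\}}\gamma\). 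For \textsc{PSpace}-hardness of the one-player case, reduce from LTL model checking (existential for \ecore, its complement for \acore) exactly as in the earlier proofs, noting that with one player the core is just the set of runs maximising that player's goal — or more simply, that the single agent being a trivial coalition makes \ecore\ collapse to LTL satisfiability over the game. I would also remark, paralleling Theorem~\ref{thm:fc}, that \textsc{2ExpTime}-hardness already holds for two-player zero-sum games, and that the upper bound genuinely needs the full power of \atlstar\ model checking because of the nested coalition quantifiers in condition (b).
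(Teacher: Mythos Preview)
Your proposal is correct and follows essentially the same approach as the paper. Your normal-form lemma is precisely the paper's \atlstar\ characterisation \(\varphi_{\ecore}(G,\varphi)\) (you parameterise by the loser-set \(L\), the paper by the winner-set \(W=\Ag\setminus L\), and your condition (b) is the paper's conjunct \(\bigwedge_{L'\subseteq\Ag\setminus W}\ATLApath{L'}\bigvee_{j\in L'}\neg\gamma_j\)); your iterate-over-profiles trick to stay within \textsc{2ExpTime} is exactly what the paper does; and your lower-bound reductions and the \acore-via-complement argument match the paper's as well. If anything, you supply more justification than the paper does: your observation that condition (b) already forces the profile inducing \(\rho\) to lie in the core (so no core-completion is needed and \(\varphi\) is trivially preserved) is the correctness argument the paper leaves implicit, and your remark that the two-player zero-sum reduction relies on determinacy (so that exactly one side is fulfilled) is a detail the paper elides.
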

\begin{proof}
    Let us consider \ecore\ first. For membership we observe that given a game \(G\) and an LTL formula~\(\varphi\), it is the case that \((G,\varphi)\in\ecore\) if and only if \(s^0 \models\varphi_{\ecore}(G,\varphi)\) holds, where \(\varphi_{\ecore}(G,\varphi)\) is the following \atlstar\ formula:
    \begin{equation*}
        \bigvee_{W\subseteq\Ag}  \left(  \ATLEpath{\Ag}   \left(\varphi\wedge \bigwedge_{i\in W}\gamma_i \wedge \bigwedge_{j\in \Ag\setminus W} \neg\gamma_j\right)  \wedge  \bigwedge_{L\subseteq\Ag\setminus W}\ATLApath{L}\bigvee_{j\in L}\neg\gamma_j  \right)
    \end{equation*}
    which states that there is a path in \(G\) that satisfies \(\varphi\) as well as the goals of a set of players \(W\) (the ``winners''), and that for every subset of players \(L\) that do not get their goals achieved in such a path (the ``losers''), it is not the case that those players have a beneficial deviation from the path. As before, we appeal to \textsc{\text{ATL}\(^*\) Model Checking} to obtain the upper bounds; here we need to call a \textsc{2ExpTime} algorithm an exponential number of times---one for each coalition of winners, and accept if any single one of them accepts.

For the lower bounds, as for \fcoalition, we can reduce the problem of checking for the existence of a winning strategy in a two-player game with LTL goals as defined in~\cite{Alur2004} for \textsc{2ExpTime}-hardness and existential LTL model checking for \textsc{PSpace}-hardness~\cite{SistlaC85}. In fact, note that for one-player games, \ecore\ and \fcoalition\ are equivalent when $\phi=\gamma_1$, as well as when $\gamma_1=\phi$, $\gamma_2=\neg\phi$, and $C=\set{1}$, which are the cases that arise in the two reductions above mentioned. 

    Finally, for \acore, note that \((G,\varphi)\not\in\acore\) if and only if \((G,\neg\varphi)\in\ecore\): \((G,\varphi)\not\in\acore\) means it is not the case that for all members of the core \(\vec{\sigma}\) we have \(\rho(\vec{\sigma}) \models \varphi\). Pushing the negation through the quantifier, this implies that there exists a member of the core such that \(\rho(\vec{\sigma}) \not\models \varphi\), or \(\rho(\vec{\sigma}) \models \lnot\varphi\). The same argument can then be applied for the reverse direction. Then, since both \textsc{PSpace} and \textsc{2ExpTime} are deterministic complexity classes, we can conclude that \acore\ is \textsc{PSpace}-complete if \(|\Ag|=1\) and \textsc{2ExpTime}-complete if \(|\Ag|>1\), as it is for \ecore.
\end{proof}

We now study \cmembership\ and \bdeviation. For these two problems we first need to define how we will represent strategy profiles, as at present, strategies are defined as infinitary structures, which map finite histories to players' actions. 

Given this finite representation for strategies, we can establish the complexity of \cmembership\ and \bdeviation.
\begin{theorem}\label{thm:cmembership}
    \cmembership\ is \textsc{PSpace}-complete for one-player games and \textsc{2ExpTime}-complete for games with more than one player.
\end{theorem}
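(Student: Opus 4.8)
The plan is to characterise core membership in terms of fulfilled coalitions, turn this into a bounded family of \atlstar{} model-checking queries for the upper bounds, and obtain the lower bounds by putting a small gadget on top of the hardness reduction already used for \fcoalition. The first step is to unwind the definition of a beneficial deviation: a coalition $C$ has a beneficial deviation from $\vec\sigma$ if and only if $C\subseteq\Losers(\rho(\vec\sigma))$ and $C$ is a fulfilled coalition --- indeed, condition~2 in the definition of a beneficial deviation is verbatim the definition of a fulfilled coalition. Hence
\[
  \vec\sigma\in\coreset{G}\iff\text{no non-empty }C\subseteq\Losers(\rho(\vec\sigma))\text{ is fulfilled,}
\]
which, by the characterisation of fulfilledness used in the proof of Theorem~\ref{thm:fc}, amounts to $s^0\models\ATLApath{C}\bigvee_{i\in C}\neg\gamma_i$ for every non-empty $C\subseteq\Losers(\rho(\vec\sigma))$. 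I would stress that fulfilledness is \emph{not} closed downwards under sub-coalitions (a sub-coalition must guard against a strictly larger set of possibly uncooperative outsiders), so it is genuinely necessary to quantify over all such $C$ and not merely over singletons.

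For the \emph{upper bounds}, the algorithm is: (i) compute the ultimately periodic run $\rho(\vec\sigma)$ induced by the transducer profile $\vec\sigma$ --- its lasso has length at most $|\St|\cdot\prod_{i\in\Ag}|Q_i|$, hence at most singly exponential in the input, and is obtained by plain simulation; (ii) compute $\Losers(\rho(\vec\sigma))$ by model checking each $\gamma_i$ against this lasso; and (iii) for each of the at most $2^{|\Ag|}$ non-empty $C\subseteq\Losers(\rho(\vec\sigma))$, decide $s^0\models\ATLEpath{C}\bigwedge_{i\in C}\gamma_i$ by \atlstar{} model checking, rejecting iff one of these succeeds. Each \atlstar{} query uses a \emph{small} formula (linear in the input), so it costs \textsc{2ExpTime}; performing exponentially many such queries stays inside \textsc{2ExpTime}, and steps (i)--(ii) are dominated by it. It is important to keep the exponentially many subsets \emph{outside} the \atlstar{} formula: folding them into one conjunction would produce an exponential-size formula and, since \atlstar{} model checking is doubly exponential in the formula, a spurious third exponential. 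For a one-player game, $\Losers(\rho(\vec\sigma))\subseteq\set{1}$, the lasso is polynomial, and the only possible query is the one-player instance $s^0\models\ATLEpath{\set{1}}\gamma_1$, which is in \textsc{PSpace}; as \textsc{PSpace} is closed under complement, the whole procedure runs in \textsc{PSpace}.

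For the \emph{lower bounds}, I would reduce from deciding whether player~$1$ has a winning strategy in a two-player zero-sum LTL game $H$ (goals $\gamma,\neg\gamma$), which is \textsc{2ExpTime}-complete~\cite{Alur2004} and was already the source for Theorem~\ref{thm:fc}; for one-player games, from LTL (un)satisfiability~\cite{SistlaC85}. Given $H$, build $G'$ by prepending a fresh start state $s^*$ at which player~$1$ chooses either to move (through a fresh state $s^+$ labelled with a fresh proposition $q$) into the start state of $H$, or to move into a fresh $q$-free sink $\bot$; put $\gamma_1'=\ltlX q\wedge\ltlX\ltlX\gamma$ and $\gamma_2'=\ltlX q\wedge\ltlX\ltlX\neg\gamma$; and let the supplied profile be the one in which player~$1$ always moves to $\bot$ (player~$2$ playing arbitrarily). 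Then $\rho(\vec\sigma)=s^*\bot^\omega$, so $q$ fails from position~$1$ and $\Losers(\rho(\vec\sigma))=\set{1,2}$; the coalitions $\set{2}$ and $\set{1,2}$ are never fulfilled in $G'$ (player~$1$, acting as an outsider, can falsify $\ltlX q$, and $\gamma_1'\wedge\gamma_2'$ is unsatisfiable), whereas $\set{1}$ is fulfilled in $G'$ exactly when player~$1$ can force $q$ (by going through $s^+$) and then force $\gamma$ in $H$, i.e.\ exactly when player~$1$ wins $H$. Thus $\vec\sigma\in\coreset{G'}$ iff player~$1$ does \emph{not} win $H$; since \textsc{2ExpTime} is closed under complement, \cmembership{} is \textsc{2ExpTime}-hard already for two players, and dropping player~$2$ and the $\neg\gamma$ conjunct gives \textsc{PSpace}-hardness for one-player games by the same argument.

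The parts I expect to require the most care are (a) the complexity bookkeeping in the upper bound --- specifically, resisting the temptation to collapse the exponential family of coalitions into a single \atlstar{} formula --- and (b) in the hardness reduction, ensuring that the \emph{given} strategy profile makes the relevant players losers \emph{regardless of the underlying instance}, so that the non-trivial case of the characterisation is always reached; the fresh proposition $q$ is exactly what secures this.
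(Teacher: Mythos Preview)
Your proof is correct and follows essentially the same approach as the paper: both identify that $\vec\sigma\in\coreset{G}$ iff no subset of $\Losers(\rho(\vec\sigma))$ is a fulfilled coalition, obtain the upper bounds by iterating \atlstar{} model checking over those subsets, and obtain the lower bounds by reducing from the winning-strategy problem for two-player zero-sum LTL games (equivalently, from \fcoalition). Your treatment is in fact more careful than the paper's on two points: you explicitly keep the exponential family of coalitions \emph{outside} the \atlstar{} formula (the paper does this implicitly), and your fresh-sink gadget guarantees that the supplied profile makes the relevant players losers regardless of the instance, whereas the paper simply posits ``an outcome that does not satisfy $\varphi$'' without arguing that one can be produced in polynomial time.

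One small quibble: for the one-player lower bound, the natural source is existential LTL \emph{model checking} (as in the paper), not LTL satisfiability---a direct reduction from satisfiability would require a universal structure over $2^{\AP}$, which is exponential in the formula. Your gadget works verbatim once you start from a one-player model-checking instance instead.
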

\begin{proof}
    For membership, we first compute the winners and losers with respect to \(\vec{\sigma}=(\sigma_1,\ldots,\sigma_n)\), the outcome of the game. This can be done in \textsc{PSpace} (it is equivalent to LTL model checking over a ``product automata'' or ``concurrent program''~\cite{Kupferman2000}). Once we have computed \(W\), we can check, for every \(L\subseteq\Ag\setminus W\), whether \(L\) has a beneficial deviation. This is true if and only if \(L\) is a fulfilled coalition. Because this can be checked in \textsc{PSpace} for one-player games and in \textsc{2ExpTime} for games with more than one player, the two upper bounds immediately follow. For the lower bounds, we use Lemma~\ref{lem:fc} and Theorem~\ref{thm:fc} again. Consider the following game. Let \(\varphi\) be a satisfiable LTL formula and \(\vec{\sigma}\) an outcome that does not satisfy \(\varphi\). Then, \((G,\vec{\sigma})\in\cmembership\) if and only if \((G,\set{1})\not\in\fcoalition\), whenever \(\gamma_1=\varphi\) and \(\gamma_j=\neg\varphi\), for every player~\(j\in\Ag\setminus\set{1}\).%
\end{proof}

Let us now consider \bdeviation. This is the only ``easy'' problem for multi-player games: it can be solved in \textsc{PSpace}\@. To show this, we again need to find a different proof strategy. Consider any input instance \((G,\vec{\sigma},\vec{\sigma}_{C}')\) of the problem. We observe that, because \(\vec{\sigma}_{C}'\) is fixed, we can make it part of the arena where the game is played, and then check if players not in \(C\) have a joint strategy for \(\bigvee_{j\in C}\neg\gamma_j\). Due to the definition of beneficial deviation, we also need to check if \(\run(\vec{\sigma})\models\bigwedge_{j\in }\neg\gamma_j\) holds or not.

In other words, the reason why this problem can be solved in \textsc{PSpace} for multi-player games, unlike all other decision problems we have studied so far (which, in general, can be solved in doubly exponential time), is that this decision problem can be reduced to a one-player game (given by coalition \(\Ag\setminus C\)) with an LTL goal (given by \(\gamma_{\Ag\setminus C}=\bigvee_{j\in C}\neg\gamma_j\)) over a ``product arena'' (denoted by \(M_{C}\)) built from a concurrent game structure \(M\) and the joint strategy \(\vec{\sigma}_{C}'\) that we want to check.

\begin{theorem}\label{thm:bdeviation}
    \bdeviation\ is \textsc{PSpace}-complete, even for one-player games.
\end{theorem}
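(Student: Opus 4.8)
The plan is to decide \bdeviation\ by checking its two defining clauses separately, each by a reduction to LTL model checking, and to obtain hardness by a reduction from LTL validity, as foreshadowed in the paragraph preceding the statement.

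\textbf{Upper bound.} Given an instance $(G,\vec\sigma,C,\vec\sigma_C')$, I would check (i)~clause~1, that $C\subseteq\Losers(\rho(\vec\sigma))$; and (ii)~clause~2, that for every complementary profile $\vec\sigma_{-C}'$ we have $C\subseteq\Winners(\rho(\vec\sigma_C',\vec\sigma_{-C}'))$. For (i), $\rho(\vec\sigma)$ is the unique ultimately periodic run induced by the transducers $\vec\sigma$ on $M$, and its lasso lives in the deterministic product of $M$ with those transducers, a transition system whose states are of polynomial size even though there may be exponentially many of them; so deciding, for each $i\in C$, whether $\rho(\vec\sigma)\not\models\gamma_i$ is LTL model checking over a succinctly presented deterministic transition system, which is in \textsc{PSpace} by the standard automata-theoretic method (product with a B\"uchi automaton for $\gamma_i$, nonemptiness tested on the fly). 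For (ii), since $\vec\sigma_C'$ is fixed we fold it into the arena, obtaining a product structure $M_C$ in which only the agents in $\Ag\setminus C$ still move; clause~2 then says exactly that $\Ag\setminus C$ has no joint strategy in $M_C$ forcing $\bigvee_{j\in C}\neg\gamma_j$, and because no opponent remains in $M_C$ this is equivalent to ``no run of $M_C$ satisfies $\bigvee_{j\in C}\neg\gamma_j$'', i.e. the complement of an existential LTL model-checking question over $M_C$ --- again in \textsc{PSpace}, since the states of $M_C$ are also of polynomial size. Accepting iff both (i) and (ii) succeed gives the \textsc{PSpace} bound. The reason this is the one ``easy'' problem in this section is precisely that fixing $\vec\sigma_C'$ collapses the relevant game to a \emph{one-player} LTL game, avoiding both the two-player synthesis blow-up and any enumeration over subsets of losers.

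\textbf{Lower bound.} I would reduce from LTL validity (equivalently, the complement of LTL satisfiability), which is \textsc{PSpace}-complete~\cite{SistlaC85}, paralleling the hardness arguments for \fcoalition. Given an LTL formula $\psi$, build a game with agents $\{1,2\}$ and $C=\{2\}$, so that a single agent lies outside the coalition: from the start state agent~$2$'s action alone decides whether play enters a self-looping ``dump'' state, where a fresh proposition $\mathit{dmp}$ holds, or a region in which agent~$1$ freely produces an $\omega$-word over the propositions of $\psi$; set $\gamma_2=(\temporalAlways\neg\mathit{dmp})\wedge(\temporalNext\psi)$. Take $\vec\sigma$ to be any profile in which agent~$2$ routes to the dump, so that $\gamma_2$ is false on $\rho(\vec\sigma)$ and clause~1 holds unconditionally; and take the prescribed deviation $\vec\sigma_C'$ to be an agent-$2$ strategy routing into the word region, so that the runs of $M_C$ are exactly the $\omega$-words over the propositions of $\psi$. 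Then clause~2 holds iff every such word satisfies $\psi$, i.e. iff $\psi$ is valid, so $\vec\sigma_C'$ is a beneficial deviation iff $\psi$ is valid; this gives \textsc{PSpace}-hardness, already when the residual game $M_C$ is literally one-player. With the upper bound this yields completeness.

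\textbf{Main obstacle.} On the upper-bound side the one thing to watch is that $\rho(\vec\sigma)$ and $M_C$ may be exponentially large, so the automata-emptiness check must be run on the fly rather than by materialising these structures. The genuinely delicate point is the reduction: the two clauses pull in opposite directions --- clause~1 wants the \emph{original} run to falsify the $C$-goals, while clause~2 wants \emph{every} deviation run to satisfy them --- so the construction has to decouple $\rho(\vec\sigma)$, which is built from $\vec\sigma_C$, from the arena $M_C$, which is built from $\vec\sigma_C'$; exploiting that these two strategy vectors for $C$ may differ, and routing them into structurally different parts of the game, is exactly what resolves the tension.
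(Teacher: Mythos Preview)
Your upper bound is essentially the paper's proof: clause~1 is LTL model checking on the (succinctly represented) product of $M$ with the transducers in $\vec\sigma$, and clause~2 is handled by folding $\vec\sigma_C'$ into the arena to form $M_C$ and then checking that no run of $M_C$ satisfies $\bigvee_{j\in C}\neg\gamma_j$, i.e., the complement of existential LTL model checking on a one-player structure. Your remark about running the emptiness test on the fly is exactly the point.

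For the lower bound the paper just writes ``we use LTL model checking'', and your construction captures the same mechanism --- the universal quantifier over $\vec\sigma_{-C}'$ in clause~2 is what carries the hardness --- but there is a genuine gap in your reduction as written. Your ``region in which agent~1 freely produces an $\omega$-word over the propositions of $\psi$'' needs one state per truth assignment to $\AP_\psi$, hence $2^{|\AP_\psi|}$ states, and this is not polynomial in $|\psi|$; so the reduction from LTL \emph{validity} is not a polynomial-time reduction. The easy repair, and what the paper's one-line proof points to, is to reduce from (universal) LTL \emph{model checking} over a given Kripke structure $(K,\varphi)$: replace your free-word region by $K$ itself (polynomial, part of the input), let agent~1 resolve the nondeterminism of $K$, and set $\gamma_2$ so that clause~2 holds iff every path of $K$ from its initial state satisfies $\varphi$. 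The rest of your gadget (the dump state, the routing by agent~2, the choice of $\vec\sigma$ versus $\vec\sigma_C'$) then works unchanged.

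One further remark on the ``even for one-player games'' qualifier. Your reduction uses $|\Ag|=2$ and you are careful to say only that the \emph{residual} arena $M_C$ is one-player; that caution is warranted. For a literal one-player input game with $C=\{1\}$, both clauses collapse to evaluating an LTL formula on a single ultimately periodic run, which is polynomial, so \textsc{PSpace}-hardness genuinely needs at least one agent outside $C$. The paper's phrasing is best read against the paragraph preceding the theorem, which says the problem \emph{reduces to} a one-player LTL game on $M_C$; it is not claiming hardness for input games with $|\Ag|=1$.
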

\begin{proof}
    Checking that \(\run(\vec{\sigma})\models\bigwedge_{j\in C}\neg\gamma_j\) holds can be done in \textsc{PSpace}\@. Again, this is equivalent to model checking LTL formulae over a ``product automata'' or ``concurrent program''~\cite{Kupferman2000}. If the statement does not hold, then, by definition, \(\vec{\sigma}'_{C}\) is not a beneficial deviation, as at least one player in~\(C\) already has its goal satisfied by \(\vec{\sigma}\). If the statement holds, then we check that \(\run(\vec{\sigma}'_{-C},\vec{\sigma}'_{C})\models\bigwedge_{j\in C}\gamma_j\) holds, for all joint strategies~\(\vec{\sigma}'_{-C}\) for players not in \(C\). We do this in \textsc{PSpace} by checking whether it is not the case that \((M_{C}, \valf', s^{0'}) \models\bigvee_{j\in C}\neg\gamma_j\) holds, where \(M_{C}=(\Ag',\Ac',\St',s^{0'},\transf')\) is the concurrent game structure defined as follows:
    \begin{itemize}
        \item \(\Ag'=\set{0}\), \(\Ac' = \Pi_{i\in \Ag\setminus C} \Ac_i\);
        \item \(\St' = \St \times \Pi_{j \in C}Q_j\);
        \item \(s^{0'} = (s^0,q^0_x,\ldots,q^0_y)\), such that \(\sigma_z=(Q_z,q^0_z,\strtransf_z,\stroutf_z)\), \(\vec{\sigma}'_{C} = (\sigma_x,\ldots,\sigma_y)\), and \(z\in\set{x,\ldots,y}\);
        \item \(\transf'((s,q_x,\ldots,q_y),(a,\ldots,b)) = (s',q'_x,\ldots,q'_y)\) such that
              \begin{itemize}
                  \item \(s' = \transf(s,\stroutf(q_x),\ldots,\stroutf(q_y),a,\ldots,b)\), and
                  \item \(q'_z = \strtransf(q_z,s)\), with \(z\in\set{x,\ldots,y}\).
              \end{itemize}
    \end{itemize}
    and \(\valf'\) is defined as,
    \begin{equation*}
        \valf'(s,q_x,\ldots,q_y) = \valf(s).
    \end{equation*}
    In other words, \(M_{C}\) transitions just like \(M\) save that it is restricted to the behaviour already defined by \(\vec{\sigma}'_{C}\).

    For the lower bound we use LTL model checking. 
\end{proof}

In addition to the above complexity results, we also have a model-theoretic result. Before we can state it, however, we need to define the notion of bisimilarity~\cite{Milner1989,Hennessy1985}. Let \(M\) and \(M^\prime\) be two concurrent game structures with the same agents \(\Ag\), actions \(\Ac\) and atomic propositions \(\AP\). Moreover, let their respective sets of states and transition functions be denoted by \(\St/\St^\prime\) and \(\transf/\transf^\prime\) respectively. Finally, let \(\lambda\) be a labelling function on \(\St\) and \(\lambda^\prime\) a labelling function of \(\St^\prime\). Then a \emph{bisimulation} between \(s^* \in \St\) and \(t^* \in \St^\prime\) is a non-empty binary relation, \(\sim \subseteq \St \times \St^\prime\), such that,
\begin{itemize}
\item We have \(s^* \sim t^*\);
\item For all \(s \in \St\) and for all \(t \in \St^\prime\), if \(s \sim t\) then \(\lambda(s) = \lambda^\prime(t)\);
\item For all \(s_1, s_2 \in \St\) and for all \(t_1 \in \St^\prime\), if \(s_1 \sim t_1\) and \(\transf(s_1, \ac) = s_2\) for some \(\ac \in \Ac\), then \(\transf^\prime(t_1, \ac) = t_2\) for some \(t_2 \in \St^\prime\) with \(s_2 \sim t_2\);
  \item For all \(s_1 \in \St\) and for all \(t_1, t_2 \in \St^\prime\), if \(s_1 \sim t_1\) and \(\transf(t_1, \ac) = t_2\) for some \(\ac \in \Ac\), then \(\transf^\prime(s_1, \ac) = s_2\) for some \(s_2 \in \St^\prime\) with \(s_2 \sim t_2\).
  \end{itemize}

We say two concurrent game structures are bisimilar if their start states are bisimilar. For further details of bisimilarity over concurrent game structures, refer to~\cite{Gutierrez2017b}. We then say that a property \(P\) is bisimulation-invariant when if \(M \sim N\), then \(P\) holds on \(M\) if and only if \(P\) holds on \(N\). 

We are now in a position to state our result: informally, we have that checking whether an LTL formula is satisfied by some outcome in the \core\ is a bisimulation-invariant property. This result is easy, and follows directly from the membership proof of \ecore.

\begin{corollary}\label{cor:bisim}
    Let \(G=(M,\gamma_1,\ldots,\gamma_n)\) be a game, \(\varphi\) be an LTL formula, and~\(M'\) be a concurrent game structure that is bisimilar to \(M\). Then, \((G,\varphi)\in\ecore\) if and only if \((G',\varphi)\in\ecore\), where \(G'=(M',\gamma_1,\ldots,\gamma_n)\).
\end{corollary}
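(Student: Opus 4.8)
The plan is to reuse, almost verbatim, the membership direction of Theorem~\ref{thm:eacore}. There we showed that $(G,\varphi)\in\ecore$ if and only if $(M,\lambda,s^0)\models\varphi_{\ecore}(G,\varphi)$, where $\varphi_{\ecore}(G,\varphi)$ is the \atlstar\ formula
\begin{equation*}
    \bigvee_{W\subseteq\Ag}  \left(  \ATLEpath{\Ag}   \left(\varphi\wedge \bigwedge_{i\in W}\gamma_i \wedge \bigwedge_{j\in \Ag\setminus W} \neg\gamma_j\right)  \wedge  \bigwedge_{L\subseteq\Ag\setminus W}\ATLApath{L}\bigvee_{j\in L}\neg\gamma_j  \right).
\end{equation*}
The first and decisive observation is that this formula mentions only the agent set $\Ag$, the LTL formula $\varphi$, and the goal formulae $\gamma_1,\ldots,\gamma_n$; it says nothing at all about the underlying transition structure. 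Since $M$ and $M'$ share the same $\Ag$, $\Ac$ and $\AP$ (this is part of what it means to be bisimilar here), and since $G$ and $G'$ carry the same goals and we test the same $\varphi$, the formulae $\varphi_{\ecore}(G,\varphi)$ and $\varphi_{\ecore}(G',\varphi)$ are literally one and the same \atlstar\ formula; write $\Psi$ for it. The corollary therefore reduces to the single claim that $(M,\lambda,s^0)\models\Psi$ if and only if $(M',\lambda',s^{0\prime})\models\Psi$, where $s^{0\prime}$ is the start state of $M'$.

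The second step is to discharge this claim by appealing to bisimulation-invariance of \atlstar\ over concurrent game structures, for the notion of bisimulation defined above~\cite{Alur2002,Gutierrez2017b}. (The bisimulation used here requires matching transitions on the \emph{full} action profile, so it refines the alternating bisimulation of~\cite{Alur2002}, and the invariance result applies a fortiori.) If one prefers to argue directly, the standard route is an induction on the structure of \atlstar\ state subformulae showing that $\sim$-related states satisfy exactly the same state formulae: the atomic and Boolean cases are immediate from the clause $s\sim t \Rightarrow \lambda(s)=\lambda'(t)$, and for $\ATLEpath{C}\psi$ one takes a witnessing joint strategy for $C$ in one structure and transports it across $\sim$ to a joint strategy for $C$ in the other, arranging that the run it induces against any counter-strategy is stepwise $\sim$-related to a run of the original strategy; since $\psi$ is an LTL (hence label-dependent) path formula, the two runs satisfy it simultaneously. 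The $\ATLApath{C}$ case is the dual. Combining the two steps yields the corollary. As an aside, the matching statement for \acore\ comes for free, since $(G,\varphi)\notin\acore$ iff $(G,\neg\varphi)\in\ecore$ (shown in the proof of Theorem~\ref{thm:eacore}), so one just applies the above to $\neg\varphi$.

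The one place that needs care is the strategy-transport step inside the \atlstar-invariance argument, because this paper models strategies as functions of \emph{state} histories rather than action-profile histories (see the footnote in Section~\ref{secn:prelim} and~\cite{Gutierrez2017b}): a coalition strategy for $G$ lives on $\St^{+}$, whereas for $G'$ it must live on ${\St'}^{+}$, so transport has to go through a ``shadow'' map sending each $M'$-history to a stepwise-$\sim$-related $M$-history rather than being the identity. What makes this go through here, in contrast to the Nash-equilibrium setting, is that the \ecore\ characterisation $\Psi$ only ever uses strategy quantification to talk about a single induced run at a time, so no analogue of the known obstruction to bisimulation-invariance of Nash equilibria arises.
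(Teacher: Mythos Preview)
Your proposal is correct and follows essentially the same approach as the paper: both argue that \ecore\ is captured by the \atlstar\ formula \(\varphi_{\ecore}(G,\varphi)\) from Theorem~\ref{thm:eacore}, and then invoke bisimulation-invariance of \atlstar\ to conclude that \((M,\lambda,s^0)\models\varphi_{\ecore}(G,\varphi)\) iff \((M',\lambda',s^{0\prime})\models\varphi_{\ecore}(G',\varphi)\). The paper's own proof is a two-line appeal to these two facts; you supply additional detail (that \(\varphi_{\ecore}(G,\varphi)=\varphi_{\ecore}(G',\varphi)\) as formulae, a sketch of the invariance argument, and the state-history subtlety), but the underlying route is identical.
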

\begin{proof}
    Follows from the fact that \atlstar\ is bisimulation-invariant, and that the \core\ can be characterised in \atlstar\ using \(\varphi_{\ecore}\), as defined in the proof of Theorem~\ref{thm:eacore}. More specifically, it follows from the fact that \((M, \lambda, s^0)\models\varphi_{\ecore}(G,\varphi)\) if and only if \((M', \lambda', s^{0\prime})\models\varphi_{\ecore}(G',\varphi)\).
  \end{proof}  

\subsection{On Credible Coalition Formation: The Strong Core}\label{secn:strong-core}

As we noted above, our definition of the core assumes worst-case reasoning: a deviation must be beneficial against \emph{all} counter-responses. This definition is robust in the sense that any core-stable outcome is stable in a very strong sense, but one could argue that in some cases it is \emph{too} strong. In particular, when a coalition \(C\) is contemplating a deviation \(\vec{\sigma}_{C}\), it can surely assume that the remaining players will not act against their own interests. Thus, one could argue that a deviation need not be beneficial for \emph{all} behaviours of the remaining players, but only those behaviours that are \emph{credible}, in the sense that the remaining players might rationally choose them. To make this discussion concrete, consider the following example.
\begin{example}\label{ex:example-2}
  Suppose we have a two-player game \(G\), with a start state, \(s_0\) and three sink states, \(s_1, s_2\), and \(s_3\). Each player has two actions available to them, \(a\) and \(b\), and the transition function from the start state is defined as follows:
    \begin{align*}
        \transf(s_0, (a, a)) & = s_1, \\
        \transf(s_0, (a,b))  & = s_2, \\
        \transf(s_0, (b, a)) & = s_3, \\
        \transf(s_0, (b,b))  & = s_3
    \end{align*}
    Thus the game is as illustrated in Figure~\ref{fig:non-credible}. Additionally, suppose that the infinite run that ends up in \(s_1\) is the one run which satisfies player one's goal, and the run that ends up in \(s_2\) is the one which satisfies player two's goal.  Now, in this game, the run which ends up in \(s_1\) lies in the core, but with the use of a non-credible (punishing) strategy by player 1. Notice that the only possible deviation from \((a, a)\) for player 2 is to play \(b\), to which player 1 could respond by also playing \(b\). Although this behaviour would prevent player 2 from achieving its goal, such a way of playing can be regarded as not rational for player 1 given their preference relation: player 1 certainly prefers the run which ends in \(s_1\) over the other two possible runs, but is indifferent otherwise.

    \begin{figure}[H]\label{fig:non-credible}
        \centering
        \begin{tikzpicture}[state/.style={circle, draw, minimum size=1cm}, every node/.append style={transform shape}, node distance=1.5cm]
			\node		 (start) {} ;
			\node[state] (s_0) at ($(start) + (90:1.5cm) $) {\(s_0\)};
			\node[state] (s_1) [right = of s_0] {\(s_1\)};
			\node[state] (s_2) [above = of s_0] {\(s_2\)};
			\node[state] (s_3) [left = of s_0] {\(s_3\)};

			\draw [-{Latex[width=2mm]}]
			(start) edge[] (s_0)
			(s_0) edge[] node[above] {\((a, a)\)}  (s_1)
			(s_0) edge[] node[right] {\((a, b)\)}  (s_2)
			(s_0) edge[] node[above] {\((b, b)\)} node[below] {\((b, a)\)} (s_3)

			(s_1) edge[loop, out=20, in=340, distance=1cm] node[right]{\(*\)} (s_1)
			(s_2) edge[loop, out = 110, in=70, distance=1cm] node[above] {\(*\)} (s_2)
			(s_3) edge[loop, out=200, in=160, distance=1cm] node[left]{$*$} (s_3)
			;
\end{tikzpicture}
        \caption{A game with a non-credible strategy.}
    \end{figure}
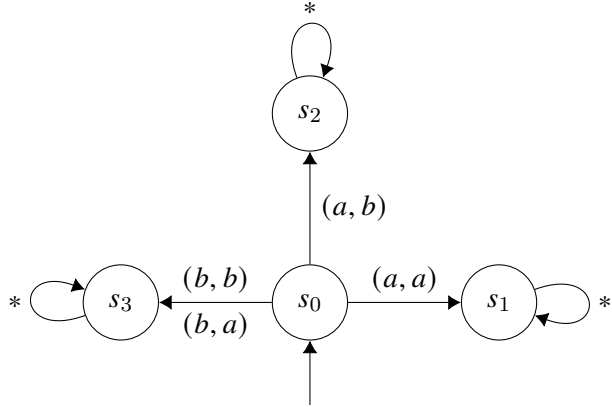
\end{example}
Motivated by this phenomenon, we now present a stronger definition for the \core\@. More specifically, with this new definition we require that if a coalition \(C\) wants to deviate from a given strategy profile, then the remaining players can only credibly threaten \(C\) when they have a counter-response in which both at least one player in \(C\) does not get its goal achieved and every winner in the original strategy profile remains a winner in the new one, \emph{i.e.}, the counter-coalition act in accordance with their preference relations. In this solution concept, we are thus capturing the idea of remaining players being willing to punish deviators, but only up to a point: those left behind would prefer not to do worse than they were doing originally. We do not claim that this solution concept is always appropriate, as is the case with all solution concepts in cooperative games with externalities. But it captures another interesting variation of how a non-deviating coalition might respond to a deviation.

We then reformulate the definition of a beneficial deviation, and now say that a deviation \(\vec{\sigma'}_{C}\) is a \emph{strong beneficial deviation} from \(\vec{\sigma}\) if the following conditions hold:
\begin{enumerate}
    \item \(C \subseteq \Losers(\vec{\sigma})\);
    \item \(C\subseteq\Winners(\vec{\sigma}_{-C},\vec{\sigma'}_{C})\);
    \item For every joint strategy \(\vec{\sigma}_{-C}'\) for \(\Ag\setminus C\), we have that if \(\Winners(\vec{\sigma})\subseteq\Winners(\vec{\sigma}_{-C}',\vec{\sigma}_{C}')\), then \(C\subseteq\Winners(\vec{\sigma}_{-C}',\vec{\sigma}_{C}')\).
\end{enumerate}

With this definition in place we say that the \textbf{strong core} of a game, denoted \(\ccoreset{G}\), is the set of outcomes of \(G\) that admit no strong beneficial deviation. Then, with respect to Example~\ref{ex:example-2}, we see that while~\(\vec{\sigma}_{a_1a_1}\) is in \(\coreset{G}\), it is not the case that~\(\vec{\sigma}_{a_1a_1}\) is in \(\ccoreset{G}\), since player~2 can now strongly beneficially deviate from \(\vec{\sigma}_{a_1a_1}\) to \(\vec{\sigma}_{a_1b_1}\).

It is worth pausing to reflect for a moment on the issue of formulating the core in the presence of externalities. The game theory literature on this topic is very large. The reason is that the existence of externalities leads to many different definitions of stable behaviour (see, \emph{e.g.},~\cite{OsborneR94,yi1997stable,uyanik2015nonemptiness,Finus2003} for many variants of the \core). Here, we propose one definition but by no means do we claim it is the only possibility. Essentially, with our definition, we require that for a punishing joint strategy to be credible, winners must remain winners after the presenting the threat.

We will now study the complexity of the decision problems defined in previous sections, but with respect to the strong core. There are four decision problems whose definition depends on the nature of the \core: \ecore, \acore, \cmembership, and \bdeviation. We will use the same names for these problems, with the understanding that results in this section are with respect to the strong core. As we will show next, these four problems have the same complexities as with \core, but require a more complex logical characterisation, which we provide here using the two-alternation fragment of Strategy Logic (SL)~\cite{MogaveroMPV14}.\footnote{We were unable to find a logical characterisation of the strong core using \atlstar\@. In fact, we believe that such a logical characterisation in \atlstar\ is not possible for multi-player games.}

\begin{theorem}\label{thm:core-membership}
    Given a game \(G=(M,\gamma_1,\ldots,\gamma_n)\) and LTL formula \(\varphi\), we have \((G,\varphi)\in\ecore\) if and only if \(M\models\varphi^+_{\ecore}(G,\varphi)\), where \(\varphi^+_{\ecore}(G,\varphi)\) is the SL formula:
    \begin{align*}
        \varphi^+_{\ecore}(G,\varphi) & =  \bigvee_{W\subseteq\Ag}\SLE{\Ag}\left( \varphi \wedge \bigwedge_{i\in W} \gamma_i \wedge \bigwedge_{j\in\Ag\setminus W} \neg\gamma_j \wedge \bigwedge_{C\subseteq\Ag\setminus W} \varphi_{\mathrm{NoBD}}(G,W,C) \right)
    \end{align*}
    and the formula \(\varphi_{\mathrm{NoSBD}}(G,W,C)\) is defined as follows:
    \begin{align*}
        \varphi_{\mathrm{NoSBD}}(G,W,C) & = \SLA{C} \left(\bigwedge_{j\in C} \gamma_j \rightarrow \SLE{\Ag\setminus C} \left( \bigwedge_{i\in W}\gamma_i \wedge \bigvee_{j\in C}\neg\gamma_j \right)\right)
    \end{align*}
\end{theorem}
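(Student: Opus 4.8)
The plan is to mirror the proof of Theorem~\ref{thm:eacore}, but replacing the ``no beneficial deviation'' conjuncts in the $\atlstar$ characterisation with SL sub-formulae that correctly encode the \emph{strong} beneficial deviation condition. The key observation is that the definition of strong core differs from the core only in the third clause of the deviation definition: the counter-coalition $\Ag\setminus C$ is now restricted to \emph{credible} responses (those in which every original winner remains a winner). Unlike the plain maximin condition --- which $\atlstar$ can express directly via $\ATLApath{L}\bigvee_{j\in L}\neg\gamma_j$ --- this credibility restriction requires nesting an existential strategy quantification (for $\Ag\setminus C$'s response) inside a universal one (for $C$'s deviation), with a conditional in between. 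That is exactly what $\varphi_{\mathrm{NoSBD}}(G,W,C)$ does: reading it, $\SLA{C}$ quantifies over all deviations by $C$; the antecedent $\bigwedge_{j\in C}\gamma_j$ isolates those deviations that would actually make $C$ all winners (clause~2 of strong beneficial deviation); and then $\SLE{\Ag\setminus C}$ asserts the existence of a credible counter-response ($\bigwedge_{i\in W}\gamma_i$, i.e.\ the original winners stay winners) under which at least one member of $C$ loses ($\bigvee_{j\in C}\neg\gamma_j$), thereby defeating the deviation. So the formula says precisely: \emph{no deviation by $C$ is a strong beneficial deviation}.

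The proof itself is then a verification argument in two directions, very close in spirit to the membership part of Theorem~\ref{thm:eacore}. First I would fix notation: assume $\vec{\sigma}\in\ccoreset{G}$ with $\rho(\vec{\sigma})\models\varphi$; let $W=\Winners(\rho(\vec{\sigma}))$. Then I would show $M,\emptyset,s^0$ satisfies the $W$-disjunct of $\varphi^+_{\ecore}$: the strategy profile $\vec{\sigma}$ witnesses $\SLE{\Ag}$; by construction of $W$ it makes $\varphi\wedge\bigwedge_{i\in W}\gamma_i\wedge\bigwedge_{j\notin W}\neg\gamma_j$ true on the induced run; and for every $C\subseteq\Ag\setminus W$, $\varphi_{\mathrm{NoSBD}}(G,W,C)$ holds because $C\subseteq\Losers(\rho(\vec{\sigma}))$ (as $C$ is disjoint from $W$) and the absence of a strong beneficial deviation for $C$ from $\vec{\sigma}$ translates, clause by clause, into the truth of the $\SLA{C}(\ldots\rightarrow\SLE{\Ag\setminus C}\ldots)$ formula --- here I'd carefully match: a deviation $\vec{\sigma}'_C$ making the antecedent true is one satisfying clauses~1 and~2 of strong beneficial deviation, and the existence of the credible punishing response is exactly the negation of clause~3. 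Conversely, if $M\models\varphi^+_{\ecore}$, pick the satisfied disjunct $W$ and the witnessing profile $\vec{\sigma}$; argue $\Winners(\rho(\vec{\sigma}))=W$ from the $\bigwedge\gamma_i\wedge\bigwedge\neg\gamma_j$ conjuncts, $\rho(\vec{\sigma})\models\varphi$, and then that no coalition has a strong beneficial deviation: any such deviating coalition $C$ would have to satisfy $C\subseteq\Losers(\rho(\vec{\sigma}))=\Ag\setminus W$, so the corresponding $\varphi_{\mathrm{NoSBD}}(G,W,C)$ conjunct rules it out.

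The one genuine subtlety --- the step I expect to be the main obstacle --- is the interaction between the \emph{semantics of SL over transducer strategies} and the quantifier alternation. SL's $\SLE{}$ and $\SLA{}$ range over \emph{all} strategies (functions $\St^+\to\Ac_i$), and one must check that the characterisation is still faithful when, for complexity purposes, we later restrict to finite-memory strategies; this is fine here because the strong core is itself defined over all strategies and the relevant goals are LTL (for which finite-memory strategies suffice, as noted in the preliminaries), but I would state this explicitly. A second point of care: in $\varphi_{\mathrm{NoSBD}}$ the bound variables for agents in $C$ are quantified universally and then the \emph{same} agents in $C$ must retain those strategies while $\Ag\setminus C$'s strategies are chosen existentially --- I need to confirm the $\SLE{C}/\SLA{C}$ abbreviation machinery (which expands to $\EExs{x}(i,x)$-style bindings) composes so that the inner $\SLE{\Ag\setminus C}$ genuinely rebinds only the complementary agents, leaving $C$'s bindings from the outer $\SLA{C}$ intact; this follows from the assignment-update semantics but deserves a sentence. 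I would also note, as the paper's footnote flags, that no $\atlstar$ characterisation is available precisely because of this essential $\forall\exists$ alternation over strategies, which $\atlstar$'s single-level cooperation modality cannot capture for multi-player games. Finally, I would observe that since $\varphi^+_{\ecore}$ lies in the two-alternation fragment of SL, and model checking that fragment is decidable (in elementary, indeed the bounds quoted later), this yields the claimed complexity for \ecore\ with respect to the strong core; the remaining problems \acore, \cmembership, and \bdeviation\ then follow by the same dualisation and sub-game arguments used for the plain core.
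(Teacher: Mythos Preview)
Your proposal is correct and follows essentially the same approach as the paper: both arguments amount to unfolding the SL formula and matching each conjunct/quantifier against the clauses in the definition of a strong beneficial deviation. Your version is in fact more detailed than the paper's proof---which is an informal one-paragraph explanation of what each sub-formula expresses---since you explicitly carry out both directions of the equivalence and flag the binding-semantics point (that the inner \(\SLE{\Ag\setminus C}\) rebinds only the complement while \(C\)'s strategies from \(\SLA{C}\) persist, so the antecedent \(\bigwedge_{j\in C}\gamma_j\) is evaluated against the \emph{original} counter-strategies, matching clause~2); the paper leaves these checks implicit.
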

\begin{proof}
    This SL formula expresses that in the concurrent game structure, there exists a path \(\SLE{\Ag}\left(\ldots\right)\) under which
    \begin{enumerate}
        \item The formula \(\varphi\) holds;
        \item Some players get their goals achieved: \(\bigwedge_{i\in W}\gamma_i\);
        \item The remaining players do not: \(\bigwedge_{j\in \Ag\setminus W}\neg\gamma_j\);
        \item No coalition of losers has a strong beneficial deviation: \(\bigwedge_{C\subseteq\Ag\setminus W} \varphi_{\mathrm{NoSBD}}(G,W,C)\).
    \end{enumerate}

    We express the condition of a coalition of losers \(C\) not having a strong beneficial deviation with the SL formula \(\varphi_{\mathrm{NoSBD}}(G,W,C)\); this is broken down as follows: for every joint strategy of~\(C\), if every player in \(C\) is better off \(\left(\bigwedge_{j\in C}\gamma_j\right)\), then the coalition of players outside \(C\) have a joint strategy \(\left(\SLE{\Ag\setminus C}\ldots\right)\) such that both the winners in the original outcome remain winners after the threat is presented \(\left(\bigwedge_{i\in W}\gamma_i\right)\), and at least one player in the deviating coalition, \(C\), does not get its goal achieved \(\left(\bigvee_{j\in C}\neg\gamma_j\right)\).
\end{proof}
At this point, we would like to make a couple of observations. First, that the complexity of checking SL formulae is non-elementary and depends on the alternation-depth of the formula (\cite{MogaveroMPV14}): SL formulae of alternation-depth~\(n\) can be checked in \((n+1)\)-\textsc{ExpTime}, and in \textsc{PSpace} for formulae that are semantically equivalent to CTL\(^*\) formulae. Since \(\varphi^+_{\ecore}(G,\varphi)\) is an SL formula with two alternations, it can be checked in \textsc{3ExpTime} (and in \textsc{PSpace} if \(|\Ag|=1\)). Second, we also would like to recall that finite-state machine strategies, as those we use here, can be characterised in LTL using the technique presented in~\cite{Gutierrez2015,Gutierrez2017a}. Using these logical characterisations, we can obtain the following complexity results. 

\begin{theorem}\label{thm:credible}
    For multi-player games, while \ecore\ and \acore\ are in \textsc{3ExpTime}, \cmembership\ is \textsc{2ExpTime}-complete and \bdeviation\ is \textsc{PSpace}-complete. For one-player games, all problems are \textsc{PSpace}-complete.
\end{theorem}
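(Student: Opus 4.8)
The plan is to obtain all the upper bounds from the SL-characterisation of the strong core in Theorem~\ref{thm:core-membership}, together with the same ``bake the fixed strategies into the arena'' trick that drove Theorems~\ref{thm:cmembership} and~\ref{thm:bdeviation} for the ordinary core, and to obtain the lower bounds by transferring the reductions used there. For \ecore\ it suffices, by Theorem~\ref{thm:core-membership}, to model-check the SL sentence $\varphi^+_{\ecore}(G,\varphi)$ on $M$. Along its deepest branch its quantifier prefix is an $\exists$-block ($\SLE{\Ag}$), then a $\forall$-block ($\SLA{C}$), then an $\exists$-block ($\SLE{\Ag\setminus C}$), so it has alternation depth two, and its size is polynomial in $G$ (the goals and the guard $\varphi$ occur only as subformulae); by the SL model-checking bound of~\cite{MogaveroMPV14} depth-two formulae are checkable in \textsc{3ExpTime}, giving that \ecore\ is in \textsc{3ExpTime}. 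Since $(G,\varphi)\in\acore$ iff $(G,\neg\varphi)\notin\ecore$ and \textsc{3ExpTime} is deterministic (hence closed under complement), \acore\ is in \textsc{3ExpTime} too. When $|\Ag|=1$ every strategy quantifier in $\varphi^+_{\ecore}$ ranges over the single player, so the formula is semantically equivalent to a CTL\(^*\) (in fact LTL) formula and is checkable in \textsc{PSpace}, with \textsc{PSpace}-hardness inherited from LTL model checking exactly as in Theorem~\ref{thm:eacore}.

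For \cmembership\ on input $(G,\vec{\sigma})$ I would first compute $W=\Winners(\vec{\sigma})$: the run $\run(\vec{\sigma})$ is ultimately periodic and its lasso is producible on the fly by simulating the transducers of $\vec{\sigma}$, so this is LTL model checking over a succinctly represented word, hence in \textsc{PSpace}. Then $\vec{\sigma}\in\ccoreset{G}$ iff no coalition $C\subseteq\Ag\setminus W$ has a strong beneficial deviation from $\vec{\sigma}$, and for a fixed such $C$ (for which clause~1 of the definition is automatic) this is the single-alternation statement that there is $\vec{\sigma}'_{C}$ with $C\subseteq\Winners(\vec{\sigma}_{-C},\vec{\sigma}'_{C})$ and such that, for every $\vec{\sigma}'_{-C}$, $\Winners(\vec{\sigma})\subseteq\Winners(\vec{\sigma}'_{-C},\vec{\sigma}'_{C})$ implies $C\subseteq\Winners(\vec{\sigma}'_{-C},\vec{\sigma}'_{C})$. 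Baking the transducers of $\vec{\sigma}$ into the arena --- using that finite-state strategies are LTL-definable~\cite{Gutierrez2015,Gutierrez2017a}, as in the proof of Theorem~\ref{thm:bdeviation} --- turns this into an SL model-checking query of alternation depth one over an arena that, though exponential, is succinctly represented; by~\cite{MogaveroMPV14} a depth-one query is in \textsc{2ExpTime}, and since the model size enters the running time only singly exponentially while the arena grows over the input only singly exponentially, the cost stays in \textsc{2ExpTime}; ranging over the at most exponentially many $C$ preserves this. It is exactly the loss of the outermost $\SLE{\Ag}$-block --- alternation two down to one --- that puts \cmembership\ in \textsc{2ExpTime} rather than \textsc{3ExpTime}. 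For the lower bound, the reduction of Theorem~\ref{thm:cmembership} transfers: with $\varphi$ satisfiable, $\run(\vec{\sigma})\not\models\varphi$, $\gamma_1=\varphi$ and $\gamma_j=\neg\varphi$ for $j\ne1$, one has $W=\Ag\setminus\set{1}$, the only coalition of losers is $\set{1}$, and $\set{1}$ has a strong beneficial deviation iff it can enforce $\varphi$ against every counter-strategy (the ``winners remain winners'' clause being vacuous in this game), i.e.\ iff $(G,\set{1})\in\fcoalition$; so \cmembership\ is \textsc{2ExpTime}-hard by Theorem~\ref{thm:fc}. For $|\Ag|=1$ there is no nontrivial coalition, the displayed condition degenerates to LTL model checking, and \cmembership\ is \textsc{PSpace}-complete (hardness from LTL model checking).

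For \bdeviation\ on input $(G,\vec{\sigma},C,\vec{\sigma}'_{C})$, clauses~1 and~2 of the definition are LTL model checking over the ultimately periodic runs $\run(\vec{\sigma})$ and $\run(\vec{\sigma}_{-C},\vec{\sigma}'_{C})$ --- both simulable on the fly --- and $W=\Winners(\vec{\sigma})$ is computed likewise in \textsc{PSpace}. Clause~3 fails iff some $\vec{\sigma}'_{-C}$ yields a run satisfying $\bigwedge_{i\in W}\gamma_i\wedge\bigvee_{j\in C}\neg\gamma_j$; since $\vec{\sigma}'_{C}$ is fixed, this is a one-player realisability question over the product structure $M_{C}$ built in the proof of Theorem~\ref{thm:bdeviation}, equivalently LTL model checking of $\bigvee_{i\in W}\neg\gamma_i\vee\bigwedge_{j\in C}\gamma_j$ over the succinctly represented $M_{C}$, hence in \textsc{PSpace}~\cite{Kupferman2000}. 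Composing these constantly many \textsc{PSpace} subroutines gives \textsc{PSpace} membership, and \textsc{PSpace}-hardness already for one player follows from LTL model checking as in Theorem~\ref{thm:bdeviation}. Collecting the one-player cases of the three paragraphs then shows that \ecore, \acore, \cmembership\ and \bdeviation\ are all \textsc{PSpace}-complete when $|\Ag|=1$.

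The step I expect to be the main obstacle is the \textsc{2ExpTime} upper bound for \cmembership, where two things must be handled with care. First, clause~2 (``$C$ wins when the players outside $C$ replay the \emph{original} $\vec{\sigma}_{-C}$'') must be incorporated without creating a second quantifier alternation; a clean device is a ``forked'' arena in which a fresh dummy agent $d$ chooses at the outset between a branch on which $\Ag\setminus C$ are hard-wired to replay $\vec{\sigma}_{-C}$ and one on which they move freely, so the whole condition becomes $\SLE{C}\,\SLA{d}\,\SLA{\Ag\setminus C}(\cdots)$, still of alternation depth one. Second, one must check that embedding the transducers of $\vec{\sigma}$ makes the arena only singly exponentially larger than the input and that this does not escape \textsc{2ExpTime} --- which holds because, for bounded-alternation SL formulae, the model contributes only a single exponential to the running time. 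Everything else is routine given Theorems~\ref{thm:core-membership},~\ref{thm:cmembership},~\ref{thm:bdeviation},~\ref{thm:fc} and the standard automata-theoretic algorithms for LTL and SL model checking.
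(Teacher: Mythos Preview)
Your proposal is correct and follows essentially the same line the paper sketches: use the SL characterisation of Theorem~\ref{thm:core-membership} (alternation depth two, hence \textsc{3ExpTime}) for \ecore/\acore, and for \cmembership\ and \bdeviation\ eliminate the outermost existential block by fixing the given transducers --- either by encoding them as LTL formulae, as the paper's remark about~\cite{Gutierrez2015,Gutierrez2017a} indicates, or equivalently by your product/forked-arena construction --- bringing the alternation down to one (\textsc{2ExpTime}) and zero (\textsc{PSpace}) respectively; the lower bounds are transferred verbatim from Theorems~\ref{thm:fc},~\ref{thm:cmembership} and~\ref{thm:bdeviation}. The paper gives only the two observations preceding the theorem and no detailed proof, so your write-up is in fact more explicit than the paper's.

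One small imprecision worth tightening: in your \cmembership\ hardness argument you call the ``winners remain winners'' clause \emph{vacuous}. It is not vacuous: with $W=\Ag\setminus\{1\}$ and $\gamma_j=\neg\varphi$ for $j\ne 1$, the antecedent of clause~3 is ``$\neg\varphi$ holds'' and the consequent is ``$\varphi$ holds'', so the implication is equivalent to $\varphi$; hence clause~3 collapses to ``player~1 enforces $\varphi$ against all $\vec{\sigma}'_{-1}$'', which also subsumes clause~2. The conclusion --- that $\{1\}$ has a strong beneficial deviation iff $\{1\}$ is fulfilled --- is exactly right, but the clause does real work in getting there.
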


%
Because we characterised the strong core using SL (which, in contrast to \atlstar, is not a bisimulation-invariant logic), we cannot conclude that the satisfaction of LTL properties by outcomes in the strong core is a bisimulation-invariant property. We believe that this is not the case.

In our first formulation of the core, we saw that the core is always non-empty. Thus, a natural question is to ask whether the strong core is always non-empty. We begin by showing that this is the case for games with three or fewer players:

\begin{theorem}\label{thm:strong-core-nonempty}
    For every two-player and three-player game, \(G\), we have \(\ccoreset{G} \neq\emptyset\).
\end{theorem}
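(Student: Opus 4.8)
The plan is to argue by a case analysis driven by which coalitions are \emph{fulfilled}, exploiting the fact that when $|\Ag|\le 3$ the complement $\Ag\setminus C$ of any coalition $C$ has at most two players. Two preliminary observations organise everything. First, every beneficial deviation is also a strong beneficial deviation: condition~2 holds by instantiating the complementary profile to $\vec{\sigma}_{-C}$, and condition~3 then holds vacuously. Hence $\ccoreset{G}\subseteq\coreset{G}$, and combined with Lemma~\ref{lem:fc} this tells us that any $\vec{\sigma}\in\ccoreset{G}$ must already make at least one member of every fulfilled coalition a winner. Second, for a coalition $C$ the two-player zero-sum game in which $C$ pursues $\bigwedge_{i\in C}\gamma_i$ and $\Ag\setminus C$ pursues its negation has an $\omega$-regular winning condition, hence is determined, with finite-memory strategies sufficing; so whenever $C$ is \emph{not} fulfilled, $\Ag\setminus C$ has a finite-state \emph{blocking} strategy $\mathrm{blk}_C$ that guarantees $\bigvee_{i\in C}\neg\gamma_i$ against every strategy of $C$.

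I would then distinguish three cases. \emph{(i) $\Ag$ is fulfilled}, equivalently some run makes everyone a winner: realise that run; its loser set is empty, so no coalition can deviate and the profile lies in $\ccoreset{G}$. \emph{(ii) $\Ag$ is not fulfilled but some nonempty proper coalition is}: pick a $\subseteq$-maximal fulfilled coalition $C$ and freeze it on a fulfilling strategy $\vec{\sigma}_C^{*}$. Maximality forces that, in the residual game played by the at-most-two players of $\Ag\setminus C$, no nonempty sub-coalition is residually fulfilled and no residual run makes all of $\Ag\setminus C$ win; using blocking strategies inside that residual game I build a residual profile against which \emph{no} nonempty coalition can win the conjunction of its own goals, i.e.\ in which condition~2 of a strong beneficial deviation fails for every candidate. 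Since condition~2 is unaffected by the freezing, the combined profile is in $\ccoreset{G}$. \emph{(iii) no nonempty coalition is fulfilled}: for at most two players this is immediate from the blocking strategies (everyone loses and no nonempty coalition can win against the resulting profile). For three players, set $H_k=\mathrm{blk}_{\Ag\setminus\{k\}}$ --- player $k$'s strategy forbidding that the other two both win --- and let $\vec{\sigma}=(H_1,H_2,H_3)$; since $H_k$ rules out the winner set $\Ag\setminus\{k\}$ and the grand coalition is not fulfilled, $|\Winners(\vec{\sigma})|\le 1$. If $\Winners(\vec{\sigma})=\emptyset$ we are done: every pair's condition~2 fails because it is blocked by the third player's $H$, the grand coalition's fails because no run makes everyone win, and for each singleton $\{i\}$ condition~3 fails since $\Ag\setminus\{i\}$ can reply with $\mathrm{blk}_{\{i\}}$, defeating $i$ while (trivially) preserving the empty incumbent winner set.

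The remaining sub-case of (iii), $\Winners(\vec{\sigma})=\{a\}$ a singleton, is where I expect the real work, and is the main obstacle. Here the deviating coalitions are the two losing singletons and the losing pair; the pair is killed by $H_a$, but for a losing singleton $\{j\}$ with a deviation that wins against the frozen opponents one must exhibit a \emph{credible} punishing response --- one that both makes $j$ a loser \emph{and} keeps $a$ a winner --- whereas the blocking strategy supplied by determinacy only guarantees the former. I would attack this by a secondary split: replace the anchor $(H_1,H_2,H_3)$ by a profile in which the incumbent $a$ is instead forced to lose (the other two play $\mathrm{blk}_{\{a\}}$ and $a$ plays $H_a$), which either drives the winner set to $\emptyset$ --- reducing to the case just handled --- or leaves exactly one loser, who then provably cannot win against the blocking opponents so that condition~2 fails again; the leftover configurations, where a would-be deviator actually wins and the loser set again has size two, would be resolved by reasoning directly about the structure of any play the blocking strategies permit in that situation, using once more that an ``everyone wins'' run is impossible. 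I would close with the remark that the whole argument rests on complements having size at most two, so that blocking strategies for two disjoint two-player objectives can be played simultaneously; this is precisely what breaks down for four or more players, and explains why the theorem is stated only for $|\Ag|\le 3$.
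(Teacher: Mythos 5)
Your approach is genuinely different from the paper's (which argues by contradiction, showing that an empty strong core would force the existence and the non-existence of a profile with exactly two winners), and most of it is sound: the observation that every beneficial deviation is a strong one, the determinacy/blocking-strategy machinery, and cases (i) and (ii) all go through --- in particular your case (ii) correctly reduces everything to failures of condition~2, where credibility plays no role. The two-player claim is also fully covered by your cases.

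However, there is a genuine gap exactly where you flag it, and it is not a loose end but the crux of the three-player theorem. In case (iii) with \(\Winners(\vec{\sigma})=\{a\}\), a losing singleton \(\{j\}\) may well satisfy condition~2 against the frozen \((H_a,H_k)\); to defeat condition~3 you must produce a response by \(\{a,k\}\) under which \(a\) \emph{still wins} and \(j\) loses. Determinacy only hands you \(\mathrm{blk}_{\{j\}}\), a strategy for \(\{a,k\}\) guaranteeing \(\neg\gamma_j\) with no control over \(\gamma_a\); nothing rules out that every play consistent with \(\mathrm{blk}_{\{j\}}\) (or with any \(j\)-defeating strategy) also falsifies \(\gamma_a\), in which case condition~3 holds vacuously for \(j\)'s deviation and \(\vec{\sigma}\) is not in the strong core. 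Your fallback --- switching to the anchor where the other two play \(\mathrm{blk}_{\{a\}}\) --- merely relocates the problem: that profile again has at most one winner, and if it has exactly one you face the identical credibility question with a new incumbent, so the argument can cycle without terminating. (The description ``leaves exactly one loser'' also does not parse for three players with one winner.) The paper's own proof spends two technical lemmas precisely on taming these chains of deviation-and-credible-punishment, tracking which goal sets are realisable along the chain until a contradiction with the non-existence of an all-winners run is reached; some analogue of that bookkeeping, or a well-founded measure on the chain of anchors, is what your sketch is missing. Until that sub-case is closed, the proof is incomplete.
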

\begin{proof}
    First, consider games with only two players. For a contradiction, let us suppose that for some game \(G\), the set of outcomes \(\ccoreset{G}\) is empty. This means that for every outcome either player~1 or player~2 or both have a strong beneficial deviation. Then, we know that no outcome can satisfy both goals, \(\gamma_1\) and \(\gamma_2\). Let us then consider the three remaining possible cases: outcomes that only satisfy \(\gamma_1\) (case~1), outcomes that only satisfy \(\gamma_2\) (case~2), and outcomes that satisfy neither \(\gamma_1\) nor \(\gamma_2\) (case~3). Let \(\vec{f}=(f_1,f_2)\) be an outcome, \(f'_1\) be a deviation by player~1, and \(f'_2\) be a deviation by player~2, and consider the three cases above. In case~1, only player~2 would deviate. Then, outcome \((f_1,f'_2)\) only satisfies \(\gamma_2\). Because \((f_1,f'_2)\) is not in the core either, from this outcome only player~1 would deviate, to another outcome \((f'_1,f'_2)\). Then, outcome \((f'_1,f'_2)\) only satisfies \(\gamma_1\). But, then, we have a contradiction, since this means that \((f_1,f_2)\) would be in \(\ccoreset{G}\). We can reason symmetrically to show that case~2 is not possible either. For case~3 we note that only single deviations would be possible. But any such deviations would be to an outcome that either only satisfies \(\gamma_1\) or only satisfies \(\gamma_2\), which are no longer possible. Since no other cases are possible, we have to reject our assumption and conclude that, for two-player games, \(\ccoreset{G}\) is not empty.

    For three-player games, the proof is similar, but requires a careful case-by-case analysis. For a full proof, see the appendix.
\end{proof}

In contrast, for games with four or more players, we find that the strong core may be empty, as the following example illustrates.

\begin{example}\label{ex:empty-strong-core}
    Consider the following 4 player game, with a start state, \(s^0\), and six sink states, \(s^1, \ldots, s^6\). Let each player have two actions each, \(\{0,1\}\), and writing \(a_1a_2a_3a_4\) for the action \((a_1, a_2, a_3, a_4)\), our transition function from the start state looks like the following:
    \begin{align*}
        \transf(0000) & = s^1, \quad                 &  & \transf(0001)  = s^1,         \\
        \transf(0010) & = s^2, \quad                 &  & \transf(0011)  = s^2,         \\
        \transf(0100) & = s^1, \quad                 &  & \transf(0101)  = s^3,         \\
        \transf(0110) & = s^2, \quad                 &  & \transf(0111)          = s^5, \\
        \transf(1000) & = s^6,     \quad             &  & \transf(1001)  = s^4,         \\
        \transf(1010) & = s^4,         \quad         &  & \transf(1011)  = s^4,         \\
        \transf(1100) & = s^1,        \quad          &  & \transf(1101)  = s^3,         \\
        \transf(1110) & = s^4,                 \quad &  & \transf(1111)  = s^3.         \\
    \end{align*}
    Moreover, suppose player one prefers the runs that end up in the states \(s^1, s^2, s^3\), player two prefers those that end up in \(s^1, s^4, s^5\), player three \(s^2, s^4, s^6\), and player four \(s^3,s^5, s^6\).
    The game is illustrated in~Figure\ref{fig:empty-ccore-example} --- we label states with the players that are winners in that state, and edges with the possible action tuples.\footnote{We would like to note that this (counter-)example was automatically generated using bounded exhaustive search of games of different size, which may explain why the game is so counter-intuitive, and most importantly why a game like this one is so hard to find, or be produced, by hand.}
    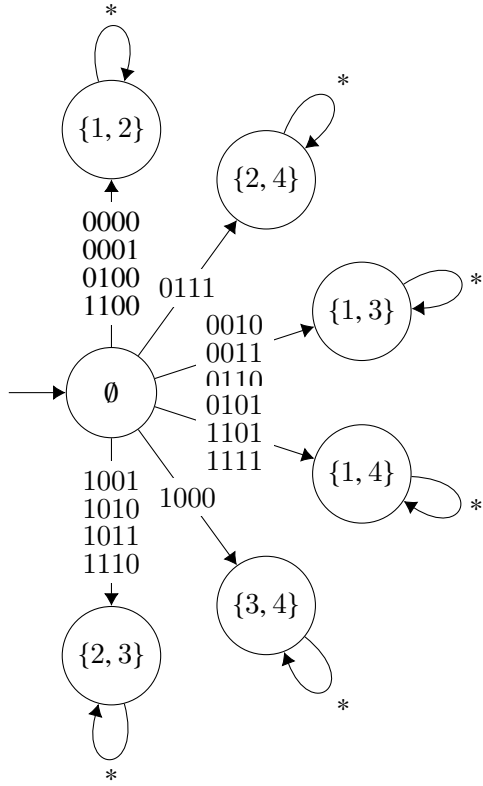
\begin{figure}[H]
        \centering
        \begin{tikzpicture}[state/.style={circle, draw, minimum size=1.2cm}, every node/.append style={transform shape}]
			\node (start) {};
			\node[state] (s_0) at ($(start) + (0:1.5cm) $)  {\(\emptyset\)};
			\node[state] (s_1) at ($(s_0) + (90:3.5cm) $) {\(\{1,2\}\)};
			\node[state] (s_5) at ($(s_0) + (54:3.5cm) $) {\(\{2,4\}\)};
			\node[state] (s_2) at ($(s_0) + (18:3.5cm) $) {\(\{1,3\}\)};
			\node[state] (s_3) at ($(s_0) + (-18:3.5cm) $) {\(\{1,4\}\)};
			\node[state] (s_6) at ($(s_0) + (-54:3.5cm) $) {\(\{3,4\}\)};
            \node[state] (s_4) at ($(s_0) + (-90:3.5cm) $) {\(\{2,3\}\)};

			\draw [-{Latex[width=2mm]}]
			(start) edge[] (s_0)
			(s_0) edge[] node[align=left, fill=white]{ 0000 \\[-2ex] 0001 \\[-2ex] 0100 \\[-2ex] 1100}  (s_1)
			(s_0) edge[] node[align=left, fill=white] {\(0010\) \\[-2ex] \(0011\) \\[-2ex] \(0110\) }  (s_2)
			(s_0) edge[] node[align=left, fill=white] {\(0101\) \\[-2ex] \(1101\) \\[-2ex] \(1111\) }  (s_3)
			(s_0) edge[] node[align=left, fill=white] {\(1001\) \\[-2ex] \(1010\) \\[-2ex] \(1011\) \\[-2ex] \(1110\)}  (s_4)
			(s_0) edge[] node[align=left, fill=white] {\(0111\)}  (s_5)
			(s_0) edge[] node[align=left, fill=white] {\(1000\)}  (s_6)

			(s_1) edge[loop, out=105, in=75, distance=1cm] node[above] {\(*\)} (s_1)
			(s_5) edge[loop, out=69, in=39,  distance=1cm] node[above right]{\(*\)} (s_5)
			(s_2) edge[loop ,out=33, in=3, distance=1cm] node[right]{\(*\)} (s_2)
			(s_3) edge[loop, out=-3, in=-33,  distance=1cm] node[right]{\(*\)} (s_3)
			(s_6) edge[loop, out=-39, in=-69,  distance=1cm] node[below right]{\(*\)} (s_6)
			(s_4) edge[loop, out=-75, in=-105,  distance=1cm] node[below]{\(*\)} (s_4)
			;
\end{tikzpicture}
        \caption{A game with an empty \ccore.}\label{fig:empty-ccore-example}
    \end{figure}
    With a careful case-by-case analysis, one can verify that for every state, there exists some coalition with a strong beneficial deviation. Thus, the strong core of the game is empty. For a complete analysis of the deviations that each coalition can make, please refer to the appendix where the full case-by-case analysis is presented.
\end{example}

\section{Mean-Payoff Games}\label{secn:mean-payoff-games}

Thus far, we have considered games with \emph{qualitative} preferences---each player has a goal given by some temporal logic formula, which, under a given run, is either satisfied or unsatisfied. But this is a very coarse-grained approach to specifying preferences; it does not offer a way of expressing the \emph{intensity} of the individual player's preferences. One possible way to obtain a richer model of preferences would be to introduce multiple LTL goals for each player, and define some mapping from the set of satisfied formulae to the real numbers~\cite{Mavronicolas2007,Almagor2018,Kupferman2016}. However, an alternative approach, which has been widely studied in the literature, is to sidestep temporal logics entirely and assign weights to states, rather than atomic propositions. We then compute the \textbf{mean-payoff} of runs, with the idea that agents prefer runs which maximise their mean-payoff~\cite{Ehrenfeucht1979,Zwick1996,Ummels2011}. In this section, we will revisit the formulation of the core in this mean-payoff setting.

Formally, a mean-payoff game, \(G\), is a tuple,
\begin{equation*}
    G = (M, {\{w_i\}}_{i \in \Ag}),
\end{equation*}
where \(M\) is a concurrent game structure, and for each \(i \in \Ag\), \(w_i : \St \to \mathbb{Z}\) is a \textbf{weight} function, mapping states to integers. Games are played in an identical way to the LTL setting, but the agents' preference relations are defined differently here. Let \(\beta \in \mathbb{R}^\omega\) be an infinite sequence of real numbers. Then the mean-payoff of \(\beta\), denoted by \(\mp(\beta)\), is defined as follows:
\begin{equation*}
    mp(\beta) = \liminf_{n\to\infty} \frac{1}{n} \sum_{i=0}^{n-1} \beta_i.
\end{equation*}
In a mean-payoff game, a run \(\rho = s^0 s^1 \ldots\) induces an infinite sequence of weights for each player, \(w_i(s^0)w_i(s^1)\ldots\) --- we denote this sequence by \(w_i(\rho)\) and for notational convenience, we will write \(\pay_i(\rho)\) for \(\mp(w_i(\rho))\). With this, we can define the preference relation for each player: given two runs, \(\rho\) and \(\rho'\), we have \(\rho \succeq_i \rho'\) if \(\pay_i(\rho) \geq \pay_i(\rho')\); the strict relation \(\succ_i\) is defined in the usual way.

Now, recall that our definition of the core in the setting of LTL games relies on the notion of winners and losers of a game. In the mean-payoff setting, it clearly makes no sense to classify players as winners or losers---they can receive a wide spectrum of payoffs. Thus, we need to revisit the concept of a beneficial deviation. In the mean-payoff setting, we say that given a strategy profile \(\vec{\sigma}\), a beneficial deviation by a coalition \(C\) is a strategy vector \(\vec{\sigma}_C'\) such that for all complementary strategy profiles \(\vec{\sigma}_{\Ag \setminus C}'\), we have \(\rho(\vec{\sigma}_C', \vec{\sigma}_{\Ag \setminus C}') \succ_i \rho(\vec{\sigma})\) for all \(i \in C\). We then say that \(\vec{\sigma}\) is a member of the core if there exists no coalition \(C\) which has a beneficial deviation from \(\vec{\sigma}\).

\subsection{Non-Emptiness of the Core}
We begin by asking whether mean-payoff games always have a non-empty core. Recall that in the LTL games case, we found the core was guaranteed to be non-empty: we find that this does not hold in general for mean-payoff games. Before proving this theorem, we need a small lemma showing that\ldots

\begin{lemma}\label{lmm:mean-payoff-values-are-closed}
  Let \({\{\vec{\sigma}^j\}}_{j \in \mathbb{N}}\) be a sequence of strategy profiles, and suppose that for some player \(i\), we have \(\lim_{j \to \infty}\pay_i(\rho(\vec{\sigma}^j)) = x\) for some \(x \in \mathbb{R}\). Then there exists some strategy profile \(\vec{\sigma}^x\) such that \(\pay_i(\rho(\vec{\sigma}^x)) = x\).
\end{lemma}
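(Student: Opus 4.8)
The plan is to exploit the fact that mean-payoff values of finite-memory (transducer) strategy profiles form a structured, well-controlled set, and that a convergent sequence of them must have its limit realised exactly. First I would recall that, as noted in the preliminaries, a strategy profile consisting of finite-state machine strategies induces an eventually periodic run: after at most $|\St| \times \prod_i |Q_i|$ steps, the run enters a cycle through a fixed set of states, and the mean-payoff $\pay_i(\rho(\vec{\sigma}^j))$ is then exactly the average of $w_i$ over the states on that cycle. Consequently, every value $\pay_i(\rho(\vec{\sigma}^j))$ is a rational number of the form $\frac{1}{\ell}\sum_{k} w_i(t_k)$ where $t_1,\dots,t_\ell$ is a simple cycle of states in $M$ (together with the bounded strategy-state component). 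The crucial observation is that there are only \emph{finitely many} simple cycles in the underlying concurrent game structure $M$, hence only finitely many possible mean-payoff values that player $i$ can receive over \emph{any} run that settles into a cycle of $M$ --- and in fact any memoryless-on-$M$ behaviour suffices to realise each such value.

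The key steps, in order: (1) observe that for each $j$, $\pay_i(\rho(\vec{\sigma}^j))$ equals the average weight (under $w_i$) of some cycle $c_j$ in the state graph of $M$; since $M$ is finite, the set $V = \{\,\text{avg}_i(c) : c \text{ a cycle in } M\,\}$ is finite. (2) Since $\lim_{j\to\infty}\pay_i(\rho(\vec{\sigma}^j)) = x$ and each term lies in the finite set $V$, the sequence is eventually constant equal to $x$, so $x \in V$; in particular $x$ is actually \emph{attained} by $\pay_i(\rho(\vec{\sigma}^j))$ for all large $j$. (3) Take any such $j^\star$ with $\pay_i(\rho(\vec{\sigma}^{j^\star})) = x$ and set $\vec{\sigma}^x := \vec{\sigma}^{j^\star}$; then $\pay_i(\rho(\vec{\sigma}^x)) = x$, as required.

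The main obstacle --- really the only subtlety --- is step (1): justifying that $\pay_i(\rho(\vec{\sigma}^j))$ is genuinely a cycle-average and that the relevant set of attainable values is finite. One must be slightly careful, because strategies are transducers whose state spaces $Q_i$ can be arbitrarily large, so a priori the product $\St \times \prod_i Q_i$ is unbounded and the cycle could be long. The fix is to note that the mean-payoff depends only on the \emph{projection} of the run onto $\St$ (since $w_i : \St \to \mathbb{Z}$), and the projected run, being eventually periodic, eventually cycles through a fixed multiset of states of $M$; the $\liminf$ then equals the average of $w_i$ over that multiset, which is a convex combination with rational coefficients of the values of $w_i$ on the (finitely many) states visited infinitely often. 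A cleaner route avoiding any bound on cycle length: the set of all possible values of $\liminf_{n}\frac{1}{n}\sum_{t<n} w_i(s^t)$, as $s^0 s^1 \dots$ ranges over all runs of $M$, is contained in the finite set $\{\frac{1}{|\St|}\sum_{s \in S} m_s\, w_i(s) : S \subseteq \St,\ m_s \in \mathbb{Z}_{>0},\ \sum_s m_s \le |\St|\}$ --- finite because it is a finite union of finite sets. With finiteness of this value set in hand, the convergence argument in steps (2)--(3) is immediate, since a convergent sequence valued in a finite set is eventually constant.
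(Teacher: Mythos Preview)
Your argument has a genuine gap at step~(1): the claim that the set $V$ of attainable values $\pay_i(\rho(\vec{\sigma}))$ is finite is false. You correctly observe that an eventually periodic run has mean-payoff equal to the average of $w_i$ over a multiset of states, but the multiplicities in that multiset are governed by the strategies' internal memory, which is unbounded across the family $\{\vec{\sigma}^j\}$. Concretely, take $\St = \{s_1, s_2\}$ with $w_i(s_1) = 0$, $w_i(s_2) = 1$, and transitions allowing either successor from each state. A finite-memory profile whose induced run repeats the block ``$s_1$ for $k$ steps, then $s_2$ once'' has mean-payoff $1/(k{+}1)$; varying $k$ gives infinitely many distinct values accumulating at $0$, so a convergent sequence of payoffs need not be eventually constant, and its limit need not be any term of the sequence. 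Your proposed bound $\sum_s m_s \le |\St|$ on the period length would only hold for \emph{memoryless} profiles, which the lemma does not assume; and for general strategies in the paper's sense --- arbitrary functions $\St^+ \to \Ac_i$ --- runs need not be eventually periodic at all, so the achievable set is in fact a continuum.

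The paper's route is different in kind: rather than arguing finiteness, it argues \emph{closedness} of the achievable-value set and then constructs a witness. The run eventually settles inside some strongly connected component of the game graph; within that component one considers the finite collection $\mathcal{C}$ of simple cycles, and every achievable mean-payoff is a convex combination of their averages. Conversely, any such convex combination can be realised exactly by a strategy that time-shares the corresponding simple cycles in the prescribed proportions --- this is where the linear-programming step enters, solving for proportions that hit $x$. The achievable set in each component is therefore a closed interval, the global achievable set is a finite union of closed intervals, and the limit $x$ lies in it. The idea missing from your proposal is precisely this: one must \emph{build} a new $\vec{\sigma}^x$ (possibly unlike every $\vec{\sigma}^j$) rather than extract one from the given sequence.
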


\begin{proof}
  First note that the game will end up in a strongly-connected component. So let \(\mathcal{C}\) be the set of all simple cycles of the game graph of said strongly-connected component. 
  Consider linear program with solution \(x\) --- this gives proportion of cycles 
\end{proof}

\begin{theorem}\label{thm:empty-core}
    In mean-payoff games, if \(\abs{\Ag}\leq 2\), then the core is non-empty. For \(\abs{\Ag} >2\), there exist games with an empty core.
\end{theorem}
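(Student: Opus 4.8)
The plan is to prove the two halves separately, using very different techniques for each. For the positive direction ($|\Ag| \leq 2$), I would argue as in the proof of Theorem~\ref{thm:emptiness}: start from an arbitrary strategy profile and repeatedly apply beneficial deviations, showing the process must terminate at a core element. The subtlety relative to the LTL case is that ``progress'' is no longer measured by the shrinking set of losers — instead I would track the pair of payoffs. With one player the claim is trivial (take a payoff-maximising run, which exists as a simple lasso). With two players, observe that a beneficial deviation by a coalition strictly increases the payoff of every member against \emph{all} counter-responses; in particular a deviation by the grand coalition $\{1,2\}$ leads to a Pareto-improving profile, and a deviation by a singleton $\{i\}$ guarantees $i$ a strictly larger payoff no matter what the other player does. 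The key point is that a singleton $\{i\}$ has a beneficial deviation from $\vec{\sigma}$ if and only if the \emph{zero-sum value} (for $i$) of the game exceeds $\pay_i(\rho(\vec{\sigma}))$; that value is well-defined and achieved in mean-payoff games. So any core candidate must give each player at least their zero-sum value, and Lemma~\ref{lmm:mean-payoff-values-are-closed} lets me realise limiting payoffs by actual profiles. Taking, say, a profile that maximises $\pay_1 + \pay_2$ among those that give each player at least their individual zero-sum value (this set is non-empty and the supremum is attained by the lemma), one checks no coalition — singleton or grand — can beneficially deviate: a grand-coalition deviation would contradict maximality of the sum, and a singleton deviation is ruled out by the zero-sum-value bound. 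I would need to double check the boundary behaviour of the deviation definition (strict inequality for \emph{all} members) to make sure this candidate is genuinely stable; this is the main thing to verify carefully in the $|\Ag|=2$ case.

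For the negative direction ($|\Ag| > 2$), the plan is to exhibit an explicit three-player mean-payoff game with empty core, mirroring the structure of Example~\ref{ex:empty-strong-core}. The canonical source of such examples is a ``cyclic majority'' / Condorcet-style construction: three runs (or three sink states) $r_1, r_2, r_3$, with weights arranged so that, writing the payoff vectors for the three players, coalition $\{1,2\}$ strictly prefers $r_2$ over $r_1$, coalition $\{2,3\}$ strictly prefers $r_3$ over $r_2$, and coalition $\{1,3\}$ strictly prefers $r_1$ over $r_3$, and moreover each of these coalitions can \emph{force} the preferred outcome regardless of the third player (so the deviation is beneficial in the maximin sense required). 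Then from any profile whose induced run is one of $r_1, r_2, r_3$ some two-player coalition deviates, and any other run is dominated by one of these for some coalition as well, so no profile is in the core. Concretely I would build a game where from the start state each pair of players, by coordinating, can steer the play into the sink corresponding to ``their'' preferred run, pick integer weights on the sinks realising the cyclic preference pattern above, and pad with a fourth, fifth, \ldots\ dummy player if one wants the statement literally for every $|\Ag| > 2$ (dummies with constant weight functions change nothing). It then remains to do the finite case check that every strategy profile admits a beneficial deviation — this is a bounded, mechanical verification over the finitely many ``outcome classes''.

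The main obstacle will be getting the three-player game to actually work: unlike the LTL setting, where ``winner/loser'' gives a clean combinatorial handle, here I need the deviating coalition's preferred outcome to beat the status quo for \emph{every} coalition member \emph{simultaneously} and \emph{against every} response of the outsider, which constrains the weight vectors quite tightly, and I also need to ensure there is no ``escape'' profile — e.g. a profile routing to a fourth sink, or a non-lasso run with some averaged payoff — that happens to be stable. I would handle the first issue by choosing the three payoff vectors to be cyclic shifts of a single vector $(a,b,c)$ with $a > b > c$ (so the cyclic-majority pattern is automatic) and giving each pair of players enough joint control of the transition function to force their target sink; I would handle the second by making the game structure so that \emph{every} run eventually reaches one of the three designated sinks, so the set of achievable payoff vectors is exactly $\{(a,b,c),(c,a,b),(b,c,a)\}$ and the case analysis is genuinely finite. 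A reasonable fallback, if a clean three-player construction proves fiddly, is to cite or adapt a known empty-core example from the concurrent mean-payoff literature and verify it satisfies our (maximin) definition of beneficial deviation.
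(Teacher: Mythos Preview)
Your plan for the negative half (three players) is essentially the paper's: it constructs exactly the cyclic-majority game you describe, with three sink states carrying payoff vectors that are cyclic shifts of $(2,1,0)$, and transitions arranged so that each two-player coalition can force its preferred sink. One practical difference: you propose to rule out ``escape'' profiles by forcing every run into a sink, whereas the paper instead allows the start state to be a fourth possible outcome but gives it payoff $(-1,-1,-1)$, so that the grand coalition always has a beneficial deviation away from it. The paper's fix is easier to realise, since designing transitions that simultaneously give every pair of players decisive control \emph{and} guarantee that some sink is reached is delicate; you should expect to need a fourth, universally-bad outcome.

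For $|\Ag|=2$ your approach genuinely differs from the paper's. The paper argues iteratively: repeatedly apply grand-coalition deviations (Pareto improvements) until none remain---invoking closedness of the achievable-payoff set from~\cite{Brenguier2015} for termination---and then, if some singleton $j$ still has a beneficial deviation $\sigma_j$, it declares that $(\sigma_j,\sigma_i)$ with $\sigma_i$ a best response for $i$ lies in the core. Your approach is a one-shot optimisation: pick a profile maximising $\pay_1+\pay_2$ subject to each player receiving at least their maximin value, and argue no coalition can deviate. Both arguments operate at a similar level of informality and both lean on the same closedness fact; your version is arguably cleaner once the constrained optimum is shown to exist. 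Two points to tighten, though: Lemma~\ref{lmm:mean-payoff-values-are-closed} as stated concerns a \emph{single} player's limiting payoff, not the joint payoff vector, so to conclude that your constrained supremum is attained you need the stronger (still true, via~\cite{Brenguier2015}) statement that the set of achievable payoff \emph{pairs} is closed. And you should justify that the constrained set is non-empty---each player playing a maximin strategy secures their own value regardless of the opponent, so the pair of maximin strategies witnesses non-emptiness, but whether the maximin value is \emph{achieved} rather than merely approached in the concurrent setting deserves a sentence or a citation.
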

\begin{proof}
    If \(\abs{\Ag}=1\), it is straightforward to see that the core is always non-empty; we use Karp's algorithm for determining the maximum cycle in a weighted graph~\cite{Karp1978} to determine the maximum payoff that one player can achieve. For two-player games, let \(\vec{\sigma}=(\sigma_1,\sigma_2)\) be any strategy profile. If \(\vec{\sigma}\) is not in the core, then either Player~1, or Player~2, or the coalition consisting of both players has a beneficial deviation. If the latter is true, then there is a strategy profile, \(\vec{\sigma}' = (\sigma_1', \sigma_2')\) such that \(\vec{\sigma}' \succ_i \vec{\sigma}\) for both \(i \in \{1,2\}\). We repeat this process until the coalition of both players does not have a beneficial deviation. This must eventually be the case as 1) each player's payoff is capped by their maximum weight and 2) by Theorem~4 of~\cite{Brenguier2015}, we see that the set of payoffs that a coalition can achieve is a closed set, so any limit point can be attained.\footnote{The result of~\cite{Brenguier2015} actually refers to two-player, multi-mean-payoff games, whilst we are working with multi-player mean-payoff games. We will cover why this is not an issue momentarily.} So there must come a point when they cannot beneficially deviate together. At this point, we must either be in the core, or either player 1 or player 2 has a beneficial deviation. If player \(j \in \{1,2\}\) has a beneficial deviation, say \(\sigma_j\), then any strategy profile \((\sigma_j,\sigma_i)\), with \(i\neq j\), that maximises Player~\(i\)'s mean-payoff is in the core. Thus, for every two-player game, there exists some strategy profile that lies in the core.

    However, for mean-payoff games with three or more players, the core of a game may be empty. The following example illustrates this case.

    \begin{example}Consider the following three-player game \(G\), where each player has two actions, \(\HEADS, \TAILS\), and there are four states, \(P, R, B, Y\). The states are weighted for each player as follows:

        \begin{center}
            \begin{tabular}{cccc}
                \toprule
                \(w_i(s)\) & \(1\)  & \(2\)  & \(3\)  \\
                \midrule
                \(P\)      & \(-1\) & \(-1\) & \(-1\) \\
                \(R\)      & \(2\)  & \(1\)  & \(0\)  \\
                \(B\)      & \(0\)  & \(2\)  & \(1\)  \\
                \(Y\)      & \(1\)  & \(0\)  & \(2\)  \\
                \bottomrule
            \end{tabular}
        \end{center}

        If the game is in any state other than \(P\), then no matter what set of actions is taken, the game will remain in that state. Thus, we only specify the transitions for the state \(P\):

        \begin{center}
            \begin{tabular}{cc}
                \toprule
                \(\Ac\)                      & \(\St\) \\
                \midrule
                \((\HEADS, \HEADS, \HEADS)\) & \(R\)   \\
                \((\HEADS, \HEADS, \TAILS)\) & \(R\)   \\
                \((\HEADS, \TAILS, \HEADS)\) & \(B\)   \\
                \((\HEADS, \TAILS, \TAILS)\) & \(P\)   \\
                \((\TAILS, \HEADS, \HEADS)\) & \(P\)   \\
                \((\TAILS, \HEADS, \TAILS)\) & \(Y\)   \\
                \((\TAILS, \TAILS, \HEADS)\) & \(B\)   \\
                \((\TAILS, \TAILS, \TAILS)\) & \(Y\)   \\
                \bottomrule
            \end{tabular}
        \end{center}

        Figure~\ref{fig:empty-core} illustrates the structure of the game.
        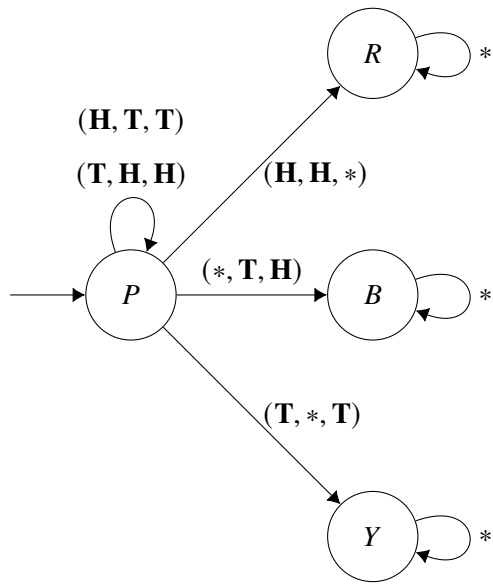
\begin{figure}[H]
            \centering
            \begin{tikzpicture}[state/.style={circle, draw, minimum size=1.2cm}, node distance=2cm]
		\node[state] (s^0) {$P$};
		\node[] (start) [left =1cm of s^0] {};
		\node[state] (s^2) [right = of s^0] {$B$};
		\node[state] (s^1) [above = of s^2] {$R$};
		\node[state] (s^3) [below = of s^2] {$Y$};

		\draw [-{Latex[width=2mm]}]
		(start) edge node{} (s^0)
		(s^0) edge[] node[right]{$(\HEADS, \HEADS, *)$} (s^1)
		(s^0) edge[] node[above]{$(*, \TAILS, \HEADS)$} (s^2)
		(s^0) edge[] node[right]{$(\TAILS, *, \TAILS)$} (s^3)

		(s^0) edge[loop, out=110, in=70, distance=1cm] node[above,align=center]{$(\HEADS, \TAILS, \TAILS)$\\$(\TAILS, \HEADS, \HEADS)$} (s^0)
		(s^1) edge[loop, out=20, in=340, distance=1cm] node[right]{$*$} (s^1)
		(s^2) edge[loop, out=20, in=340, distance=1cm] node[right]{$*$} (s^2)
		(s^3) edge[loop, out=20, in=340, distance=1cm] node[right]{$*$} (s^3)
		;
\end{tikzpicture}
            \caption{A game with an empty core.\label{fig:empty-core}}
        \end{figure}

        Note that strategies are characterised by the state that the game eventually ends up in. If the players stay in \(P\) forever, then they can all collectively change strategy to move to one of \(R, B, Y\), and each get a better payoff. Now, if the game ends up in \(R\), then players 2 and 3 can deviate by playing \((\TAILS, \HEADS)\), and no matter what player 1 plays, the game will be in state \(B\), leaving the two deviating players better off. But similarly, if the game is in \(B\), then players 1 and 3 can deviate by playing \((\TAILS, \TAILS)\) to enter state \(Y\), in which they both will be better off, regardless of what player 2 does. And finally, if in \(Y\), then players \(1\) and \(2\) can deviate by playing \((\HEADS, \HEADS)\) to enter \(R\) and will be better off regardless of what player \(3\) plays. Thus, no strategy profile lies in the core.
    \end{example}
\end{proof}

\subsection{Decision Problems}
We now turn our attention to decision problems relating to the core in the mean-payoff setting. However, from a computational perspective, there is an immediate concern here---given a potential beneficial deviation, how can we verify that it is preferable to the status quo under \emph{all} possible counter-responses? Fortunately, as we show in the following lemma, we can restrict our attention to \emph{memoryless strategies} when thinking about potential counter-responses to players' deviations (see Section~\ref{secn:prelim} for the definition of memoryless strategies):

\begin{lemma}\label{lemma:core-memoryless-counter-responses}
    Let \(G\) be a game, \(C \subseteq \Ag\) be a coalition and \(\vec{\sigma}\) be a strategy profile. Further suppose that \(\vec{\sigma}_C'\) is a strategy vector such that for all memoryless strategy vectors \(\vec{\sigma}_{\Ag \setminus C}'\), we have,
    \begin{equation*}
        \rho(\vec{\sigma}_C', \vec{\sigma}_{\Ag \setminus C}') \succ_i \rho(\vec{\sigma}).
    \end{equation*}
    Then, for all strategy vectors, \(\vec{\sigma}_{\Ag \setminus C}'\), not necessarily memoryless, we have,
    \begin{equation*}
        \rho(\vec{\sigma}_C', \vec{\sigma}_{\Ag \setminus C}') \succ_i \rho(\vec{\sigma}).
    \end{equation*}
\end{lemma}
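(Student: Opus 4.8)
The plan is to fix the deviation $\vec{\sigma}_C'$ and fold it into the arena, exactly as in the construction of $M_{C}$ in the proof of Theorem~\ref{thm:bdeviation}: taking the product of $M$ with the finite-state machines realising $\vec{\sigma}_C'$ yields a finite weighted game structure in which the coalition $\Ag\setminus C$, acting jointly, is the \emph{only} strategic entity --- at each step it alone selects the successor among the available transitions, while the contribution of $C$'s actions is pinned down by the product state. In other words, once $\vec{\sigma}_C'$ is compiled in, the counter-coalition faces a \emph{one-player} mean-payoff game: a finite weighted graph on whose infinite paths the single agent (the coalition $\Ag\setminus C$) gets to choose, every counter-response $\vec{\sigma}_{\Ag\setminus C}'$ corresponds to such a path, and the weights $w_i$ lift unchanged to the product states.

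In this one-player setting I would invoke the classical min-mean-cycle facts (Karp~\cite{Karp1978}, and the memoryless-determinacy argument for mean-payoff games~\cite{Ehrenfeucht1979}). Fix $i\in C$ and let $v_i$ be the least mean-payoff, with respect to $w_i$, over all infinite paths from the initial product state. Then: (a)~$v_i$ equals the minimum mean of a simple cycle reachable from the initial state and is \emph{attained} by a lasso-shaped path, hence by a \emph{memoryless} strategy of the single agent (head along a shortest path to a min-mean cycle, then loop it forever); and (b)~decomposing any sufficiently long prefix of any infinite path into cycles (each of mean $\ge v_i$, since the path ultimately stays within one strongly connected component whose cycles are all reachable) plus a remainder of bounded length shows that every infinite path has $\liminf$-mean $\ge v_i$. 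Now suppose that every memoryless counter-response gives player $i$ strictly more than $\pay_i(\rho(\vec{\sigma}))$; in particular the memoryless counter-response realising $v_i$ does, so $v_i>\pay_i(\rho(\vec{\sigma}))$, whence by~(b) \emph{every} counter-response $\vec{\sigma}_{\Ag\setminus C}'$ satisfies $\pay_i(\rho(\vec{\sigma}_C',\vec{\sigma}_{\Ag\setminus C}'))\ge v_i>\pay_i(\rho(\vec{\sigma}))$. Running this for each $i\in C$ in turn (the hypothesis and conclusion range over all members of $C$) gives that an arbitrary counter-response strictly improves every member of $C$, i.e.\ $\vec{\sigma}_C'$ survives against all counter-responses, as required.

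Two points need care. The minor one is the transient prefix before a run settles into a bottom strongly connected component; it affects only finitely many terms and is therefore invisible to the $\liminf$ of averages, but it should be mentioned. The main obstacle is reconciling the two meanings of ``memoryless'': the strategy delivered by the one-player argument is memoryless over the \emph{product} arena $M_{C}$, whose states carry the internal memory of $\vec{\sigma}_C'$, whereas the lemma speaks of strategies $\vec{\sigma}_{\Ag\setminus C}'\colon\St\to\Ac$ that are memoryless over $M$ itself. When $\vec{\sigma}_C'$ is memoryless the two notions coincide and the argument goes through verbatim; in general one has to argue that the counter-coalition need not observe $C$'s internal memory state --- so the statement is really about memoryless play in the arena into which $\vec{\sigma}_C'$ has been compiled --- and I would make this precise before carrying out the cycle-decomposition step above.
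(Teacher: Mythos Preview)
Your approach is essentially the paper's: compile the fixed deviation $\vec{\sigma}_C'$ into the arena and then argue that the counter-coalition, now the only strategic entity, cannot do worse than its best memoryless play. The only cosmetic difference is the result you invoke: you work per coordinate $i\in C$ and appeal to one-dimensional min-mean-cycle facts (Karp~\cite{Karp1978}, Ehrenfeucht--Mycielski~\cite{Ehrenfeucht1979}), whereas the paper phrases the reduced game as a two-player multi-mean-payoff threshold game with player~1 frozen and cites the memoryless-sufficiency of player~2 from~\cite{Velner2015,Kopczynski2006}; both routes yield the same contrapositive.

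The subtlety you flag --- that the memoryless witness produced by either argument is memoryless over the \emph{product} arena $M_C$ (states of the form $(s,q_C)$) rather than over $\St$ itself, so that it need not be an $M$-memoryless counter-response as the lemma's hypothesis requires --- is genuine, and the paper's proof glosses over exactly this point: it obtains a player-2 memoryless strategy in the game ``where player~1's strategy is fixed and encoded into the game structure'' and immediately treats it as a memoryless $\vec{\sigma}_{\Ag\setminus C}''$ on $M$. Your observation that the two notions coincide when $\vec{\sigma}_C'$ is itself memoryless is correct and suffices for the downstream uses in the paper (Theorem~\ref{thm:mp-bdeviation} and its corollaries restrict to memoryless profiles), but as stated the lemma and its proof share the gap you identify.
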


Before we prove this, we need to introduce an auxiliary concept of \emph{two-player, turn-based, zero-sum, multi-mean-payoff games}~\cite{Velner2015} (we will simply refer to these as multi-mean-payoff games moving forward). Informally, these are similar to two-player, turn-based, zero-sum mean-payoff games, except player 1 has \(k\) weight functions associated with the edges, and they are trying to ensure the resulting \(k\)-vector of mean-payoffs is component-wise greater than a vector threshold. Formally, a multi-mean-payoff game is given by a structure \[G = (V_1, V_2, v^0, E, w, z^k)\] where \(V_1, V_2\) are sets of states controlled by players 1 and 2 respectively, with \(V = V_1 \cup V_2\) the state space, \(v^0 \in V\) the start state, \(E \subseteq V \times V\) a set of edges, \(w: E \to \mathbb{Z}^k\) a weight function, assigning to each edge a vector of weights, and \(z^k \in \mathbb{Q}^k\) is a threshold vector.

The game is played by starting in the start state, \(v^0 \in V_i\), and player \(i\) choosing an edge \((v^0, v^1)\), and traversing it to the next state. From this new state, \(v^1 \in S_j\), player \(j\) chooses an edge and so on, repeating this process forever. Runs are defined in the usual way and the payoff of a run \(\rho\), \(\pay(\rho)\), is simply the vector \((\mp(w_1(\rho)), \ldots,\mp(w_k(\rho)))\). Player 1 wins if the \(\pay_i(\rho) \geq z_i\) for all \(i \in \{1,\ldots,k\}\), and loses otherwise. The basic question associated with these games is whether player 1 can force a win:
\begin{quote}
    \underline{\textsc{Multi-Mean-Payoff-Threshold}}:\\
    \emph{Given}: Multi-mean-payoff game \(G\).\\
    \emph{Question}: Is it the case that player 1 has a winning strategy?
\end{quote}
As shown in~\cite{Velner2015}, this problem is co-NP-complete. Whilst we do not need to use this complexity result right now, we shall use this fact later. It is also worth noting that in our multi-player mean-payoff games, the weights are attached to states, whilst in multi-mean-payoff games, the weights are attached to edges. For our purposes, this difference is purely superficial---the former can be mapped into the latter simply by pushing the weights onto the outgoing edges, whilst the latter can be mapped into the former by adding more states, one for each edge. If you do this in the correct way, this gives you only a polynomial overhead. As we proceed, we shall use this mapping implicitly (and we in fact have already used it in the proof of Theorem~\ref{thm:empty-core}).

With this decision problem introduced, we are now in a position to prove Lemma~\ref{lemma:core-memoryless-counter-responses}.

\begin{proof}[Proof of Lemma~\ref{lemma:core-memoryless-counter-responses}]
    Let \(\vec{\sigma}_{\Ag \setminus C}^\prime\) be an arbitrary strategy and let \(i \in C\) be an arbitrary agent. Suppose it is not the case that \(\rho(\vec{\sigma}_C', \vec{\sigma}_{\Ag \setminus C}') \succ_i \rho(\vec{\sigma})\). Thus, we have \(\rho(\vec{\sigma}) \succeq_i \rho(\vec{\sigma}_C', \vec{\sigma}_{\Ag \setminus C}')\). Considering this as a two-player multi-mean-payoff game, where player 1's strategy is fixed and encoded into the game structure (\emph{i.e.}, player 1 follows \(\vec{\sigma}_C'\), but has no say in the matter), and the payoff threshold is \(\mp(\rho(\vec{\sigma}))\), then \(\vec{\sigma}_{\Ag \setminus C}'\) is a winning strategy for player~2 in this game. Now, by~\cite{Velner2015,Kopczynski2006}, if player~2 has a winning strategy,  then they have a memoryless winning strategy. Thus, there is a memoryless strategy \(\vec{\sigma}_{\Ag \setminus C}^{\prime\prime}\) such that \(\rho(\vec{\sigma}) \succeq_i \rho(\vec{\sigma}_C', \vec{\sigma}_{\Ag \setminus C}^{\prime\prime})\). But this contradicts the assumptions of the lemma, and thus we must have \(\rho(\vec{\sigma}_C', \vec{\sigma}_{\Ag \setminus C}') \succ_i \rho(\vec{\sigma})\).
\end{proof}

We are now in a position to look at some complexity bounds for mean-payoff games in the cooperative setting. Let us begin by considering the following decision problem relating to beneficial deviations:

\begin{quote}
    \underline{\textsc{Beneficial Deviation}}:\\
    \emph{Given}: Game \(G\) and strategy profile \(\vec{\sigma}\). \\
    \emph{Question}: Does some coalition have a beneficial deviation from \(\vec\sigma\)? That is, does there exist \(C\subseteq\Ag\) and \(\vec{\sigma}_C' \in \Sigma_C\) such that for all \(\vec{\sigma}_{\Ag \setminus C}' \in\Sigma_{\Ag \setminus C}\) and for all \(i \in C\), we have:
    \begin{equation*}
        \rho(\vec{\sigma}_C', \vec{\sigma}_{\Ag \setminus C}') \succ_i \rho(\vec{\sigma})?
    \end{equation*}
\end{quote}

We have:

\begin{theorem}\label{thm:mp-bdeviation}
    If the provided strategy profile \(\vec{\sigma}\) is memoryless, then \textsc{Beneficial Deviation} is NP-complete.
\end{theorem}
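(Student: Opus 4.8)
The plan is to prove membership in \textsc{NP} and \textsc{NP}-hardness separately. For membership, the first step is to cash in the hypothesis that \(\vec{\sigma}\) is memoryless: the induced run \(\rho(\vec{\sigma})\) is then an ultimately periodic lasso of length at most \(|\St|\), so every threshold \(t_i := \pay_i(\rho(\vec{\sigma}))\) can be computed in polynomial time, and each \(t_i\) is a rational number with denominator bounded by \(|\St|\). The nondeterministic procedure then guesses the coalition \(C \subseteq \Ag\) together with a polynomially sized certificate of a beneficial deviation for \(C\), and checks it deterministically in polynomial time. The ingredient that makes this feasible is Lemma~\ref{lemma:core-memoryless-counter-responses}: a deviation \(\vec{\sigma}_C'\) is beneficial as soon as it defeats every \emph{memoryless} counter-response, and once a memoryless counter-response is fixed the arena collapses to a one-player (for \(C\)) mean-payoff arena. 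In that residual arena the question ``can \(C\) force \(\pay_i > t_i\) simultaneously for all \(i \in C\)?'' is a one-player multi-mean-payoff feasibility question, and I would reduce it to an LP-feasibility test: whether the threshold vector \((t_i)_{i\in C}\) lies strictly inside the convex hull of the mean weight vectors of the simple cycles of some reachable strongly connected component, using Karp's algorithm and standard cycle-polytope reasoning.

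The certificate I would use is a finite-memory strategy for \(C\) assembled from a bounded number of memoryless strategies --- at most \(|C|+1\), by a Carath\'eodory-style argument on the cycle-mean polytope --- together with a fixed cyclic schedule playing them in the appropriate rational proportions; the proportions can be taken of polynomial bit-size, and strict inequalities are handled by replacing \(> t_i\) with \(\geq t_i + \varepsilon\) for \(\varepsilon\) an inverse polynomial in \(|\St|\), justified by the bounded denominators noted above. Verifying the certificate then amounts to forming the product of the arena with the (polynomially sized) schedule automaton and checking that the remaining player \(\Ag\setminus C\) cannot steer the play into a cycle whose mean weight for some \(i\in C\) is \(\leq t_i\) --- a reachable-low-mean-cycle test, again polynomial time. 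The main obstacle, and the step I expect to be the crux, is precisely the claim that whenever \(C\) has \emph{any} beneficial deviation it has one of this restricted, polynomially describable form: this is where one must use the memoryless determinacy of the counter-coalition (Lemma~\ref{lemma:core-memoryless-counter-responses}) to argue that, against a committed memoryless opponent, \(C\)'s otherwise memory-requiring multidimensional objective degenerates to a one-player multi-mean-payoff objective, whose strict versions are witnessed by finite convex combinations of cyclic behaviours.

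For \textsc{NP}-hardness I would reduce from a standard \textsc{NP}-complete problem such as \(3\)-\textsc{Sat} or \textsc{Hamiltonian Cycle}, encoding the combinatorial choice into the choice of the deviating coalition: players correspond to variables (with a few auxiliary players and states for bookkeeping), and the weights and transition function are arranged so that the coalition \(C = \{\, j : x_j \text{ is set to true}\,\}\) has a beneficial deviation from the chosen memoryless \(\vec{\sigma}\) exactly when the induced assignment is a solution; then ``some coalition has a beneficial deviation'' holds iff the instance is positive. The delicate point of the reduction is to make the universal ``for all counter-responses'' quantifier inert for the intended coalition --- e.g.\ by leaving non-members only dummy actions inside the relevant gadget --- while simultaneously choosing the thresholds \(t_i\) tightly enough that no coalition other than one encoding a genuine solution can strictly improve on all of its members at once, so that no spurious beneficial deviations are introduced.
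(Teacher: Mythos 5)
Your overall decomposition matches the paper's: \textsc{NP} membership by guessing the coalition together with a succinctly represented deviation and verifying it against all counter-responses via a minimum-cycle-mean computation on a residual one-player graph, and \textsc{NP}-hardness by reduction from 3SAT. But both halves diverge in ways that matter. On membership, the step you yourself flag as the crux --- that any beneficial deviation is witnessed by a cyclic schedule of at most $|C|+1$ memoryless strategies, obtained by Carath\'eodory reasoning on the cycle-mean polytope --- is not merely unproven; it is the wrong tool here, for two reasons. First, the quantifier order: your polytope argument lives in the one-player world obtained \emph{after} a memoryless counter-response is fixed, but the deviation $\vec{\sigma}_C'$ must be a single strategy chosen \emph{before} the adversary responds and must beat every counter-response simultaneously; per-opponent feasibility does not compose into one uniform strategy. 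Second, for two-player multi-dimensional (liminf) mean-payoff objectives the proponent may require infinite memory --- a fact the paper itself invokes at the end of Section~\ref{secn:mean-payoff-games} to explain why \textsc{E-Core} bounds are elusive --- so a certificate of your restricted form need not exist even when a beneficial deviation does. The paper sidesteps all of this machinery: it guesses $C$ and a (finite-state, hence polynomially representable) deviation $\vec{\sigma}_C'$ directly, builds the residual arena $G[\vec{\sigma}_C']$ exactly as in the proof of Theorem~\ref{thm:bdeviation}, and checks, for each $j\in C$ separately and using Karp's algorithm, that every reachable cycle has mean weight strictly above $\pay_j(\rho(\vec{\sigma}))$. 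Your verification step is essentially this; the extra scheduling, LP and $\varepsilon$-perturbation layers are precisely where the gap sits.

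On hardness, your reduction encodes the satisfying assignment in the \emph{choice of coalition} (players are variables, $C$ is the set of true variables), whereas the paper encodes it in the \emph{path} chosen by the deviating coalition: players correspond to clauses, a gate state forces any beneficial deviation to be by the grand coalition (any excluded player could refuse the gate and pin everyone at mean-payoff $0$), the traversal of the variable gadget is the assignment, and player $i$ receives weight $1$ exactly at literals occurring in clause $i$, so all players strictly improve iff every clause is hit. Your encoding is under-specified at exactly the delicate points: you must guarantee that \emph{no} coalition --- including ones corresponding to no assignment --- has a beneficial deviation in the unsatisfiable case, and that \emph{every} member of the intended coalition strictly improves against \emph{all} counter-responses; without a concrete gadget these constraints are the entire difficulty, and the sketch does not discharge them.
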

\begin{proof}
    First correctly guess a deviating coalition \(C\) and a strategy profile \(\vec{\sigma}'_C\) for such a coalition of players. Then, use the following three-step algorithm. First, compute the mean-payoffs that players in \(C\) get on \(\rho(\vec{\sigma})\), that is, a set of values \(z^*_j = \pay_j(\rho(\vec{\sigma}))\) for every \(j\in C\) --- this can be done in polynomial time simply by `running' the strategy profile~\(\vec{\sigma}\). Then compute the graph \(G[\vec{\sigma}'_C]\), which contains all possible behaviours ({\em i.e.}, strategy profiles) for \(\Ag \setminus C\) with respect to \(\vec{\sigma}\) --- this construction is similar to the one used in the proof of Theorem~\ref{thm:bdeviation}, that is, the game when we fix \(\vec{\sigma}'_C\), and can be done in polynomial time. Finally, we ask whether every path \(\rho\) in \(G[\vec{\sigma}'_{C}]\) satisfies \(\pay_j(\rho) > z^*_j\), for every \(j\in C\) --- for this step, we can use Karp's algorithm~\cite{Karp1978} to answer the question in polynomial time for every \(j\in C\). If every path in \(G[\vec{\sigma}'_{C}]\) has this property, then we accept; otherwise, we reject.

    For hardness, we reduce from 3SAT, using a small variation of the construction in~\cite{SistlaC85}. Let \(P=\{x_1,\ldots,x_n\}\) be a set of atomic propositions. Given a Boolean formula \(\varphi = \bigwedge_{1\leq c \leq m} C_c\) (in conjunctive normal form) over \(P\) --- where each \(C_c = l_{c1} \vee l_{c2} \vee l_{c3}\), and each literal \(l_{ck}=x_j\) or \(\neg x_j\), with \(1\leq k\leq 3\), for some \(1\leq j \leq n\) --- we construct \(M = \left(\Ag, \St, s^0, {(\Ac_i)}_{i \in \Ag}, \transf\right)\), an \(m\)-player concurrent game structure defined as follows, and illustrated in Figure~\ref{fig:coop-bd-np-hard}:
    \begin{itemize}
        \item \(\Ag = \{1,\ldots,m\}\);
        \item \(\St = \{x_v \ | \ 1\leq v\leq n\} \cup \{x'_v \ | \ 1\leq v\leq n\} \cup \{y_0,y_n,y^0,y^*\}\);
        \item \(s^0 = y^0\);
        \item \(\Ac_{i} = \{t,f\}\), for every \(i\in\Ag\), and \(\Ac = \Ac_1 \times \cdots \times \Ac_m\);
        \item For \(\transf\), refer to the Figure~\ref{fig:coop-bd-np-hard}, such that \(T=\{(t_1,\ldots,t_m)\}\) and \(F=\Ac\setminus T\).
    \end{itemize}

    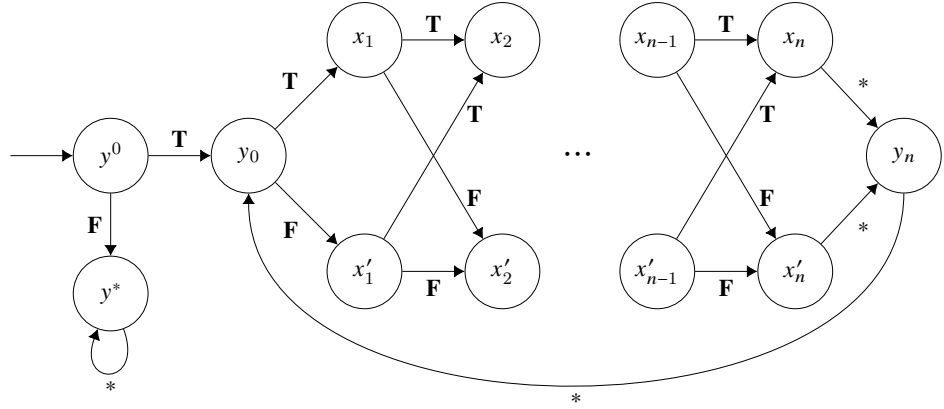
\begin{figure}
        \centering
        \resizebox{\columnwidth}{!}{
            \begin{tikzpicture}[state/.style={circle, draw, minimum size = 1.2cm}, invisible/.style={}]
				\node[state] (y^0) {$y^0$};
				\node[invisible, inner sep=0pt] (start) [left = of y^0] {};
				\node[state] (y_0) [right = of y^0] {$y_0$};
				\node[state] (y^*) [below= of y^0] {$y^*$};		
				\node[state] (x_1) [above right = of y_0] {$x_1$};
				\node[state] (x_1^prime) [below right = of y_0] {$x_1^\prime$};
				\node[state] (x_2) [right = of x_1] {$x_2$};
				\node[state] (x_2^prime) [right = of x_1^prime] {$x_2^\prime$};
				\node[invisible] (cdots_above) [right = 0.5cm of x_2] {};
				\node[invisible] (cdots_below) [right = 0.5cm of x_2^prime] {};
				\node[invisible] (cdots) at ($0.5*(cdots_above)+0.5*(cdots_below)$) {$\bm{\cdots}$};
				\node[state] (x_{n-1}) [right = 0.5cm of cdots_above] {$x_{n-1}$};
				\node[state] (x_{n-1}^prime) [right = 0.5cm of cdots_below] {$x_{n-1}^\prime$};
				\node[state] (x_n) [right = of x_{n-1}] {$x_n$};
				\node[state] (x_n^prime) [right = of x_{n-1}^prime] {$x_n^\prime$};
				\node[invisible] (y_n_above) [right = of x_n] {};
				\node[invisible] (y_n_below) [right = of x_n^prime] {};
				\node[state] (y_n) at ($0.5*(y_n_above)+0.5*(y_n_below)$) {$y_n$};

				\draw [-{Latex[width=2mm]}]
				(start) edge node{} (y^0)
				(y^0) edge node[left]{$\FALSE$} (y^*)
				(y^*) edge[loop, out=290, in=250, distance=1cm] node[below]{$*$} (y^*)
				(y^0) edge node[above]{$\TRUE$} (y_0)

				(y_0) edge node[above left]{$\TRUE$} (x_1)
				(y_0) edge node[below left]{$\FALSE$} (x_1^prime)

				(x_1) edge node[above]{$\TRUE$} (x_2)
				(x_1) edge node[right,pos=0.75]{$\FALSE$} (x_2^prime)
				(x_1^prime) edge node[right,pos=0.75]{$\TRUE$} (x_2)
				(x_1^prime) edge node[below]{$\FALSE$} (x_2^prime)

				(x_{n-1}) edge node[above]{$\TRUE$} (x_n)
				(x_{n-1}) edge node[right,pos=0.75]{$\FALSE$} (x_n^prime)
				(x_{n-1}^prime) edge node[right,pos=0.75]{$\TRUE$} (x_n)
				(x_{n-1}^prime) edge node[below]{$\FALSE$} (x_n^prime)

				(x_n) edge node[above right]{$*$} (y_n)
				(x_n^prime) edge node[below right]{$*$} (y_n)

				(y_n) edge[out=270, in=270] node[below]{$*$} (y_0)
				;
\end{tikzpicture}
        }
        \caption{Concurrent game structure for the reduction from 3SAT.\label{fig:coop-bd-np-hard}}
    \end{figure}

    With \(M\) at hand, we build a mean-payoff game using the following weight function:
    \begin{itemize}
        \item \(w_i(x_v) = 1\) if \(x_v\) is a literal in \(C_i\) and \(w_i(x_v) = 0\) otherwise, for all \(i\in\Ag\) and \(1\leq v\leq n\)
        \item \(w_i(x'_v) = 1\) if \(\neg x_v\) is a literal in \(C_i\) and \(w_i(x'_v) = 0\) otherwise, for all \(i\in\Ag\) and \(1\leq v\leq n\)
        \item \(w_i(y_0)=w_i(y_n)=w_i(y^0)=w_i(y^*)=0\), for all \(i\in\Ag\)
    \end{itemize}

    Then, we consider the game \(G\) over \(M\) and any strategy profile (in memoryless strategies) such that \(\vec{\sigma}(s^0)=y^*\). For any of such strategy profiles the mean-payoff of every player is 0. However, if \(\varphi\) is satisfiable, then there is a path in \(M\), from \(y_0\) to \(y_n\), such that in such a path, for every player, there is a state in which its payoff is not 0. Thus, the grand coalition \(\Ag\) has an incentive to deviate since traversing that path infinitely often will give each player a mean-payoff strictly greater than 0. Observe two things. Firstly, that only if the grand coalition \(\Ag\) agrees, the game can visit \(y_0\) after \(y^0\). Otherwise, the game will necessarily end up in \(y^*\) forever after. Secondly, because we are considering memoryless strategies, the path from \(y_0\) to \(y_n\) followed at the beginning is the same path that will be followed thereafter, infinitely often. Then, we can conclude that there is a beneficial deviation (necessarily for \(\Ag\)) if and only if \(\varphi\) is satisfiable, as otherwise at least one of the players in the game will not have an incentive to deviate (because its mean-payoff would continue to be 0). We then conclude that \((G,\sigma)\in\text{\textsc{Beneficial Deviation}}\) if and only if \(\varphi\) is satisfiable.
\end{proof}

From Theorem~\ref{thm:mp-bdeviation} it follows that checking if no coalition of players has a beneficial deviation with respect to a given strategy profile is co-NP complete, and thus we have:

\begin{corollary}\label{prop:coop-mem-complete-mless}
    If the provided strategy profile \(\vec{\sigma}\) is memoryless, then \textsc{Core Membership} is co-NP-complete.
\end{corollary}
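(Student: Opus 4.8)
The plan is to observe that, on identical instances, \textsc{Core Membership} is exactly the complement of the \textsc{Beneficial Deviation} problem of Theorem~\ref{thm:mp-bdeviation}: by the definition of the core in the mean-payoff setting, a memoryless strategy profile $\vec{\sigma}$ belongs to $\coreset{G}$ if and only if \emph{no} coalition $C \subseteq \Ag$ has a beneficial deviation from $\vec{\sigma}$. Hence $(G,\vec{\sigma}) \in \textsc{Core Membership}$ iff $(G,\vec{\sigma}) \notin \textsc{Beneficial Deviation}$, and the corollary follows from Theorem~\ref{thm:mp-bdeviation} by closure of the relevant classes under complementation.

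For the upper bound, since Theorem~\ref{thm:mp-bdeviation} shows \textsc{Beneficial Deviation} is in NP when the input profile is memoryless, its complement lies in co-NP. Spelled out, a co-NP procedure for \textsc{Core Membership} nondeterministically guesses a coalition $C$ and a candidate deviation $\vec{\sigma}'_C$ and runs the polynomial-time test from the proof of Theorem~\ref{thm:mp-bdeviation} --- compute the target values $z^*_j = \pay_j(\rho(\vec{\sigma}))$ for $j \in C$ by simulating $\vec{\sigma}$, build the residual one-player graph $G[\vec{\sigma}'_C]$ obtained by hard-wiring $\vec{\sigma}'_C$ into the arena, and use Karp's algorithm to decide whether every reachable cycle gives $\pay_j > z^*_j$ for all $j \in C$ --- rejecting precisely on the branches that certify a beneficial deviation.

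For the lower bound, I would reuse verbatim the reduction from 3SAT built in the proof of Theorem~\ref{thm:mp-bdeviation}, which maps a 3CNF formula $\varphi$ to a mean-payoff game $G$ and a memoryless profile $\vec{\sigma}$ with $\vec{\sigma}(s^0)=y^*$ such that the grand coalition $\Ag$ has a beneficial deviation from $\vec{\sigma}$ iff $\varphi$ is satisfiable; equivalently, $\vec{\sigma} \in \coreset{G}$ iff $\varphi$ is unsatisfiable. Since this map is polynomial-time computable independently of the answer, it is a genuine many-one reduction, so it witnesses $\textsc{Unsat} \leq_p \textsc{Core Membership}$; as \textsc{Unsat} (deciding unsatisfiability of a 3CNF formula) is co-NP-complete, \textsc{Core Membership} on memoryless profiles is co-NP-hard. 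Combining the two bounds gives co-NP-completeness. There is no real obstacle here; the only point needing a moment's attention is confirming that the construction of Theorem~\ref{thm:mp-bdeviation} is a many-one (not merely Turing) reduction so that it dualises cleanly --- everything else is routine bookkeeping about co-NP.
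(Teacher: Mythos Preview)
Your proposal is correct and takes essentially the same approach as the paper: the paper derives the corollary in a single sentence from Theorem~\ref{thm:mp-bdeviation} by observing that \textsc{Core Membership} is the complement of \textsc{Beneficial Deviation}, and you do the same, merely spelling out the co-NP upper bound and the dualised 3SAT reduction more explicitly. Your observation that the construction in Theorem~\ref{thm:mp-bdeviation} is a genuine many-one reduction (so that it transfers cleanly to the complement) is the one nontrivial check, and it holds because the game and memoryless profile are built independently of whether $\varphi$ is satisfiable.
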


We can also leverage \textsc{Beneficial Deviation} to obtain the following:

\begin{theorem}\label{prop:coop-ecore-mless}
    If we restrict the existential quantification over strategies to consider only memoryless strategies, then \textsc{E-Core} is in \(\Sigma^P_2\).
\end{theorem}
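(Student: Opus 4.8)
The plan is to place \textsc{E-Core} (with the existential quantifier ranging over memoryless profiles) in $\Sigma^P_2 = \mathrm{NP}^{\mathrm{NP}}$ by having an NP machine guess the witnessing profile and then delegate the stability check to a single NP-oracle query. Concretely: on input a mean-payoff game $G$ together with the property $\varphi$ in the instance, the outer NP computation nondeterministically guesses a memoryless strategy profile $\vec\sigma = (\sigma_1,\dots,\sigma_n)$ with each $\sigma_i \colon \St \to \Ac_i$. Such a profile has size polynomial in $|G|$, so the guess is legitimate for an NP machine.

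Having guessed $\vec\sigma$, the machine must check two things: (i) that the induced run $\rho(\vec\sigma)$ satisfies $\varphi$, and (ii) that $\vec\sigma \in \coreset{G}$. For (i) I would use the fact that a memoryless profile induces an ultimately periodic (``lasso'') run of length at most $|\St|$; this run can be computed in polynomial time, each $\pay_i(\rho(\vec\sigma))$ read off as an exact rational (the average of the weights on the repeating cycle), and the property $\varphi$ --- a condition on this finite lasso and/or on the resulting mean-payoff vector --- evaluated in polynomial time. For (ii), I would appeal to the complement of Theorem~\ref{thm:mp-bdeviation}: since $\vec\sigma$ is memoryless, ``some coalition has a beneficial deviation from $\vec\sigma$'' is exactly an instance of \textsc{Beneficial Deviation}, which is in NP; equivalently, by Corollary~\ref{prop:coop-mem-complete-mless}, ``$\vec\sigma \in \coreset{G}$'' is a co-NP predicate. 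So the outer machine performs the polynomial-time check (i), then issues one query to an NP oracle asking whether $\vec\sigma$ admits a beneficial deviation, and accepts iff (i) holds and the oracle answers ``no''. An NP computation followed by a co-NP verification lies in $\Sigma^P_2$, giving the claimed bound.

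The step needing the most care is (ii) --- the reduction of core-membership to a single oracle call. This works precisely because the remaining alternation hidden inside the core-membership test (there exist a deviating coalition $C$ and a memoryless $\vec\sigma'_C$ such that for all counter-responses of $\Ag\setminus C$ the deviation is strictly improving for every $i \in C$) collapses into NP: the deviating strategy is guessed, and Lemma~\ref{lemma:core-memoryless-counter-responses} lets the universal quantifier over counter-responses be restricted to memoryless strategies, hence checked by a Karp-style polynomial-time computation over the fixed graph $G[\vec\sigma'_C]$ --- this is the content of Theorem~\ref{thm:mp-bdeviation}. It is exactly this collapse that forces the hypothesis that the \emph{outer} existential quantifier only range over memoryless profiles: without it the guessed $\vec\sigma$ would not be of polynomial size, and the core-membership predicate would no longer be known to sit in co-NP. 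Beyond this nothing deep remains --- only routine bookkeeping about lasso runs and exact mean-payoff arithmetic.
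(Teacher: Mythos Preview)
Your proposal is correct and follows essentially the same approach as the paper: guess a memoryless profile (polynomial size), verify the run-level property in polynomial time, and discharge core membership with a single oracle call to the NP-complete \textsc{Beneficial Deviation} problem (equivalently, the co-NP predicate of Corollary~\ref{prop:coop-mem-complete-mless}), yielding an $\mathrm{NP}^{\mathrm{NP}} = \Sigma^P_2$ procedure. If anything, you are more explicit than the paper about the lasso computation for step~(i) and about why Lemma~\ref{lemma:core-memoryless-counter-responses} is what makes the inner universal quantifier collapse; the paper's proof is a terse two-line version of the same argument.
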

\begin{proof}
    Given a game \(G\), we guess a strategy profile \(\vec{\sigma}\) and check that \((G,\vec{\sigma})\) is not an instance of \textsc{Beneficial Deviation}. While the former can be done in polynomial time, the latter can be solved in co-NP using an oracle for \textsc{Beneficial Deviation}. Thus, we have a procedure that runs in NP\,\(^{\text{co-NP}}\) = \(\Sigma^P_2\).
\end{proof}

Owing to the alternations present in the definition of the core and beneficial deviations, we believe the \textsc{E-Core} problem to be complete for \(\Sigma^P_2\), but we have been unable to find a proof of this. Note that suggests a contrast with the corresponding problem for Nash equilibrium in mean-payoff games, which lies in NP~\cite{Ummels2011}. More importantly, the result also shows that the (complexity) dependence on the type of coalitional deviation is only weak, in the sense that different types of beneficial deviations may be considered within the same complexity class, as long as such deviations can be checked with an NP or co-NP oracle.

\subsection{Fulfilled Coalitions}
We now generalise the idea of a fulfilled coalition. To do this, we introduce the notion of a \textbf{lower bound}. Let \(C\subseteq\Ag\) be a coalition in a game \(G\) and let \(\vec{z}_{C}\in \mathbb{Q}^C\). We say that \(\vec{z}_C\) is a {lower bound} for \(C\) if there is a joint strategy \(\vec{\sigma}_C\) for \(C\) such that for all strategies \(\vec{\sigma}_{-C}\) for \(\Ag\setminus C\), we have \(\pay_i(\rho(\vec{\sigma}_C,\vec{\sigma}_{-C}))\geq z_i\), for every \(i\in C\). Based on this definition, we can prove the following, which characterises the core in terms of lower bounds:

\begin{lemma}\label{lmma:eq_to_path_finding_core}
    Let \(\rho\) be a run in \(G\). There is \(\vec{\sigma} \in \coreset{G}\) such that \(\rho = \rho(\vec{\sigma})\) if and only if for every coalition \(C\subseteq\Ag\) and lower bound \(\vec{z}_C \in \mathbb{Q}^C\) for \(C\), there is some \(i\in C\) such that \(z_i \leq \pay_i(\rho)\).
\end{lemma}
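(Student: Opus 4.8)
The plan is to split the statement into a (routine) realizability fact and a coalition-by-coalition correspondence between beneficial deviations and lower bounds. The structural observation driving everything is that, in the mean-payoff setting, whether a coalition $C$ has a beneficial deviation from a strategy profile $\vec{\sigma}$ depends only on the payoff vector of $\rho(\vec{\sigma})$: by definition it is the maximin condition ``$\exists\,\vec{\sigma}_C'\ \forall\,\vec{\sigma}_{-C}'\ \forall\, i\in C:\ \pay_i(\rho(\vec{\sigma}_C',\vec{\sigma}_{-C}')) > \pay_i(\rho(\vec{\sigma}))$'', so it is insensitive to the behaviour of $\vec{\sigma}$ off the path $\rho(\vec{\sigma})$. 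Since any run $\rho$ of $G$ is realized by some strategy profile (read off actions witnessing each transition and let the players play them along $\rho$, arbitrary valid choices off $\rho$), it follows that there is $\vec{\sigma}\in\coreset{G}$ with $\rho(\vec{\sigma}) = \rho$ if and only if no coalition $C$ admits a ``beneficial deviation against $\rho$'', i.e.\ a strategy $\vec{\sigma}_C'$ with $\pay_i(\rho(\vec{\sigma}_C',\vec{\sigma}_{-C}')) > \pay_i(\rho)$ for all $i\in C$ and all $\vec{\sigma}_{-C}'$. Dually, the right-hand side of the lemma is exactly the statement that no coalition $C$ has a rational lower bound $\vec{z}_C\in\mathbb{Q}^C$ with $z_i > \pay_i(\rho)$ for every $i\in C$. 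So it remains to prove, for each non-empty coalition $C$: $C$ has a beneficial deviation against $\rho$ if and only if $C$ has a rational lower bound strictly above $(\pay_i(\rho))_{i\in C}$.

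One direction is immediate: the joint strategy witnessing a lower bound $\vec{z}_C$ with each $z_i > \pay_i(\rho)$ forces $\pay_i \ge z_i > \pay_i(\rho)$ against every counter-response, hence is a beneficial deviation. For the converse, fix a beneficial deviation $\vec{\sigma}_C'$ and set $v_i = \inf_{\vec{\sigma}_{-C}'}\pay_i(\rho(\vec{\sigma}_C',\vec{\sigma}_{-C}'))$, the worst-case payoff it guarantees agent $i$. I would show that this infimum equals the minimum of $\pay_i(\rho(\vec{\sigma}_C',\vec{\sigma}_{-C}'))$ over the \emph{finitely many} memoryless counter-responses, hence is attained: the argument of Lemma~\ref{lemma:core-memoryless-counter-responses} depends on the reference profile only through the value $\pay_i(\rho(\vec{\sigma}))$ and carries over with any rational threshold in its place, so for every rational $q$, if some counter-response keeps $\pay_i$ at or below $q$ then a memoryless one does; taking $q$ strictly between $v_i$ and the memoryless minimum (were they unequal) yields a contradiction. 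Here one uses the product-arena construction of Theorem~\ref{thm:bdeviation} to fold $\vec{\sigma}_C'$ into the structure; to avoid an infinite arena one may first take $\vec{\sigma}_C'$ finite-state by finite-memory sufficiency of the protagonist in multi-mean-payoff games, or instead appeal to Lemma~\ref{lmm:mean-payoff-values-are-closed}. Because $\vec{\sigma}_C'$ is a beneficial deviation, each of the finitely many memoryless counter-responses gives $\pay_i > \pay_i(\rho)$, so their minimum --- namely $v_i$ --- is still $> \pay_i(\rho)$. Picking a rational $z_i$ with $\pay_i(\rho) < z_i < v_i$ for each $i\in C$, the vector $\vec{z}_C = (z_i)_{i\in C}$ is a lower bound for $C$, witnessed by $\vec{\sigma}_C'$ itself (which guarantees $\pay_i \ge v_i > z_i$ everywhere), lying strictly above $(\pay_i(\rho))_{i\in C}$, as needed.

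The main obstacle is this ``$\Rightarrow$'' direction, and specifically the passage from ``strictly beats $\pay(\rho)$ against \emph{every} response'' to ``an \emph{attained}, \emph{rational} lower bound strictly above $\pay(\rho)$''. Two points need care: that the non-deviators' most punishing response against the fixed $\vec{\sigma}_C'$ is actually achieved --- which is where Lemma~\ref{lemma:core-memoryless-counter-responses} (restriction to finitely many memoryless responses), and if necessary finite-memory sufficiency for the deviating coalition, do the work --- and that the achieved worst-case value is bounded \emph{away} from $\pay_i(\rho)$ rather than merely greater than it, which is precisely what the strict preference $\succ_i$ (as opposed to $\succeq_i$) in the definition of a beneficial deviation buys us, and what lets a rational $z_i$ be slipped in between. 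The realizability fact, the two reformulations above, and the ``$\Leftarrow$'' direction are all routine.
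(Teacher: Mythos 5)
Your proof is correct and follows essentially the same route as the paper's: the easy direction is identical (the strategy witnessing a lower bound strictly above \((\pay_i(\rho))_{i\in C}\) is a beneficial deviation), and your harder direction is the contrapositive of the paper's, which likewise extracts from a candidate deviation \(\vec{\sigma}_C'\) the vector of worst-case guaranteed payoffs and applies the lower-bound hypothesis to it. If anything, you are more careful than the paper on the one delicate step --- the paper simply writes \(\min_{\vec{\sigma}'_{-C}}\pay_i(\cdot)\) without justifying that the worst-case value is attained (and rational), whereas you justify attainment via memoryless counter-responses on the product arena, which is the right fix.
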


\begin{proof}
    To show the left-to-right direction, suppose that there exists a member of the core \(\vec{\sigma} \in \coreset{G}\)  and suppose further that there is some coalition \(C\subseteq\Ag\) and lower bound \(\vec{z}_C \in \mathbb{Q}^C\) for \(C\), such that for every \(i\in C\) we have \(z_i > \pay_i(\rho)\). Because \(\vec{z}_C\) is a lower bound for \(C\), and \(z_i > \pay_i(\rho)\), for every \(i\in C\), then there is a joint strategy \(\vec{\sigma}_C\) for \(C\) such that for all strategies \(\vec{\sigma}_{-C}\) for \(\Ag\setminus C\), we have \(\pay_i(\rho(\vec{\sigma}_C,\vec{\sigma}_{-C}))\geq z_i > \pay_i(\rho)\), for every \(i\in C\). Then, it follows that \((G,\vec{\sigma})\in\text{\textsc{Beneficial Deviation}}\), which further implies that \(\vec{\sigma}\) cannot be in the core of \(G\) --- a contradiction to our initial hypothesis.

    For the right-to-left direction, suppose that there is \(\rho\) in \(G\) such that for every coalition \(C\subseteq\Ag\) and lower bound \(\vec{z}_C \in \mathbb{Q}^C\) for \(C\), there is \(i\in C\) such that \(z_i \leq \pay_i(\rho)\). We then simply let \(\vec{\sigma}\) be any strategy profile such that \(\rho = \rho(\vec{\sigma})\). Now, let \(C=\{j,\ldots,k\}\subseteq\Ag\) be any coalition and \(\vec{\sigma}'_{C}\) be any possible deviation of \(C\) from \(\vec{\sigma}\). Either \(\vec{z'}_C = (\pay_j(\rho(\vec{\sigma}_{-C},\vec{\sigma}'_C)),\ldots,\pay_k(\rho(\vec{\sigma}_{-C},\vec{\sigma}'_C)))\) is a lower bound for \(C\) or it is not.

    If we have the former, by hypothesis, we know that there is \(i\in C\) such that \(\pay_i(\rho(\vec{\sigma}_{-C},\vec{\sigma}'_C)) \leq \pay_i(\rho)\). Therefore, \(i\) will not have an incentive to deviate along with \(C\setminus\{i\}\) from \(\vec{\sigma}\), and as a consequence coalition \(C\) will not be able to beneficially deviate from \(\vec{\sigma}\).

    If, on the other hand, \(\vec{z'}_C\) is not a lower bound for \(C\), then, by the definition of lower bounds, we know that it is not the case that \(\vec{\sigma}'_C\) is a joint strategy for \(C\) such that for all strategies \(\vec{\sigma}'_{-C}\) for \(\Ag\setminus C\), we have \(\pay_i(\rho(\vec{\sigma}'_C,\vec{\sigma}'_{-C}))\geq \pay_i(\rho(\vec{\sigma}_{-C},\vec{\sigma}'_C))\), for every \(i\in C\). That is, there exists \(i\in C\) and \(\vec{\sigma}'_{-C}\) for \(\Ag\setminus C\) such that \(\pay_i(\rho(\vec{\sigma}'_C,\vec{\sigma}'_{-C})) < \pay_i(\rho(\vec{\sigma}_{-C},\vec{\sigma}'_C))\). We will now choose \(\vec{\sigma}'_{-C}\) so that, in addition, \(\pay_i(\rho) \geq \pay_i(\rho(\vec{\sigma}'_C,\vec{\sigma}'_{-C}))\) for some \(i\).

    Let \(\vec{z''}_C = (\pay_j(\rho(\vec{\sigma}^j_{-C},\vec{\sigma}'_C)),\ldots,\pay_k(\rho(\vec{\sigma}^k_{-C},\vec{\sigma}'_C)))\) where \(\pay_i(\rho(\vec{\sigma}^i_{-C},\vec{\sigma}'_C))\) is defined to be \(\min_{\vec{\sigma}'_{-C}\in\Sigma_{-C}} \pay_i(\rho((\vec{\sigma}'_{-C}, \vec{\sigma}'_C)))\). That is, \(\vec{\sigma}^i_{-C}\) is a strategy for \(\Ag\setminus C\) which ensures the lowest mean-payoff for \(i\) assuming that \(C\) is playing the joint strategy \(\vec{\sigma}'_C\). By construction \(\vec{z''}_C\) is a lower bound for \(C\) --- since each \(z''_i = \pay_i(\rho(\vec{\sigma}^i_{-C},\vec{\sigma}'_C))\) is the greatest mean-payoff value that \(i\) can ensure for itself when \(C\) is playing \(\vec{\sigma}'_C\), no matter what coalition \(\Ag\setminus C\) does---and therefore, by hypothesis we know that for some \(i\in C\) we have \(\pay_i(\rho(\vec{\sigma}^i_{-C},\vec{\sigma}'_C)) \leq \pay_i(\rho)\). As a consequence, as before, \(i\) will not have an incentive to deviate along with \(C\setminus\{i\}\) from \(\vec{\sigma}\), and therefore coalition \(C\) will not be able to beneficially deviate from \(\vec{\sigma}\). Because \(C\) and \(\vec{\sigma}'_C\) where arbitrarily chosen, we conclude that \(\vec{\sigma}\in\coreset{G}\), proving the right-to-left direction and finishing the proof.
\end{proof}

With this lemma in mind, we want to determine if a given vector, \(\vec{z}_C\), is in fact a lower bound and importantly, how efficiently we can do this. That is, to understand the following decision problem:

\begin{quote}
    \underline{\textsc{Lower Bound}}:\\
    \emph{Given}: Game \(G\), coalition \(C \subseteq \Ag\), and vector \(\vec{z}_C \in \mathbb{Q}^{\Ag}\). \\
    \emph{Question}: Is \(\vec{z}_C\) a lower bound for \(C\) in \(G\)?
\end{quote}

Using the \textsc{Multi-Mean-Payoff-Threshold} decision problem introduced earlier, we can prove the following theorem:

\begin{theorem}
    \textsc{Lower Bound} is co-NP-complete.
\end{theorem}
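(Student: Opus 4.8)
The plan is to prove both bounds by reductions between \textsc{Lower Bound} and \textsc{Multi-Mean-Payoff-Threshold}, which is co-NP-complete. For \emph{membership}, I would reduce \textsc{Lower Bound} to \textsc{Multi-Mean-Payoff-Threshold}. Given an instance $(G,C,\vec{z}_C)$, I build a two-player, turn-based, multi-mean-payoff game $G'$ by ``splitting'' every round of the concurrent game: from a state $s$, player~$1$ of $G'$ (representing the coalition $C$) first commits to a joint action $\vec{a}_C\in\prod_{i\in C}\Ac_i(s)$, reaching an auxiliary state $(s,\vec{a}_C)$, and then player~$2$ of $G'$ (representing $\Ag\setminus C$) picks $\vec{a}_{-C}$, reaching $\transf(s,\vec{a}_C,\vec{a}_{-C})$. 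The $|C|$-dimensional vector $(w_i(s))_{i\in C}$ is placed on the first of these two edges and $\vec{0}$ on the second, and every component of the threshold is halved (a run of $G'$ visits twice as many positions as the corresponding run of $G$, so each component of the mean-payoff is halved, which the rescaling absorbs). Since the number of joint actions of $C$ is bounded by the number of action profiles, which is part of the input size, $G'$ is of polynomial size. The claim is that $\vec{z}_C$ is a lower bound for $C$ in $G$ if and only if player~$1$ has a winning strategy in $G'$, and then co-NP membership follows from the co-NP upper bound for \textsc{Multi-Mean-Payoff-Threshold}.

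The point that needs care in this equivalence is that the turn-based presentation $G'$ appears to hand player~$2$ an information advantage, because in $G'$ player~$2$ observes $\vec{a}_C$ before responding. I would argue this advantage is illusory: coalition strategies are deterministic, so along a fixed play $\vec{a}_C$ is already a function of the sequence of states visited so far; hence any player-$2$ strategy of $G'$, restricted to plays consistent with a fixed deterministic player-$1$ strategy, is in fact a function of the state history, i.e.\ a genuine concurrent counter-strategy $\vec{\sigma}_{-C}$, and symmetrically a winning player-$1$ strategy of $G'$ never needs to consult its own past actions. Translating strategies both ways along these lines preserves the induced run up to the inserted auxiliary states, hence—after the rescaling—the objective, which yields the equivalence. (Note ``lower bound'' is defined via arbitrary strategies, matching \textsc{Multi-Mean-Payoff-Threshold}, where player~$1$ may require infinite memory, so no subtlety about strategy representation arises.)

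For \emph{hardness}, I would reduce \textsc{Multi-Mean-Payoff-Threshold} (co-NP-hard) to \textsc{Lower Bound}. Given a turn-based multi-mean-payoff game $H=(V_1,V_2,v^0,E,w,z^k)$, I construct a concurrent mean-payoff game $G$ with agents $\Ag=\{1,\dots,k{+}1\}$, take $C=\{1,\dots,k\}$, and give player $j\le k$ the weight function induced by the $j$-th component of $w$. To move weights from edges to states I use the edge-to-state transformation already employed in the paper, inserting for each $e\in E$ one auxiliary state $t_e$ carrying weight vector $w(e)$ (zero weight elsewhere); I let the designated member of $C$ (say agent $1$) choose the outgoing edge at states of $V_1$, with all other agents dummies there, and let player $k{+}1$ choose the outgoing edge at states of $V_2$; the start state is $v^0$ and the threshold is $\tfrac12 z^k$. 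Because the state history of a play in $G$ records exactly the play of $H$—the visited $V$-states together with the interleaved auxiliary states—a winning strategy of player~$1$ in $H$ translates directly into a joint strategy of $C$ witnessing that $\tfrac12 z^k$ is a lower bound, and conversely any joint strategy of $C$ achieving that lower bound restricts to a winning strategy of player~$1$ in $H$, with player~$k{+}1$ simulating player~$2$ of $H$ in both directions. The routine parts here are the weight/threshold bookkeeping and the edge-to-state gadget; the step that needs the most care is the membership argument above—justifying that the round-split game $G'$ faithfully captures the $\exists\vec{\sigma}_C\,\forall\vec{\sigma}_{-C}$ alternation of the lower-bound condition despite the apparent asymmetry introduced by turn-basing, so that the translation of (possibly infinite-memory) strategies between $G$ and $G'$ goes through in both directions.
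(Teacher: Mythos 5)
Your proposal is correct and follows essentially the same route as the paper: co-NP membership via a polynomial-time reduction to \textsc{Multi-Mean-Payoff-Threshold} that splits each concurrent round into a player-$1$ move (the coalition's joint action) followed by a player-$2$ move (the counter-coalition's response), and hardness via the reverse reduction with $k{+}1$ agents in which agent $1$ simulates player $1$ and agent $k{+}1$ simulates player $2$. The only differences are bookkeeping (you place the weight vector on one of the two split edges and halve the threshold, where the paper duplicates the weight on both edges and keeps the threshold; you make the edge-to-state gadget explicit where the paper invokes it implicitly), and your explicit justification that the turn-based split gives player $2$ no real informational advantage under the $\exists\vec{\sigma}_C\,\forall\vec{\sigma}_{-C}$ quantification is a welcome elaboration of a step the paper leaves to the reader.
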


\begin{proof}
    We prove membership as well as hardness by reducing to and from \textsc{Multi-Mean-Payoff-Threshold} in the obvious way. First, we show that \textsc{Lower Bound} lies in co-NP by reducing it to \textsc{Multi-Mean-Payoff-Threshold}. Suppose we have an instance, \((G, C, \vec{z}_C)\), and we want to determine if it is in \textsc{Lower-Bound}. We can do this by forming a two-player, multi-mean-payoff game, \(G' = (V_1, V_2, v^0, E, w', z^k)\). Here we have \(V_1 = \St\), \(V_2 = \St \times \Ac_C\) and \(v^0 = s^0\). Additionally, the set of edges of \(G'\), \(E\), is defined as,
    \begin{align*}
        E & =   \{(s, (s, \ac_C)) \mid s \in \St, \ac_C \in \Ac_C\}                                                                  \\
          & \cup  \{((s, \ac_C), \transf(s, (\ac_C, \ac_{\Ag \setminus C}))) \mid \ac_{\Ag \setminus C} \in \Ac_{\Ag \setminus C}\},
    \end{align*}
    and the weight function, \(w' : E \to \mathbb{Z}^{\abs{C}}\), is defined by the following two patterns:
    \begin{align*}
         & w_i'(s, (s, \ac_C)) =w_i(s);                                           \\
         & w_i'((s, \ac_C), \transf(s, (\ac_C, \ac_{\Ag \setminus C}))) = w_i(s).
    \end{align*}
    Finally, we set \(z^{\abs{C}}\) to be \(\vec{z}_c\).

    Informally, the two players of the game are \(C\) and \(\Ag \setminus C\), the vector weight function is given by aggregating the weight functions of \(C\) and the threshold is \(\vec{z}_C\). Now, if in this game, player 1 has a winning strategy, then there exists some strategy \(\vec{\sigma}_C\) such that for all strategies of player 2, \(\vec{\sigma}_{\Ag \setminus C}\), we have that \(\rho(\vec{\sigma}_C, \vec{\sigma}_{\Ag \setminus C})\) is a winning run for player 1. But this means that \(\pay_i(\rho(\vec{\sigma}_C, \vec{\sigma}_{\Ag \setminus C})) \geq z_i\) for all \(i \in C\). But it is easy to verify that this implies that \(\vec{z}_C\) is a lower bound for \(C\) in \(G\). Conversely, if player 1 has no winning strategy, then for all strategies, \(\vec{\sigma}_C\), there exists some strategy \(\vec{\sigma}_{\Ag \setminus C}\) such that \(\rho(\vec{\sigma}_C, \vec{\sigma}_{\Ag \setminus C})\) is not a winning run. This in turn implies that for some \(j \in C\), we have that \(\pay_j(\rho(\vec{\sigma}_C, \vec{\sigma}_{\Ag \setminus C})) < z_j\), which means that \(\vec{z}_C\) is not a lower bound for \(C\) in \(G\). Also note that this construction can be performed in polynomial time, giving us the co-NP upper bound.

    For the lower bound, we go the other way and reduce from \textsc{Multi-Mean-Payoff-Threshold}. Suppose we would like to determine if an instance \(G\) is in \textsc{Multi-Mean-Payoff-Threshold}. Then we form a concurrent mean-payoff game, \(G'\), with \(k+1\) players, where the states of \(G'\) coincide exactly with the states of \(G\). In this game, only the \(1^{\text{st}}\) and \({(k+1)}^{\text{th}}\) player have any influence on the strategic nature of the game. If the game is in a state in \(V_1\), player one can decide which state to move into next. Otherwise, if the game is in a state within \(V_2\), then the \({(k+1)}^{\text{th}}\) player makes a move. Note we only allow moves that agree with moves allowed within \(G\).

    Now, in \(G'\), the first \(k\) players have weight functions corresponding correspond to the \(k\) weight functions of player 1 in \(G\). The last player can have any arbitrary weight function. With this machinery in place, we ask if \(z^k\) is a lower bound for \(\{1,\ldots, k\}\). In a similar manner of reasoning to the above, it is easy to verify that \(G\) is an instance of \textsc{Multi-Mean-Payoff-Threshold} if and only if \(z^k\) is a lower bound for \(\{1,\ldots, k\}\) in the constructed concurrent mean-payoff game. Moreover, this reduction can be done in polynomial time and we can conclude that \textsc{Lower Bound} is co-NP-complete.
\end{proof}

A notable omission from this section is that we have not presented any bounds for the complexity of \textsc{E-Core} in the general case. One reason for the upper bounds remaining elusive to us is due to the fact that whilst in a multi-mean-payoff game, player 2 can act optimally with memoryless strategies, player 1 may require infinite memory~\cite{Velner2015, Kopczynski2006}. Given the close connection between the core in our concurrent, multi-agent setting and winning strategies in multi-mean-payoff games, this raises computational concerns for the \textsc{E-Core} problem. Additionally, in~\cite{Brenguier2015}, the authors study the Pareto frontier of multi-mean-payoff games, and provide an algorithm for constructing a representation of the achievable values of a given game, but this procedure takes an exponential amount of time in the size of the input. The same paper also establishes \(\Sigma^p_2\)-completeness for the \emph{polyhedron value problem}. Both of these problems appear to be intimately related to the core, and we hope we might be able to use these results to gain more insight into the \textsc{E-Core} in the future.

\section{Discussion and Related Work}\label{secn:conc}

In this section, we present some conclusions, briefly discuss related work, discuss some issues related to the \core, and speculate about how to implement our approach using model checking techniques.

\subsection{Coalition formation in cooperative games}

Coalition formation with externalities has been studied in the cooperative game-theory literature~\cite{yi1997stable,uyanik2015nonemptiness,Finus2003}. These works considered several possible formulations of the \core. For instance, the \(\alpha\)-\core\ takes the pessimistic approach that requires that all members of a deviating coalition, \(S\), will benefit from the deviation regardless of the behaviour of the other coalitions that may be formed. Our first definition of the \core\ follows this approach. In contrast, \(\beta\)-\core\ takes an optimistic approach, and requires that the members of a deviating coalition \(S\) will benefit from at least one possibility of coalition formation of the rest of the players. In addition, \(\gamma\)-\core~\cite{Chander2010,Chander2007} assumes that the coalition structure that will be created after a deviation will include the deviating coalition \(S\) and the rest of the coalition structure will consist of all singletons. The ``worth'' of \(S\) is now defined as equal to its payoff in the Nash equilibrium between \(S\) and the other players acting individually, in which the members of \(S\) play their joint best response strategy against the individually best response strategies of the remaining players. It is well-known that \(\alpha\)- and \(\beta\)-characteristic functions lead to large cores~\cite{ray1997equilibrium}, which is consistent with our observation that, with respect to our first definition, the \core\ is never empty. Coalition formation is important in multi-agent system~\cite{shehory1998methods}. However, even though coalition formation with externalities is a widely-studied problem in multi-agent systems, not much work has studied the concept of stability in multi-agent coalition formation with externalities~\cite{mutzari2021coalition}. Instead, in artificial intelligence and multi-agent systems, most research has focused on the structure formation itself~\cite{rahwan2012anytime}. Notice that, from the point of view of cooperative game theory, our games are games of non-transferable utility (i.e., NTU games)~\cite[p.71]{Chalkiadakis2011}. For certain types of NTU games, there are other possible approaches to defining the core, for example through a translation into a conventional (transferable utility) game (see, e.g.~\cite[p.238]{Borm2015}). This approach might conceivably be used in our setting.

\subsection{Rational verification of concurrent games}

The formal verification of temporal logic properties of multi-agent systems, while assuming rational behaviour of the agents in such a system, has been studied for almost a decade now; see, for instance,~\cite{Gutierrez2015,Gutierrez2017a,WooldridgeGHMPT16,Kupferman2016,Kupferman2007}. However, to the best of our knowledge, all these studies have considered a non-cooperative setting, even if coalitional power is allowed, for instance, as in a strong Nash equilibrium. Nonetheless, also in such non-cooperative settings, the complexity of checking whether a temporal logic property is satisfied in a stable outcome of the game is a \textsc{2ExpTime}-complete problem, even for two-player zero-sum games where only trivial coalitions can be formed. On the positive side, cooperative games seem to have better model-theoretic properties in the rational verification framework: with respect to our first definition of the \core\ (which corresponds to the concept of \(\alpha\)-\core\ in the literature of cooperative games), a witness in the core is always guaranteed (since the \core\ is never empty), preserved across bisimilar systems, and can be checked in practice using \atlstar\ model checking techniques, which are supported by, \emph{e.g.}, MCMAS~\cite{Lomuscio2017}, an automated formal verification tool that can perform \atlstar\ model checking (via an implementation of SL[1G], which subsumes \atlstar~\cite{CermakLM15}) and allows specifications of concurrent game structures in ISPL (Interpreted Systems Programming Language, the modelling language of MCMAS, based on the interpreted systems formalism~\cite{Fagin1995}).


\paragraph*{\bf Acknowledgment}
Sarit Kraus was partly supported by the by the Israel Science Foundation (grants No. 
1958/20).
Thomas Steeples gratefully acknowledges the support of the EPSRC Centre for Doctoral Training in Autonomous Intelligent Machines and Systems EP/L015897/1, along with the Ian Palmer Memorial Scholarship.
Michael Wooldridge was supported by a UKRI Turing AI World Leading Researcher Fellowship (grant EP/W002949/1).
Both Sarit Kraus and Michael Wooldridge were also supported by the EU project TAILOR (Grant 992215).
\bibliography{references}

\appendix 

\section{Proof that Example~\ref{ex:empty-strong-core} has an empty strong core}

We claimed earlier that there exist four players games with an empty strong core, and gave an example of such a game in Example~\ref{ex:empty-strong-core}. Below, we present a table giving, for every action profile, a strong beneficial deviation for some set of players. The notation \(\{2 \mapsto 0, 4 \mapsto 1\}\) means that the coalition \(\{2,4\}\) have a strong beneficial deviation where player 2 plays \(0\) and player 4 player \(1\). Note that a strong beneficial deviation for a coalition does not require all players to change their action from the status quo.

\begin{figure}[H]
\begin{tabular}{llll}
\toprule
Action profile & Set of winning players & Strong beneficial deviation \\
\midrule
\(0000\) & \(\{1,2\}\) & \(\{3 \mapsto 1 \}\) \\
\(0001\) & \(\{1,2\}\) & \(\{3 \mapsto 1\}\) \\
\(0010\) & \(\{1,3\}\) & \(\{2 \mapsto 1, 4 \mapsto 1\}\) \\ 
\(0011\) & \(\{1,3\}\) & \(\{2 \mapsto 1, 4 \mapsto 1\}\) \\
\(0100\) & \(\{1,2\}\) & \(\{3 \mapsto 1\}\) \\
\(0101\) & \(\{1,4\}\) & \(\{2 \mapsto 0\}\) \\
\(0110\) & \(\{1,3\}\) & \(\{2 \mapsto 1 ,4 \mapsto 1\}\) \\
\(0111\) & \(\{2,4\}\) & \(\{1 \mapsto 1\}\) \\
\(1000\) & \(\{3,4\}\) & \(\{1 \mapsto 0\}\) \\
\(1001\) & \(\{2,3\}\) & \(\{1 \mapsto 0\}\) \\
\(1010\) & \(\{2,3\}\) & \(\{1 \mapsto 0\}\) \\
\(1011\) & \(\{2,3\}\) & \(\{1 \mapsto 0\}\) \\
\(1100\) & \(\{1,2\}\) & \(\{3 \mapsto 1\}\) \\
\(1101\) & \(\{1,4\}\) & \(\{2 \mapsto 0\}\) \\
\(1110\) & \(\{2,3\}\) & \(\{1 \mapsto 0\}\) \\
\(1111\) & \(\{1,4\}\) & \(\{2 \mapsto 0\}\) \\
\bottomrule
\end{tabular}
\end{figure}

This table illustrates that the provided game has an empty strong core. It is an easy, but somewhat tedious, task to verify that the deviating players win under all counter-responses under which the original winning players win.

\section{Proof of Theorem~\ref{thm:strong-core-nonempty}}

We will prove this theorem by assuming the existence of a three player game \(G\) with an empty strong core, and then proving two contradictory facts, which immediately implies the result. The two claims are:
\begin{enumerate}
\item There is some strategy \(\vec{\sigma}\) on \(G\) that models exactly two of \(\gamma_1\), \(\gamma_2\), \(\gamma_3\).
\item There is no strategy \(\vec{\sigma}\) on \(G\) that models exactly two of \(\gamma_1\), \(\gamma_2\), \(\gamma_3\).
\end{enumerate}

We use the following lemma throughout:

\begin{lemma}
  Let \(G\) be a three-player game with an empty strong core. Then there does not exist a strategy profile which satisfies the goals of all three players.
\end{lemma}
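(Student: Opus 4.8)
The plan is to argue by contraposition: rather than starting from the assumption of an empty strong core, I would show directly that if some strategy profile satisfies all three goals then the strong core is non-empty, which gives the lemma at once. So suppose $\vec{\sigma}^{*}$ is a strategy profile with $\rho(\vec{\sigma}^{*}) \models \gamma_i$ for every $i \in \{1,2,3\}$. The one observation to make is that then $\Winners(\rho(\vec{\sigma}^{*})) = \Ag$, and hence $\Losers(\rho(\vec{\sigma}^{*})) = \emptyset$.

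Next I would unwind the definition of a strong beneficial deviation. Its first condition requires a (non-empty) coalition $C$ with $C \subseteq \Losers(\rho(\vec{\sigma}^{*}))$. Since $\Losers(\rho(\vec{\sigma}^{*})) = \emptyset$, no such $C$ exists, so $\vec{\sigma}^{*}$ admits no strong beneficial deviation whatsoever. By the definition of $\ccoreset{G}$, this means $\vec{\sigma}^{*} \in \ccoreset{G}$, so $\ccoreset{G} \neq \emptyset$, contradicting the hypothesis of the lemma. Hence no profile can satisfy $\gamma_1 \wedge \gamma_2 \wedge \gamma_3$.

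The argument has essentially no hard step; the only point worth flagging is the convention --- used implicitly throughout the paper, e.g.\ in the definitions of beneficial deviation and strong beneficial deviation --- that a deviating coalition is non-empty. Without it, condition~3 of a strong beneficial deviation would be vacuously met by $C = \emptyset$ and the solution concept would trivialise, so I would state this convention explicitly before invoking it. I would also note that the reasoning uses nothing specific to three players: in any LTL game, a profile satisfying every player's goal lies in the strong core. The three-player hypothesis here is merely inherited from the enclosing proof of Theorem~\ref{thm:strong-core-nonempty} and plays no role in this lemma.
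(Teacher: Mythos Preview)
Your proposal is correct and follows essentially the same argument as the paper: if some profile satisfies all players' goals, then it has no losers, hence admits no strong beneficial deviation, and so lies in the strong core. The paper compresses this into a single sentence, whereas you unpack the definitions and flag the non-emptiness convention for deviating coalitions and the irrelevance of the three-player hypothesis --- all accurate and harmless elaborations on the same idea.
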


\begin{proof}
If \(G\) had a strategy satisfying all of the players' goals, such a strategy would be in the strong core, which is a contradiction. 
\end{proof}

The first claim we need to prove is the more straightforward of the two:

\begin{lemma}\label{lmm:strong-core-three-player-non-empty-first-part}
        Let \(G\) be a three-player game with an empty strong core. Then there exists some strategy \(\vec{\sigma}\) on \(G\) that models exactly two of \(\gamma_1\), \(\gamma_2\), \(\gamma_3\).
\end{lemma}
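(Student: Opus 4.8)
The plan is to argue by contradiction, in the same spirit as the two-player case but with a bounded case analysis. Assume $G$ is a three-player game with empty strong core, and suppose for contradiction that no strategy profile satisfies exactly two of $\gamma_1,\gamma_2,\gamma_3$. Together with the preceding lemma (no profile satisfies all three), every strategy profile of $G$ then has \emph{at most one winner}. The first step is to extract two easy consequences of this ``at most one winner'' property for strong beneficial deviations (recall that, as usual, a deviating coalition is nonempty). First, if $\vec{\sigma}_{C}'$ is a strong beneficial deviation from $\vec\sigma$, then by condition~2 we have $C\subseteq\Winners(\vec\sigma_{-C},\vec{\sigma}_C')$, and since this winner set has size $\le 1$, $C=\set{a}$ is a singleton and $\Winners(\vec\sigma_{-a},\vec{\sigma}_a')=\set{a}$. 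Second, when the profile being deviated from has winner set exactly $\set{b}$ (so $a\neq b$, by condition~1), condition~3 collapses to the statement that $\vec{\sigma}_a'$ \emph{forces} $\neg\gamma_b$, i.e.\ guarantees $\rho\not\models\gamma_b$ against every counter-response: since winner sets have size $\le 1$, the hypothesis $\set{b}\subseteq\Winners(\cdot)$ of condition~3 is equivalent to $\Winners(\cdot)=\set{b}$, which contradicts its conclusion $a\in\Winners(\cdot)$, so condition~3 is equivalent to ``$b\notin\Winners(\cdot)$ always''. Likewise, when the profile being deviated from has an empty winner set, conditions~2 and~3 together reduce to: $a$ has a strategy that forces $\gamma_a$ (i.e.\ $\set{a}$ is fulfilled).

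Next I would use emptiness of the strong core to build a chain of deviations. Starting from an arbitrary profile, which cannot be in the strong core, a strong beneficial deviation exists; by the first observation it is by a singleton $\set{a_1}$ and yields a profile $\vec\sigma^1$ with $\Winners(\vec\sigma^1)=\set{a_1}$. Iterating, we obtain profiles $\vec\sigma^1,\vec\sigma^2,\ldots$ with $\Winners(\vec\sigma^j)=\set{a_j}$, $a_{j+1}\neq a_j$, and — by the second observation — for each $j\ge 1$ a strategy for player $a_{j+1}$ that forces $\neg\gamma_{a_j}$. Now do a short case split on the first few terms of $a_1,a_2,a_3,\ldots$. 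If $a_3=a_1$, then we already have a strategy for $a_2$ forcing $\neg\gamma_{a_1}$ and a strategy for $a_3=a_1$ forcing $\neg\gamma_{a_2}$: a ``2-cycle'' of forcing strategies on $\{a_1,a_2\}$. Otherwise $\{a_1,a_2,a_3\}=\Ag$; take one more step, so $a_4\in\{a_1,a_2\}$. If $a_4=a_1$, the strategies for $a_2,a_3,a_4$ force $\neg\gamma_{a_1},\neg\gamma_{a_2},\neg\gamma_{a_3}$ respectively: a ``3-cycle''. If $a_4=a_2$, the strategies for $a_3$ and $a_4=a_2$ force $\neg\gamma_{a_2}$ and $\neg\gamma_{a_3}$: a 2-cycle on $\{a_2,a_3\}$. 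So in every case we obtain either (i) two agents $p,q$ with a strategy for $p$ forcing $\neg\gamma_q$ and one for $q$ forcing $\neg\gamma_p$, or (ii) three agents $p,q,r$ with strategies for $q,r,p$ forcing $\neg\gamma_p,\neg\gamma_q,\neg\gamma_r$.

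Finally I would exhibit, in each case, a profile that lies in the strong core, contradicting emptiness. In case (ii), let $\vec\sigma^*$ have $q,r,p$ play the three forcing strategies; then $\rho(\vec\sigma^*)\models\neg\gamma_p\wedge\neg\gamma_q\wedge\neg\gamma_r$, so $\Winners(\vec\sigma^*)=\emptyset$. By the observations, the only candidate deviator is a singleton $\set{x}$, which from an empty-winner profile would need $x$ to force $\gamma_x$ — impossible, since some other agent forces $\neg\gamma_x$. Hence $\vec\sigma^*\in\ccoreset{G}$. In case (i), write $d$ for the third agent. If $\set{d}$ is fulfilled, via a winning strategy $\delta$, let $\vec\sigma^*$ have $p,q$ play their forcing strategies and $d$ play $\delta$, so $\Winners(\vec\sigma^*)=\set{d}$; otherwise let $d$ play anything, so $\Winners(\vec\sigma^*)\subseteq\set{d}$. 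In both situations $\set{p}$ and $\set{q}$ cannot deviate, since condition~2 would require $p$ (resp.\ $q$) to win against a fixed counter-strategy that forces $\neg\gamma_p$ (resp.\ $\neg\gamma_q$); $d$ is either not a loser or not fulfilled; and no coalition of size $\ge 2$ can have two winners. Hence $\vec\sigma^*\in\ccoreset{G}$, the desired contradiction.

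The routine parts are the two observations and the final membership checks, which are just unwindings of the definitions under the ``at most one winner'' hypothesis. The delicate part — the ``careful case analysis'' mentioned in the theorem statement — is the middle paragraph: keeping the bookkeeping straight so that the deviation chain is guaranteed to close up into a 2-cycle or 3-cycle of forcing strategies within four steps, and tracking precisely which strategy of which player forces which goal false.
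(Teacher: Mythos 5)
Your proposal is correct. Its first half coincides with the paper's argument: both reduce to the situation where (by the preceding lemma plus the contradiction hypothesis) every profile has at most one winner, both observe that any strong beneficial deviation must then be by a singleton \(\set{a}\) whose new profile has winner set exactly \(\set{a}\), and both reinterpret condition~3 of such a deviation from a profile with winner set \(\set{b}\) as ``\(a\)'s new strategy forces \(\neg\gamma_b\) against every counter-response'' --- the paper phrases the contrapositive as the existence of a ``punishment''. Where you genuinely diverge is the endgame. The paper fixes the chain to be \(\set{1}\to\set{2}\to\set{3}\) (the return \(\set{2}\to\set{1}\) is ruled out on the spot, because that deviation would itself exhibit a punishment for \(\sigma_2'\)) and then notes that whichever loser deviates next, the resulting profile is a punishment for an earlier link; the contradiction is internal to the chain and needs only three steps. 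You instead run the chain a fourth step until it closes into a \(2\)- or \(3\)-cycle of mutually forcing strategies and then assemble those strategies into an explicit profile from which no coalition can strongly beneficially deviate, contradicting emptiness of \(\ccoreset{G}\) directly. This costs you a little more bookkeeping --- the extra step, the \(2\)-cycle cases (which, as in the paper, could actually be dispatched immediately: the second deviation of a \(2\)-cycle already supplies a punishment for the first), and the sub-case on whether the third player is a fulfilled singleton --- but it buys a constructive payoff the paper's proof lacks: under the ``at most one winner'' hypothesis you actually exhibit a member of the strong core. Your final membership checks are all sound: condition~2 fails for \(p\) and \(q\) because the opponent's forcing strategy remains in place, condition~1 or condition~3 fails for the third player according as it is or is not fulfilled, and condition~2 fails for any coalition of size two or more since it would need two winners.
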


\begin{proof}
Given claim 1, we know that every strategy on \(G\) must model exactly zero, one or two goals, but not three. Suppose, for contradiction, that \(G\) only has strategies modelling none or one of \(\gamma_1\), \(\gamma_2\), \(\gamma_3\). First, note that if \(G\) only had strategies satisfying zero of \(\gamma_1\), \(\gamma_2\), \(\gamma_3\), none of those strategies could admit a strong beneficial deviation, and so would all be in the \ccore. Hence, \(G\) must have a strategy satisfying exactly one of \(\gamma_1\), \(\gamma_2\), \(\gamma_3\). Let such a strategy be denoted \(\vec{\sigma} = (\sigma_1, \sigma_2, \sigma_3)\), and without loss of generality, we may assume that it satisfies only \(\gamma_1\). Now, as by assumption \(\vec{\sigma}\) does not lie in \(\ccoreset{G}\), there must be a strong beneficial deviation from this strategy for some coalition \(C \subseteq \{2,3\}\). Since under a strong beneficial deviation, every member of the coalition is winning, and since we assumed that \(G\) only has strategies satisfying less than two of the goals, the coalition must be a singleton. Again, without loss of generality, we may assume that it is player 2 who has a deviation \(\sigma^\prime_2\). By assumption, we have that \(\sigma_1, \sigma_2^\prime, \sigma_3\) only models \(\gamma_2\).

But again, \((\sigma_1, \sigma^\prime_2, \sigma_3)\) must have a strong beneficial deviation for \(1\) or \(3\) to a strategy satisfying only \(\gamma_1\), or only \(\gamma_3\), respectively. If the former case were true, then this would contradict the fact that \(\sigma_2^\prime\) was a strong beneficial deviation from \(\vec{\sigma}\). Hence, we must be in the latter case, and have a strong beneficial deviation \(\sigma^\prime_3\) for player 3.

The new strategy, \((\sigma_1, \sigma^\prime_2, \sigma^\prime_3)\) must once again admit a strong beneficial deviation. If it admitted a strong beneficial deviation \(\sigma^\prime_1\) for \(1\) satisfying \(\gamma_1\) only, then the joint strategy \((\sigma^\prime_1, \sigma^\prime_3)\) would be a punishment for \(\sigma^\prime_2\). Otherwise, if a strong beneficial deviation \(\sigma''_2\) for 2 to a strategy satisfied \(\gamma_2\) only, then \((\sigma_1, \sigma''_2)\) would constitute a punishment for \(\sigma^\prime_3\). Either way, we reach a contradiction, concluding our proof.
\end{proof}

Before proving the second claim, we will need to appeal to two additional technical lemmas. The first is the following:

\begin{lemma}\label{lmm:strong-core-three-player-non-empty-technical-lemma-1}
Let \(\{a,b,c\}\) be a permutation of \(\{1,2,3\}\). Then there do not exist strategies \((\sigma_a, \sigma_b, \sigma_c)\), \((\sigma_a, \sigma_b, \sigma^\prime_c)\), \((\sigma^\prime_a, \sigma_b, \sigma^\prime_c)\) of \(G\) such that all of the following hold:
\begin{itemize}
\item{\((\sigma_a, \sigma_b, \sigma_c)\) satisfies \(\gamma_a\) and \(\gamma_b\) but not \(\gamma_c\)}
\item{\((\sigma_a, \sigma_b, \sigma^\prime_c)\) satisfies \(\gamma_c\) only}
\item{\((\sigma^\prime_a, \sigma_b, \sigma^\prime_c)\) satisfies \(\gamma_a\) only}
\item{\(\sigma^\prime_c\) is a beneficial deviation for \(c\) from \((\sigma_a, \sigma_b, \sigma_c)\) to \((\sigma_a, \sigma_b, \sigma^\prime_c)\)}
\item{\(\sigma^\prime_a\) is a beneficial deviation for \(a\) from \((\sigma_a, \sigma_b, \sigma^\prime_c)\) to \((\sigma^\prime_a, \sigma_b, \sigma^\prime_c)\)}
\end{itemize}
\end{lemma}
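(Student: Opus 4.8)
The plan is to derive the contradiction directly from the two postulated deviations, using nothing more than the definition of a \emph{beneficial deviation} fixed in Section~\ref{secn:core}: a beneficial deviation by a coalition requires the deviating players to become winners \emph{against every complementary strategy profile}, not merely against the one that is actually being played. So the first step is to unpack each of the two hypotheses into this quantified form and then instantiate the quantifier at a convenient point.

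From ``\(\sigma'_c\) is a beneficial deviation for \(c\) from \((\sigma_a,\sigma_b,\sigma_c)\)'' I read off that \(c\in\Losers(\sigma_a,\sigma_b,\sigma_c)\) (already guaranteed by the first bullet of the hypothesis) and, crucially, that \(c\in\Winners(\tau_a,\tau_b,\sigma'_c)\) for \emph{every} pair \((\tau_a,\tau_b)\); instantiating at \((\tau_a,\tau_b)=(\sigma'_a,\sigma_b)\) gives \(c\in\Winners(\sigma'_a,\sigma_b,\sigma'_c)\). Symmetrically, from ``\(\sigma'_a\) is a beneficial deviation for \(a\) from \((\sigma_a,\sigma_b,\sigma'_c)\)'' I get \(a\in\Winners(\sigma'_a,\tau_b,\tau_c)\) for every \((\tau_b,\tau_c)\); instantiating at \((\tau_b,\tau_c)=(\sigma_b,\sigma'_c)\) gives \(a\in\Winners(\sigma'_a,\sigma_b,\sigma'_c)\). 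Hence \(\{a,c\}\subseteq\Winners(\sigma'_a,\sigma_b,\sigma'_c)\), which directly contradicts the hypothesis that \((\sigma'_a,\sigma_b,\sigma'_c)\) satisfies \(\gamma_a\) only (equivalently, that \(\Winners(\sigma'_a,\sigma_b,\sigma'_c)=\{a\}\)). That contradiction is the whole proof.

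The only genuine point of care — the ``main obstacle'', such as it is — is matching the informal ``deviation from \ldots{} to \ldots'' phrasing with the quantified definition: the second hypothesis is a deviation \emph{from} \((\sigma_a,\sigma_b,\sigma'_c)\), so the non-deviators there are \(b\) and \(c\) playing \((\sigma_b,\sigma'_c)\), and it is exactly this complementary profile that must be substituted in order to land on the common profile \((\sigma'_a,\sigma_b,\sigma'_c)\). Note that with the \(\alpha\)-style notion of Section~\ref{secn:core} one does not even need the auxiliary lemma that no profile of \(G\) satisfies all three goals; that lemma (together with the assumption that \(G\) has an empty strong core) would only be required if one worked instead with the weaker ``strong beneficial deviation'' notion, in which case one would force a further strong beneficial deviation out of \((\sigma'_a,\sigma_b,\sigma'_c)\) and trace a longer chain of profiles. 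I expect the real work around this result to lie not in its own proof but in how it and its symmetric variants are slotted into the exhaustive case analysis behind the second claim of Theorem~\ref{thm:strong-core-nonempty}.
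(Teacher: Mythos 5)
Your argument is correct for the statement you actually prove, but that statement is not the one the paper needs, and the discrepancy is exactly the point you flag and then set aside. In this appendix ``beneficial deviation'' has to be read as the \emph{strong} beneficial deviation of Section~\ref{secn:strong-core}: the lemma sits inside the proof of Theorem~\ref{thm:strong-core-nonempty}, the standing hypothesis is that \(G\) is a three-player game with an empty \emph{strong} core, and the place where the lemma is invoked (the proof of Lemma~\ref{lmm:strong-core-three-player-non-empty-technical-lemma-2}) feeds it deviations that are only guaranteed to be strong beneficial deviations, since they are extracted from the emptiness of \(\ccoreset{G}\). Under the \(\alpha\)-style reading the lemma is indeed a two-line triviality, but it is then too weak to rule out the cases it is used to rule out, because a strong beneficial deviation need not be an \(\alpha\)-beneficial deviation.

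Under the intended reading your instantiation step breaks down. Condition (3) of a strong beneficial deviation only quantifies over counter-responses under which the \emph{original winners remain winners}. For \(\sigma'_c\), deviating from \((\sigma_a,\sigma_b,\sigma_c)\) whose winners are \(\{a,b\}\), instantiating the counter-response at \((\sigma'_a,\sigma_b)\) yields only the implication ``if \(\{a,b\}\subseteq\Winners(\sigma'_a,\sigma_b,\sigma'_c)\) then \(c\in\Winners(\sigma'_a,\sigma_b,\sigma'_c)\)'', and the antecedent is false because the third bullet says \((\sigma'_a,\sigma_b,\sigma'_c)\) satisfies \(\gamma_a\) only. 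So no contradiction is available from the five hypotheses alone; indeed, without the empty-strong-core assumption the configuration described in the lemma can perfectly well exist (for instance if \((\sigma'_a,\sigma_b,\sigma'_c)\) happens to lie in the strong core, nothing forces any further move). The paper's proof instead uses the emptiness of \(\ccoreset{G}\) to force a further strong beneficial deviation out of \((\sigma'_a,\sigma_b,\sigma'_c)\), and then runs an exhaustive case analysis on the winner sets of the successive forced profiles, deriving in each branch a credible ``punishment'' that contradicts one of the earlier strong deviations. That chain argument --- which you acknowledge would be needed but do not carry out --- is the actual content of the lemma, so the proof of the intended statement is missing.
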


\begin{proof}
For a contradiction, we assume the statement of the lemma is not the case, and all of the statements hold. Without loss of generality, we can also assume that \(a=1\), \(b=2\), and \(c=3\).

Now, the third strategy, \((\sigma^\prime_1, \sigma_2, \sigma^\prime_3)\) must have a strong beneficial deviation for \(2\), \(3\), or both. If it has a deviation \((\sigma^\prime_2, \sigma''_3)\) (with possibly \(\sigma^\prime_2 = \sigma_2\)) to a strategy that satisfies \(\gamma_3\) but not \(\gamma_1\), then it would be a punishment for \(\sigma^\prime_1\). Therefore, the only possible sets of winners in the new strategy are \(\{2\}\), \(\{1,2\}\) and \(\{1,3\}\). We consider each of these cases in turn.

If player 2 is the only winner under the new strategy, the strong beneficial deviation must be \(\sigma^\prime_2\) by \(2\) only, to the strategy \((\sigma^\prime_1, \sigma^\prime_2, \sigma^\prime_3)\).

Again, this strategy has to have a beneficial deviation for \(1\), \(3\), or both. Let that deviation be \((\sigma''_1, \sigma''_3)\) with possibly \(\sigma''_1\) = \(\sigma^\prime_1\) or \(\sigma''_3\) = \(\sigma^\prime_3\) but not both. Consider the following cases for the resulting strategy, \((\sigma''_1, \sigma^\prime_2, \sigma''_3)\):
\begin{itemize}
\item If only \(\gamma_1\), or exactly \(\gamma_1\) and \(\gamma_3\) are satisfied, then \((\sigma''_1, \sigma''_3)\) is a punishment for \(\sigma^\prime_2\); a contradiction.
\item If exactly \(\gamma_1\) and \(\gamma_2\) are satisfied, then since \(3\) loses, we must necessarily have \(\sigma''_3 = \sigma^\prime_3\) (the deviation is just by \(1\)). In this case, the joint strategy \((\sigma''_1, \sigma^\prime_2)\) can take us from \((\sigma_1, \sigma_2, \sigma^\prime_3)\) to \((\sigma''_1, \sigma^\prime_2, \sigma''_3)\) and form a punishment for \(\sigma^\prime_3\); a contradiction.
\item If only \(\gamma_3\), or exactly \(\gamma_2\) and \(\gamma_3\) are satisfied, then since either way \(1\) loses, we must have \(\sigma''_1 = \sigma^\prime_1\) and the joint strategy \((\sigma^\prime_2, \sigma''_3)\) can take us from \((\sigma^\prime_1, \sigma_2, \sigma^\prime_3)\) to \((\sigma''_1, \sigma^\prime_2, \sigma''_3)\) and form a punishment for \(\sigma^\prime_1\); a contradiction.
\end{itemize}

Now suppose that the winners in the new strategy are \(1\) and \(2\). Since \(1\) was a winner in \((\sigma^\prime_1, \sigma_2, \sigma^\prime_3)\) already, the strong beneficial deviation must once again be \(\sigma^\prime_2\) by \(2\) only, to the strategy \((\sigma^\prime_1, \sigma^\prime_2, \sigma^\prime_3)\).

But now, we can see that the joint strategy \((\sigma^\prime_1, \sigma^\prime_2)\) is a punishment for \(\sigma^\prime_3\) (both \(1\) and \(2\) remain winners); a contradiction.

Finally, suppose the winners in the new strategy are \(1\) and \(3\). Since \(1\) was a winner in \((\sigma^\prime_1, \sigma_2, \sigma^\prime_3)\) already, the strong beneficial deviation must this time be \(\sigma''_3\) by \(3\) only, to the strategy \((\sigma^\prime_1, \sigma_2, \sigma''_3)\).

The only loser, \(2\), must have a strong beneficial deviation \(\sigma^\prime_2\) from this strategy. Now, see that:
\begin{itemize}
\item{If exactly \(\gamma_1\) and \(\gamma_2\) were satisfied in the resulting strategy --- \((\sigma^\prime_1, \sigma^\prime_2, \sigma''_3)\), then the joint strategy \((\sigma^\prime_1, \sigma^\prime_2)\) would be a punishment for \(\sigma^\prime_3\).}
\item{If exactly \(\gamma_2\) and \(\gamma_3\) were satisfied in \((\sigma^\prime_1, \sigma^\prime_2, \sigma''_3)\) then the joint strategy \((\sigma^\prime_2, \sigma''_3)\) would be a punishment for \(\sigma^\prime_1\).}
\end{itemize}
Hence, \(2\) must be the only winner in \((\sigma^\prime_1, \sigma^\prime_2, \sigma''_3)\).

Once again, this strategy must admit a strong beneficial deviation for \(1\), \(3\), or both. Let that deviation be \((\sigma''_1, \sigma'''_3)\) with possibly \(\sigma''_1\) = \(\sigma^\prime_1\) or \(\sigma'''_3\) = \(\sigma''_3\) but not both. Consider the following cases for the resulting strategy, \((\sigma''_1, \sigma^\prime_2, \sigma'''_3)\):
\begin{itemize}
\item{If only \(\gamma_1\), or exactly \(\gamma_1\) and \(\gamma_2\) were satisfied then we would necessarily have \(\sigma'''_3 = \sigma''_3\) and the joint strategy \((\sigma''_1, \sigma^\prime_2)\) would be a punishment for \(\sigma''_3\), contradiction.}
\item{If exactly \(\gamma_1\) and \(\gamma_3\) were satisfied then \((\sigma''_1, \sigma'''_3)\) would be a punishment for \(\sigma^\prime_2\), contradiction.}
\item{If only \(\gamma_3\), or exactly \(\gamma_2\) and \(\gamma_3\) were satisfied then we would necessarily have \(\sigma''_1 = \sigma^\prime_1\) and the joint strategy \((\sigma^\prime_2, \sigma'''_3)\) would be a punishment for \(\sigma^\prime_1\), contradiction.}
\end{itemize}
Thus all three possible sets of winners lead to a contradiction, yielding our result.
\end{proof}

The second of the two technical lemmas is then as follows:

\begin{lemma}\label{lmm:strong-core-three-player-non-empty-technical-lemma-2}
Let \(\{a,b,c\} = \{1,2,3\}\) be the set \(Ag\). There cannot be strategies \((\sigma_a, \sigma_b, \sigma_c)\), \((\sigma_a, \sigma_b, \sigma^\prime_c)\) of \(G\) such that all of the following hold:
\begin{itemize}
\item{\((\sigma_a, \sigma_b, \sigma_c)\) satisfies \(\gamma_a\) and \(\gamma_b\) but not \(\gamma_c\)}
\item{\((\sigma_a, \sigma_b, \sigma^\prime_c)\) satisfies \(\gamma_c\) only}
\item{\(\sigma^\prime_c\) is a strong beneficial deviation for \(c\) from \((\sigma_a, \sigma_b, \sigma_c)\) to \((\sigma_a, \sigma_b, \sigma^\prime_c)\)}
\end{itemize}
\end{lemma}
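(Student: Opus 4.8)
The plan is to argue by contradiction: assume the two profiles and the strong beneficial deviation $\sigma'_c$ all exist, and exploit the third clause in the definition of a strong beneficial deviation — which, for a singleton deviator $\{c\}$, says that \emph{every} counter-response of $\Ag\setminus\{c\}$ that keeps the original winners winning also keeps $c$ winning — together with $\ccoreset{G}=\emptyset$, to build a profile that satisfies all three goals, contradicting the earlier lemma that no such profile exists. Fix the permutation so that $a=1$, $b=2$, $c=3$, and write $W(\cdot)$ for $\Winners(\rho(\cdot))$; thus $W(\sigma_1,\sigma_2,\sigma_3)=\{1,2\}$, $W(\sigma_1,\sigma_2,\sigma'_3)=\{3\}$, and from the strong-deviation clause for $\sigma'_3$, for every $(\tau_1,\tau_2)$ we have $\{1,2\}\subseteq W(\tau_1,\tau_2,\sigma'_3)$ implies $3\in W(\tau_1,\tau_2,\sigma'_3)$.

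First, $(\sigma_1,\sigma_2,\sigma'_3)$ has loser set $\{1,2\}$, so since $\ccoreset{G}=\emptyset$ some $C\subseteq\{1,2\}$ strongly beneficially deviates from it. If $C=\{1,2\}$ the new profile makes both $1$ and $2$ winners, and the displayed condition on $\sigma'_3$ then also makes $3$ a winner — all three goals hold, a contradiction. Hence $C$ is a singleton, and by the symmetry of players $1$ and $2$ in the hypotheses we may take $C=\{1\}$, with deviation $\sigma'_1$ yielding $(\sigma'_1,\sigma_2,\sigma'_3)$ in which $1$ wins; the same condition (applied to $(\sigma'_1,\sigma_2)$) forbids $2$ winning here, so $W(\sigma'_1,\sigma_2,\sigma'_3)$ is $\{1\}$ or $\{1,3\}$. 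If it is $\{1\}$, then the three profiles $(\sigma_1,\sigma_2,\sigma_3)$, $(\sigma_1,\sigma_2,\sigma'_3)$, $(\sigma'_1,\sigma_2,\sigma'_3)$ have winner sets $\{1,2\},\{3\},\{1\}$ with $\sigma'_3$ and $\sigma'_1$ the respective beneficial deviations, directly contradicting Lemma~\ref{lmm:strong-core-three-player-non-empty-technical-lemma-1} with $(a,b,c)=(1,2,3)$. So assume $W(\sigma'_1,\sigma_2,\sigma'_3)=\{1,3\}$; its unique loser $2$ strongly beneficially deviates to some $(\sigma'_1,\sigma'_2,\sigma'_3)$ with $2$ winning, and playing the strong-deviation clauses of $\sigma'_3$, of $\sigma'_1$ (relative to its origin $(\sigma_1,\sigma_2,\sigma'_3)$), and of $\sigma'_2$ against the ``no profile wins all three'' lemma pins down $W(\sigma'_1,\sigma'_2,\sigma'_3)=\{2\}$ exactly.

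Now $(\sigma'_1,\sigma'_2,\sigma'_3)$ has losers $\{1,3\}$, so one of $C'=\{1\}$, $C'=\{3\}$, $C'=\{1,3\}$ strongly beneficially deviates, and I treat these in turn. For $C'=\{1\}$, the same kind of argument shows the new profile $(\sigma''_1,\sigma'_2,\sigma'_3)$ has winner set exactly $\{1\}$, and then $(\sigma'_1,\sigma_2,\sigma'_3)$, $(\sigma'_1,\sigma'_2,\sigma'_3)$, $(\sigma''_1,\sigma'_2,\sigma'_3)$ — now with \emph{player $2$'s} strategy the frozen coordinate — has winner sets $\{1,3\},\{2\},\{1\}$ with deviations $\sigma'_2$ and $\sigma''_1$, contradicting Lemma~\ref{lmm:strong-core-three-player-non-empty-technical-lemma-1} with $(a,b,c)=(1,3,2)$. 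For $C'=\{3\}$, the deviation produces a profile with winners $\{1,3\}$ whose unique loser $2$ then deviates; the strong-deviation clause of the fresh strategy for $3$ makes $2$'s win force $3$ to win, which via $\sigma'_1$'s clause forces $1$ to win, so all three goals hold, a contradiction. For $C'=\{1,3\}$, the deviation already gives winners $\{1,3\}$, and its unique loser $2$ must deviate; the strong-deviation clause attached to the \emph{coalition} $\{1,3\}$ then immediately forces all three goals, again a contradiction. Since these cases are exhaustive, the claimed configuration is impossible.

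I expect the main obstacle to be bookkeeping rather than any single trick: each newly produced profile's exact winner set has to be re-derived by balancing several strong-deviation clauses against the preliminary ``no profile wins all three'' lemma, and one must keep precise track of which coordinate of each profile is held fixed so that Lemma~\ref{lmm:strong-core-three-player-non-empty-technical-lemma-1} applies — the subtle point being that in the third step it is player $2$, not the lemma's nominal ``$c$'', whose strategy is frozen, so the permutation fed to the lemma must be chosen accordingly.
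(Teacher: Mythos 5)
Your proof is correct and follows essentially the same strategy as the paper's: repeatedly invoke the emptiness of the strong core to generate fresh deviations, pin down each new profile's winner set by playing the punishment clause (condition~3) of the earlier deviations against the ``no profile satisfies all three goals'' observation, and close off the singleton-winner chains with Lemma~\ref{lmm:strong-core-three-player-non-empty-technical-lemma-1} under the appropriate permutation. The only differences are organizational: you split the final step by deviating coalition rather than by resulting winner set, and in the \(C'=\{3\}\) and \(C'=\{1,3\}\) branches you run one extra deviation round where the paper stops immediately by observing that a winner set of \(\{1,3\}\) is already a punishment for \(\sigma'_2\).
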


\begin{proof}
        For contradiction, assume otherwise. Without loss of generality, we can again assume that \(a=1\), \(b=2\), and \(c=3\). 

The second strategy, \((\sigma_a, \sigma_b, \sigma^\prime_c)\) must admit a strong beneficial deviation for \(1\), \(2\), or both. However, see that a deviation to a strategy satisfying only \(\gamma_1\) or only \(\gamma_2\) would contradict Lemma~\ref{lmm:strong-core-three-player-non-empty-technical-lemma-1}. Hence, the new strategy must have exactly two winners. Moreover, if the winners in the new strategy were exactly \(1\) and \(2\), the deviation would provide a punishment for \(\sigma^\prime_3\). Therefore, without loss of generality, we can assume that \((\sigma_a, \sigma_b, \sigma^\prime_c)\) has a strong beneficial deviation \(\sigma^\prime_1\) for \(1\) to a strategy satisfying exactly \(\gamma_1\) and \(\gamma_3\).

Now, the only loser in \((\sigma^\prime_1, \sigma_2, \sigma^\prime_3)\), \(2\), must have a strong beneficial deviation \(\sigma^\prime_2\). If however, exactly \(\gamma_1\) and \(\gamma_2\) were satisfied in \((\sigma^\prime_1, \sigma^\prime_2, \sigma^\prime_3)\), then \((\sigma^\prime_1, \sigma^\prime_2)\) would be a punishment for \(\sigma^\prime_3\). If exactly \(\gamma_2\) and \(\gamma_3\) were satisfied, \((\sigma^\prime_2, \sigma^\prime_3)\) would provide a punishment for \(\sigma^\prime_1\). Hence, \(2\) must be the only winner.

Once again a strong beneficial deviation from \((\sigma^\prime_1, \sigma^\prime_2, \sigma^\prime_3)\) to a strategy satisfying only \(\gamma_1\) or only \(\gamma_3\) would contradict Lemma~\ref{lmm:strong-core-three-player-non-empty-technical-lemma-1}. Therefore, there must be a strong beneficial deviation by \(1\), \(3\), or both, to a strategy with exactly two winners. Now:
\begin{itemize}
\item{If the winners were \(1\) and \(3\), the deviation would provide a punishment for \(\sigma^\prime_2\).}
\item{If the winners were \(1\) and \(2\) then the deviation must be \(\sigma''_1\) by \(1\) only (as \(3\) loses). But then, the joint strategy \((\sigma''_1, \sigma^\prime_2)\) would provide a punishment for the initial \(\sigma^\prime_3\).}
\item{If the winners were \(2\) and \(3\) then the deviation must be \(\sigma''_3\) by \(3\) only. The joint strategy \((\sigma^\prime_2, \sigma''_3)\) would be a punishment for \(\sigma^\prime_1\).}
\end{itemize}
This exhausts all the cases, each of which lead to a contradiction, and finishes the proof of Lemma~\ref{lmm:strong-core-three-player-non-empty-technical-lemma-2}.
\end{proof}

With these two technical lemmas in place, we are ready to prove the second part of our claim.

\begin{lemma}\label{lmm:strong-core-three-player-non-empty-second-part}
  Let \(G\) be a three-player game with an empty strong core. Then there does not exist a strategy profile \(\vec{\sigma}\) on \(G\) that models exactly two of \(\gamma_1\), \(\gamma_2\), \(\gamma_3\).
\end{lemma}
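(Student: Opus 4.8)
The plan is a proof by contradiction that dovetails with Lemma~\ref{lmm:strong-core-three-player-non-empty-first-part}: assuming $G$ is a three-player game with empty strong core, I would show that \emph{no} profile of $G$ satisfies exactly two goals, which together with Lemma~\ref{lmm:strong-core-three-player-non-empty-first-part} gives two contradictory statements and hence Theorem~\ref{thm:strong-core-nonempty}. So suppose, for a further contradiction, that some profile $\vec\sigma$ satisfies exactly two goals; up to relabelling players, assume $\Winners(\rho(\vec\sigma)) = \{1,2\}$, so that $3$ is the unique loser. Since the strong core is empty, $\vec\sigma\notin\ccoreset{G}$, and since a deviating coalition must be a non-empty subset of the loser set, the only candidate is $\{3\}$; let $\sigma_3'$ be a strong beneficial deviation for it. Then $3$ wins in $(\sigma_1,\sigma_2,\sigma_3')$, and by the auxiliary lemma that no profile of $G$ satisfies all three goals, at least one of players $1,2$ now loses, so the winner set of $(\sigma_1,\sigma_2,\sigma_3')$ is $\{3\}$, $\{1,3\}$, or $\{2,3\}$; up to swapping players $1$ and $2$, I may assume it is $\{3\}$ or $\{1,3\}$.

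The case $\{3\}$ is disposed of at once: the profiles $(\sigma_1,\sigma_2,\sigma_3)$ and $(\sigma_1,\sigma_2,\sigma_3')$ together with the strong beneficial deviation $\sigma_3'$ are precisely the configuration forbidden by Lemma~\ref{lmm:strong-core-three-player-non-empty-technical-lemma-2} (with $c=3$ and $\{a,b\}=\{1,2\}$), a contradiction. In the case $\{1,3\}$, player $2$ is the unique loser of $(\sigma_1,\sigma_2,\sigma_3')$, so $\{2\}$ has a strong beneficial deviation $\sigma_2'$; write $W$ for the winner set of $(\sigma_1,\sigma_2',\sigma_3')$. The key step --- and the device that keeps the argument finite --- is to feed this new profile back into condition~(3) of the \emph{earlier} deviation $\sigma_3'$: since $\sigma_3'$ is a strong beneficial deviation from $\vec\sigma$ whose winner set is $\{1,2\}$, applying its punishment condition to the complementary strategies $(\sigma_1,\sigma_2')$ shows $\{1,2\}\not\subseteq W$ (else all three goals would be satisfied), which, with $2\in W$, forces $W\in\{\{2\},\{2,3\}\}$. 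If $W=\{2\}$, then $(\sigma_1,\sigma_2,\sigma_3')$ and $(\sigma_1,\sigma_2',\sigma_3')$ with $\sigma_2'$ again match Lemma~\ref{lmm:strong-core-three-player-non-empty-technical-lemma-2} (now $c=2$, $\{a,b\}=\{1,3\}$), a contradiction.

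That leaves $W=\{2,3\}$, where player $1$ is the unique loser, so $\{1\}$ has a strong beneficial deviation $\sigma_1'$; let $W'$ be the winner set of $(\sigma_1',\sigma_2',\sigma_3')$. Feeding this into condition~(3) of $\sigma_2'$ (a strong beneficial deviation from $(\sigma_1,\sigma_2,\sigma_3')$, winner set $\{1,3\}$) via the complementary strategies $(\sigma_1',\sigma_3')$ gives $\{1,3\}\not\subseteq W'$, so with $1\in W'$ we get $W'\in\{\{1\},\{1,2\}\}$. If $W'=\{1\}$, Lemma~\ref{lmm:strong-core-three-player-non-empty-technical-lemma-2} applies a third time ($c=1$, $\{a,b\}=\{2,3\}$), a contradiction; and if $W'=\{1,2\}$, then condition~(3) of $\sigma_3'$ applied to the complementary strategies $(\sigma_1',\sigma_2')$ forces $3\in W'=\{1,2\}$, which is absurd. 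Every branch thus ends in a contradiction, so no such $\vec\sigma$ exists.

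The main obstacle I foresee is getting the case analysis to \emph{terminate}: the naive recursion ``a one-loser profile forces that loser to deviate, yielding a new profile, which again has one loser, \dots'' does not obviously stop, since there are infinitely many strategies. The point that breaks the recursion is the systematic reuse of the punishment condition~(3) of the strong beneficial deviations \emph{already committed to} earlier in the chain: each such use eliminates enough candidate winner sets that, after at most three further deviations, every branch is funnelled into either an instance of Lemma~\ref{lmm:strong-core-three-player-non-empty-technical-lemma-2} or an immediate ``all three goals'' contradiction. The finicky part is the bookkeeping: verifying at each step that the two profiles in play really do instantiate the $(\sigma_a,\sigma_b,\sigma_c)$/$(\sigma_a,\sigma_b,\sigma_c')$ pattern of Lemma~\ref{lmm:strong-core-three-player-non-empty-technical-lemma-2} with the stated strong beneficial deviation, and keeping track of which of the ``no all-three'' and punishment facts rules out which winner set.
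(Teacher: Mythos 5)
Your proposal is correct and follows essentially the same route as the paper's proof: the same chain of forced deviations $\sigma_3'\to\sigma_2'\to\sigma_1'$, with sole-winner cases dispatched by Lemma~\ref{lmm:strong-core-three-player-non-empty-technical-lemma-2} and the remaining winner sets eliminated by invoking the punishment condition~(3) of the earlier deviations in the chain. Your version is merely more explicit about which condition of which committed deviation rules out each candidate winner set.
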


\begin{proof}
For a contradiction, assume otherwise. Without a loss of generality, let \((\sigma_1, \sigma_2, \sigma_3)\) be a strategy satisfying exactly \(\gamma_1\) and \(\gamma_2\). It must admit a strong beneficial deviation \(\sigma^\prime_3\) for the lone loser, \(3\). Moreover, by Lemma~\ref{lmm:strong-core-three-player-non-empty-technical-lemma-2}, he cannot win alone in the new strategy. Therefore, without loss of generality, we can assume that \(1\) and \(3\) are winners in \((\sigma_1, \sigma_2, \sigma^\prime_3)\).

Again, by Lemma~\ref{lmm:strong-core-three-player-non-empty-technical-lemma-2}, the new strategy must have a strong beneficial deviation \(\sigma^\prime_2\) by \(2\) to a strategy where exactly two players win. However, if \(1\) and \(2\) were winners, then \((\sigma_1, \sigma^\prime_2)\) would constitute a punishment for \(\sigma^\prime_3\). Therefore, exactly \(\gamma_2\) and \(\gamma_3\) must be satisfied in \((\sigma_1, \sigma^\prime_2, \sigma^\prime_3)\).

But now, again by Lemma~\ref{lmm:strong-core-three-player-non-empty-technical-lemma-2}, there must be a strong beneficial deviation \(\sigma^\prime_1\) for \(1\) to another strategy with exactly two winners. However, if those were \(1\) and \(2\), \((\sigma^\prime_1, \sigma^\prime_2)\) would be a punishment for \(\sigma^\prime_3\). Else, if the winners were \(1\) and \(3\) then \((\sigma^\prime_1, \sigma^\prime_3)\) would form a punishment for \(\sigma^\prime_2\). This concludes our proof.
\end{proof}

Finally, the required result simply follows by combining the two lemmas.

\begin{theorem}
  If \(G\) is a three-player game, then it has a non-empty strong core.
\end{theorem}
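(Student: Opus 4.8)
The plan is to obtain the theorem as an immediate corollary of the two structural lemmas established above, by a proof by contradiction. Suppose, towards a contradiction, that $G$ is a three-player game whose strong core $\ccoreset{G}$ is empty. The whole argument then consists of confronting two facts about such a $G$ that have already been proved, and observing that they cannot both hold.

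First I would invoke Lemma~\ref{lmm:strong-core-three-player-non-empty-first-part}: since $\ccoreset{G}=\emptyset$, there must exist a strategy profile $\vec{\sigma}$ on $G$ modelling \emph{exactly two} of the goals $\gamma_1,\gamma_2,\gamma_3$. (This itself rests on the preliminary observation that no profile can satisfy all three goals — such a profile would lie in the strong core — together with the fact that if every profile satisfied zero goals the strong core would again be non-empty, so at least one profile satisfying exactly one goal exists; the first lemma bootstraps from such a profile, chasing strong beneficial deviations, to one satisfying exactly two.) Second, I would invoke Lemma~\ref{lmm:strong-core-three-player-non-empty-second-part}: since $\ccoreset{G}=\emptyset$, there does \emph{not} exist any strategy profile on $G$ modelling exactly two of $\gamma_1,\gamma_2,\gamma_3$.

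These two statements are in direct contradiction, so the hypothesis $\ccoreset{G}=\emptyset$ is untenable, and hence every three-player game has a non-empty strong core. At the level of this final theorem there is essentially nothing left to do: it is the logical conjunction of ``there is such a profile'' and ``there is no such profile''. The genuine difficulty — already dispatched in the lemmas above — lies in Lemma~\ref{lmm:strong-core-three-player-non-empty-second-part} and the technical Lemmas~\ref{lmm:strong-core-three-player-non-empty-technical-lemma-1} and~\ref{lmm:strong-core-three-player-non-empty-technical-lemma-2}, where one must track, through a long but finite case analysis over the seven possible non-empty-and-not-full winner sets, how a ``punishing'' joint strategy for a non-deviating sub-coalition propagates across a chain of strong beneficial deviations until it invalidates one of the deviations assumed earlier in the chain. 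That case analysis is the main obstacle; the theorem stated here is just the two-line synthesis of its conclusions.
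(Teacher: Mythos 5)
Your proposal is correct and follows exactly the paper's own argument: assume $\ccoreset{G}=\emptyset$, derive from Lemma~\ref{lmm:strong-core-three-player-non-empty-first-part} that a profile satisfying exactly two goals exists, derive from Lemma~\ref{lmm:strong-core-three-player-non-empty-second-part} that no such profile exists, and conclude by contradiction. Nothing is missing; the theorem is indeed just the two-line synthesis of the two lemmas, as the paper itself presents it.
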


\begin{proof}
  Result follows by combining Lemma~\ref{lmm:strong-core-three-player-non-empty-first-part} and Lemma~\ref{lmm:strong-core-three-player-non-empty-second-part}.
\end{proof}

\end{document}